\documentclass[11pt]{article}
\usepackage[a4paper,margin=1in]{geometry}
\usepackage{amsmath,amssymb,amsthm,mathtools}
\usepackage{graphicx}
\graphicspath{{./}{output/}}
\usepackage{enumitem}
\usepackage{hyperref}
\usepackage[nameinlink,noabbrev]{cleveref}
\usepackage{booktabs}
\usepackage{microtype}

\theoremstyle{plain}
\newtheorem{theorem}{Theorem}[section]
\newtheorem{proposition}[theorem]{Proposition}
\newtheorem{corollary}[theorem]{Corollary}
\newtheorem{lemma}[theorem]{Lemma}
\theoremstyle{definition}
\newtheorem{definition}[theorem]{Definition}
\newtheorem{assumption}[theorem]{Assumption}
\newtheorem{remark}[theorem]{Remark}

\title{Entropy-Rate Selection for Partially Observed Processes}
\author{Oleg Kiriukhin\\City University of Hong Kong\\\texttt{okiriukh@cityu.edu.hk}}
\date{April 2026}

\begin{document}
\maketitle
\begin{abstract}

I formulate an entropy-rate maximization problem at the observable level for stochastic processes observed through an information-reducing observation map. For a visible stationary law, the map determines an observational fiber of hidden stationary laws generating that law. In the finite-state finite-memory setting, retained visible constraints determine a feasible class of stationary $(r+1)$-block laws, and the entropy maximizer is defined as the entropy-rate maximizer on this class.

The paper formulates entropy-rate maximization on feasible classes induced by partial observability and develops a structural theory for the resulting maximizer. I prove existence and uniqueness of the maximizer, with uniqueness under a fixed-context-marginal hypothesis and, more generally, via a strict-concavity characterization by row proportionality. Two global characterization regimes are central: a fixed one-point marginal yields the i.i.d.\ maximizer, and a fixed $r$-block law yields the $(r-1)$-step Markov extension. The gap functional equals a conditional mutual information and vanishes exactly at the maximizing completion. I also derive optimality conditions, local geometry of the maximizer, a latent random-mapping realization that leaves the visible law unchanged, and a local empirical consistency theorem, and illustrate the framework by an aliased hidden-state example.

\medskip
\noindent\textbf{Keywords.} entropy rate, partial observability, observational fibers, entropy maximization, stationary block laws, finite-state Markov processes, Markov extension, conditional mutual information, hidden stochastic processes.

\end{abstract}

\section{Introduction}

Stochastic models are often underidentified by the information structure through which they are observed. Distinct hidden mechanisms may generate the same visible law, so one faces an observational equivalence class rather than a uniquely recoverable latent model. I take the observation experiment, together with the retained visible observables, as primitive and ask whether they determine a preferred visible completion within a finite-dimensional block-law class. This perspective is motivated by Blackwell's comparison of experiments, where the primitive object is the information structure rather than an externally chosen parametric family~\cite{blackwell1951,mckinsey1982}.

Entropy maximization under constraints and entropy calculations for hidden processes are classical themes. The novelty here lies in formulating an entropy-rate maximization problem at the observable level on feasible classes induced by partial observability, and in developing the corresponding structural theory of the maximizer. The emphasis is therefore on the visible law determined by the experiment rather than on selecting a hidden model from an observationally equivalent family.

For a visible law, I define the observational fiber as the class of hidden stationary laws generating that law under a fixed observation experiment. Restricting to finite alphabets and finite memory, I work with stationary $(r+1)$-block laws on the visible alphabet. In block-law coordinates, stationarity is a finite-dimensional linear constraint and the feasible set is a convex subset of the simplex on $A^{r+1}$. An entropy maximizer is an entropy-rate maximizer on the resulting feasible class, and it selects the visible completion with maximal residual uncertainty, hence minimal serial organization beyond what the retained observables force.

Two global characterization regimes are central. If the retained observable fixes the one-point marginal, the maximizer is the i.i.d.\ process with that marginal. If it fixes the full stationary $r$-block law, the maximizer is the $(r-1)$-step Markov extension. In the latter regime, the gap functional is a conditional mutual information and vanishes exactly at the maximizing completion.

The main results are as follows. I define observational fibers and the corresponding feasible classes determined by the retained observables at the visible block-law level, and prove existence and uniqueness of the entropy maximizer, with uniqueness under a fixed-context-marginal hypothesis and, more generally, via a strict-concavity characterization by row proportionality. I then prove two global characterization theorems: fixed one-point marginals yield the i.i.d.\ maximizer, and fixed $r$-block laws yield the $(r-1)$-step Markov maximizer. I also identify the gap functional with conditional mutual information, which vanishes exactly at the maximizing completion.

At the local level, I derive optimality conditions and a kernel formula, identify the intrinsic tangent space and restricted Hessian on a fixed-support face, prove a strict-concavity criterion on affine slices of a common positive-support face, and obtain local geometry together with a quadratic expansion of the gap functional. I also prove that any selected visible law admits a latent random-mapping realization that leaves the visible law unchanged, establish a local empirical consistency theorem under a full-rank moment-map hypothesis, and include an aliased hidden-state example showing that a maximizing visible completion need not resolve hidden underidentification.

\section{Observational Fibers and Entropy Maximization}\label{sec:observational-fibers}

Let $H$ and $A$ be finite alphabets, and let
\begin{equation}\label{eq:observation-map}
\Pi:\mathcal P_{\mathrm{stat}}(H^{\mathbb Z})\to \mathcal P_{\mathrm{stat}}(A^{\mathbb Z}),
\qquad
Q\mapsto \Pi_\#Q
\end{equation}
be a fixed measurable observation map on stationary path laws. For a visible stationary law
\begin{equation*}
\nu\in\mathcal P_{\mathrm{stat}}(A^{\mathbb Z}),
\end{equation*}
define the observational fiber
\begin{equation}\label{eq:observational-fiber}
\mathcal E_\Pi(\nu):=
\{Q\in\mathcal P_{\mathrm{stat}}(H^{\mathbb Z}) : \Pi_\#Q=\nu\}.
\end{equation}

To obtain a finite-dimensional selector, fix a memory length $r\ge 1$ and work with stationary $(r+1)$-block laws on the visible alphabet. Let
\begin{equation*}u(c,a),\qquad c\in A^r,\ a\in A,
\end{equation*}
denote a probability distribution on $A^{r+1}$.

\begin{definition}[Context marginal]
For a block law $u$ on $A^{r+1}$, define its context marginal by
\begin{equation*}
\eta_u(c):=\sum_{a\in A}u(c,a),\qquad c\in A^r.
\end{equation*}
\end{definition}

\begin{definition}[Stationary consistency]
A block law $u$ on $A^{r+1}$ is called stationary-consistent if its left and right $r$-block marginals agree, that is,
\begin{equation*}
\sum_{\alpha\in A}u(\alpha,c_1,\dots,c_r)
=
\sum_{\beta\in A}u(c_1,\dots,c_r,\beta)
\end{equation*}
for every $(c_1,\dots,c_r)\in A^r$.
\end{definition}

Fix retained observable features
\begin{equation*}
G_1,\dots,G_m:A^{r+1}\to\mathbb R,
\end{equation*}
and, for each visible stationary law $\nu$, define the associated target vector
\begin{equation*}
b(\nu):=\bigl(b_1(\nu),\dots,b_m(\nu)\bigr)\in\mathbb R^m.
\end{equation*}
In the finite-state setting, the induced visible feasible class is
\begin{equation}\label{eq:feasible-class}
\mathcal U_\Pi(\nu):=
\Bigl\{u\in\Delta(A^{r+1}) : u \text{ is stationary-consistent and }
\sum_{c,a}u(c,a)G_j(c,a)=b_j(\nu),\ j=1,\dots,m
\Bigr\},
\end{equation}
Whenever $\eta_u(c)>0$ on the active support, define the induced conditional kernel by
\begin{equation}\label{eq:conditional-kernel}
p_u(a\mid c):=\frac{u(c,a)}{\eta_u(c)}.
\end{equation}
The entropy-rate functional is then
\begin{equation}\label{eq:entropy-rate}
J(u):=-\sum_{c\in A^r}\sum_{a\in A}u(c,a)\log\frac{u(c,a)}{\eta_u(c)}.
\end{equation}
Equivalently,
\begin{equation}\label{eq:entropy-rate-alt}
J(u)=\sum_{c\in A^r}\eta_u(c)\,H\bigl(p_u(\cdot\mid c)\bigr).
\end{equation}
For stationary finite-state Markov processes, this is the entropy rate
\begin{equation*}
H(X_r\mid X_0^{r-1}).
\end{equation*}

\begin{definition}[Entropy maximizer]
An entropy maximizer is any maximizer
\begin{equation*}u^\star\in\arg\max_{u\in\mathcal U_\Pi(\nu)}J(u).
\end{equation*}
Whenever $\eta_{u^\star}(c)>0$, the induced active-support kernel is
\begin{equation*}
p^\star(a\mid c):=\frac{u^\star(c,a)}{\eta_{u^\star}(c)}.
\end{equation*}
\end{definition}

\begin{assumption}[Global standing assumptions]
\leavevmode
\begin{enumerate}[label=(A\arabic*),leftmargin=2.5em]
\item The retained observable constraints are finitely many linear equalities in the block-law coordinates $u(c,a)$.
\item The feasible set $\mathcal U_\Pi(\nu)$ is nonempty.
\item For the global uniqueness theorem, the feasible set fixes the context marginal: there exists $\bar\eta$ such that $\eta_u=\bar\eta$ for all $u\in\mathcal U_\Pi(\nu)$.
\end{enumerate}
\end{assumption}

\begin{assumption}[Local regularity assumptions]
\leavevmode
\begin{enumerate}[label=(L\arabic*),leftmargin=2.5em]
\item The selected point $u^\star$ lies in the relative interior of a fixed-support face.
\item A smooth local chart $(b,\xi)\mapsto u(b,\xi)$ exists near $u^\star$, where $b$ are retained observable coordinates and $\xi$ are fiber coordinates.
\end{enumerate}
\end{assumption}

On the positive-support region, block-law and kernel parametrizations are equivalent:
\begin{equation}\label{eq:block-kernel}u(c,a)=\eta_u(c)p_u(a\mid c).
\end{equation}

\section{Existence, Uniqueness, and Optimality Conditions}

\begin{theorem}[Existence]\label{thm:existence}
Under Assumptions \textnormal{(A1)--(A2)}, the optimization problem
\begin{equation*}
\max_{u\in\mathcal U_\Pi(\nu)}J(u)
\end{equation*}
admits at least one maximizer.
\end{theorem}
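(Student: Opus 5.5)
The argument is the Weierstrass extreme value theorem: show that $\mathcal U_\Pi(\nu)$ is a nonempty compact subset of $\mathbb R^{A^{r+1}}$ and that $J$ is continuous on it (with the usual convention $0\log 0=0$).

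\textbf{Compactness.} The simplex $\Delta(A^{r+1})$ is compact in the finite-dimensional space $\mathbb R^{A^{r+1}}$. Stationary consistency consists of finitely many linear equalities in the coordinates $u(c,a)$. By (A1), the retained observable constraints $\sum_{c,a}u(c,a)G_j(c,a)=b_j(\nu)$ are also finitely many linear equalities. Each such equality defines a closed affine subspace. Hence $\mathcal U_\Pi(\nu)$ is the intersection of a compact set with finitely many closed sets, so it is compact. It is moreover a convex polytope, and by (A2) it is nonempty.

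\textbf{Continuity of $J$.} This is the only real obstacle, because of the boundary points where $u(c,a)=0$ or $\eta_u(c)=0$, where the kernel \eqref{eq:conditional-kernel} is undefined. To avoid it, I rewrite \eqref{eq:entropy-rate} on the whole simplex as
\begin{equation*}
J(u)=-\sum_{c,a}\phi\bigl(u(c,a)\bigr)+\sum_{c}\phi\bigl(\eta_u(c)\bigr),\qquad \phi(t):=t\log t,\ \phi(0):=0.
\end{equation*}
This identity holds wherever $\eta_u(c)>0$, because $\sum_a u(c,a)\log\eta_u(c)=\eta_u(c)\log\eta_u(c)$. It also matches the natural convention that contexts with $\eta_u(c)=0$ contribute zero: then $u(c,a)=0$ for all $a$, so both sides of the context-$c$ contribution vanish.

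The function $\phi$ is continuous on $[0,1]$, since $t\log t\to0$ as $t\downarrow0$. The maps $u\mapsto u(c,a)$ and $u\mapsto\eta_u(c)$ are linear, hence continuous. So $J$ is a finite sum of continuous functions and is continuous on $\Delta(A^{r+1})$, in particular on $\mathcal U_\Pi(\nu)$. A cross-check: $J(u)=H(u)-H(\eta_u)$, which is a difference of two Shannon entropies and each is continuous on its simplex.

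\textbf{Conclusion.} A continuous real function on a nonempty compact set attains its maximum, so some $u^\star\in\mathcal U_\Pi(\nu)$ satisfies $J(u^\star)=\max_{u\in\mathcal U_\Pi(\nu)}J(u)$. The maximum value is finite, lying between $0$ and $\log|A|$ by \eqref{eq:entropy-rate-alt}, because each conditional entropy lies in $[0,\log|A|]$ and $\sum_c\eta_u(c)=1$. Assumption (A3) and the local assumptions (L1)--(L2) are not needed for existence.
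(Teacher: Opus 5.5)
Your proof is correct and follows the paper's argument essentially verbatim. Like the paper, you show that $\mathcal U_\Pi(\nu)$ is a closed subset of the simplex cut out by finitely many linear equalities, that $J$ is continuous via $J(u)=-\sum_{c,a}u(c,a)\log u(c,a)+\sum_c\eta_u(c)\log\eta_u(c)$ with $0\log 0=0$, and then apply the extreme value theorem; your explicit treatment of zero-mass contexts is a small, welcome addition that the paper leaves implicit.
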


\begin{proof}
The variable $u$ lies in the probability simplex on the finite set $A^{r+1}$. The
stationary-consistency equations and the experiment-induced restrictions are linear in $u(c,a)$, so
the feasible set $\mathcal U_\Pi(\nu)$ is a closed subset of a finite-dimensional simplex. Hence
$\mathcal U_\Pi(\nu)$ is compact and convex.

The map $x\mapsto -x\log x$ extends continuously to $[0,1]$ by the convention $0\log 0=0$.
Therefore the function
\begin{equation*}
u\mapsto -\sum_{c,a}u(c,a)\log u(c,a)
\end{equation*}
is continuous on the simplex. Since
\begin{equation*}
J(u)= -\sum_{c,a}u(c,a)\log u(c,a)+\sum_c \eta_u(c)\log\eta_u(c),
\end{equation*}
and the marginal map $u\mapsto \eta_u$ is linear, $J$ is continuous on
$\mathcal U_\Pi(\nu)$. By compactness, $J$ attains its maximum on $\mathcal U_\Pi(\nu)$.
\end{proof}

\begin{theorem}[Uniqueness under fixed context marginals]\label{thm:uniqueness}
Assume the hypotheses of \cref{thm:existence}, and assume in addition that there exists a probability
vector $\bar\eta$ on $A^r$ such that $\eta_u=\bar\eta$ for every
$u\in\mathcal U_\Pi(\nu)$. Then the maximizer of $J$ over $\mathcal U_\Pi(\nu)$ is unique.
\end{theorem}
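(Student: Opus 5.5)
The approach is to show that, once the context marginal is frozen, $J$ differs from the Shannon entropy of the block law by a constant, and so is strictly concave on the feasible set. By the decomposition used in the proof of \cref{thm:existence},
\begin{equation*}
J(u)= -\sum_{c,a}u(c,a)\log u(c,a)+\sum_c \eta_u(c)\log\eta_u(c).
\end{equation*}
Under the hypothesis $\eta_u=\bar\eta$ for every $u\in\mathcal U_\Pi(\nu)$, the second sum equals the constant $\sum_c\bar\eta(c)\log\bar\eta(c)=-H(\bar\eta)$ on the whole feasible set. Hence, on $\mathcal U_\Pi(\nu)$,
\begin{equation*}
J(u)=H(u)-H(\bar\eta),\qquad H(u):=-\sum_{c,a}u(c,a)\log u(c,a).
\end{equation*}
Maximizing $J$ over $\mathcal U_\Pi(\nu)$ is therefore the same as maximizing the block entropy $H(u)$ over this set.

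Next I will verify that $H$ is strictly concave on all of $\Delta(A^{r+1})$, including boundary points, with the convention $0\log 0=0$. The scalar function $\phi(x)=-x\log x$ is strictly concave on $[0,1]$. Its second derivative $-1/x$ is negative on $(0,1]$, and $\phi$ is continuous at $0$, so for $x\neq y$ in $[0,1]$ and $\lambda\in(0,1)$ we have the strict inequality $\phi(\lambda x+(1-\lambda)y)>\lambda\phi(x)+(1-\lambda)\phi(y)$. Now take $u\neq v$. They differ in at least one coordinate $(c,a)$. That coordinate gives a strict inequality, and every other coordinate gives at least a weak one. Summing over coordinates yields $H(\lambda u+(1-\lambda)v)>\lambda H(u)+(1-\lambda)H(v)$.

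Uniqueness then follows by the standard argument. \Cref{thm:existence} gives at least one maximizer, and its proof shows that $\mathcal U_\Pi(\nu)$ is convex. Suppose $u_1\neq u_2$ are both maximizers with common value $J^\star$. The midpoint $\tfrac12(u_1+u_2)$ is feasible, since it is stationary-consistent, satisfies the linear constraints, and has context marginal $\bar\eta$. By strict concavity of $H$ and the constant offset, $J$ at the midpoint is strictly greater than $J^\star$, a contradiction.

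The only delicate point is the boundary. A maximizer may have zero entries, where $\log$ is singular, so the Hessian argument alone does not settle strict concavity there. I handle this with the pointwise strict concavity of $\phi$ on the closed interval $[0,1]$, as above, rather than with second derivatives. The fixed-marginal hypothesis is essential. Without it, $J$ is only concave: it is constant along directions that rescale $\eta$ while keeping the kernels $p_u(\cdot\mid c)$ fixed, which is the row-proportionality degeneracy mentioned in the abstract.
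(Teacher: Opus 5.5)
Your proof is correct and follows the paper's argument. You freeze $\eta_u=\bar\eta$ so that $J$ equals the block Shannon entropy $H(u)$ up to the constant $-H(\bar\eta)$, then combine strict concavity of $H$ on the convex feasible set with existence from \cref{thm:existence}; your coordinatewise check that $x\mapsto -x\log x$ is strictly concave on the closed interval $[0,1]$ justifies the boundary case more carefully than the paper's appeal to strict concavity ``on the affine hull.'' One small correction to your closing remark: along kernel-preserving row-rescaling directions $J$ is affine in general, with vanishing second variation, not constant.
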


\begin{proof}
Assumption (A3) states that the feasible set fixes the context marginal: there exists $\bar\eta$
such that $\eta_u=\bar\eta$ for every feasible $u$. Hence on $\mathcal U_\Pi(\nu)$ one has
\begin{equation*}
J(u)= -\sum_{c,a}u(c,a)\log u(c,a)+\sum_c \bar\eta(c)\log\bar\eta(c).
\end{equation*}
The second term is constant over the feasible set, so maximizing $J$ is equivalent to maximizing
the Shannon entropy of the block law itself:
\begin{equation*}
H(u):=-\sum_{c,a}u(c,a)\log u(c,a).
\end{equation*}
Shannon entropy is strictly concave on a simplex~\cite{cover-thomas2006}. Therefore $H(u)$ is strictly concave on the
affine hull of the feasible set, and hence on the convex set $\mathcal U_\Pi(\nu)$. A strictly
concave function has at most one maximizer on a convex set. Existence was proved in
\cref{thm:existence}, so the maximizer is unique.
\end{proof}

\begin{proposition}[Concavity and equality criterion]\label{prop:concavity-equality}
Let $u$ and $v$ be stationary $(r+1)$-block laws on $A^{r+1}$, and let $t\in(0,1)$. Then
\begin{equation*}
J(tu+(1-t)v)\ge tJ(u)+(1-t)J(v).
\end{equation*}
Moreover, equality holds if and only if for every context $c\in A^r$ the rows $u(c,\cdot)$ and $v(c,\cdot)$ are proportional. Equivalently, equality holds if and only if
\begin{equation*}
\eta_v(c)u(c,a)=\eta_u(c)v(c,a)
\qquad\text{for every }c\in A^r,\ a\in A.
\end{equation*}
On any common positive-support face, this is further equivalent to equality of the conditional kernels in every active context.
\end{proposition}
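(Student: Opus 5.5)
The plan is to decompose $J$ context by context and reduce everything to the log-sum inequality. From \eqref{eq:entropy-rate}, write
\begin{equation*}
J(u)=\sum_{c\in A^r}\phi\bigl(u(c,\cdot)\bigr),\qquad
\phi(x):=-\sum_{a\in A}x_a\log\frac{x_a}{\sum_{a'}x_{a'}},\quad x\in\mathbb R_{\ge0}^{A},
\end{equation*}
with the conventions $0\log 0=0$ and $\phi(0)=0$. Since $u\mapsto u(c,\cdot)$ is linear, it suffices to show three things: $\phi$ is concave on $\mathbb R_{\ge 0}^A$; for $t\in(0,1)$, equality $\phi(tx+(1-t)y)=t\phi(x)+(1-t)\phi(y)$ holds if and only if $x$ and $y$ are proportional; and proportionality is equivalent to the cross-product condition. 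Summing over $c$ then gives both the inequality and the equality criterion, because a sum of concavity gaps, each nonnegative, vanishes exactly when every gap vanishes. Stationarity plays no role beyond fixing the domain.

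For the per-row statement, write $\phi(x)=-\sum_a x_a\log x_a+s(x)\log s(x)$, where $s(x)=\sum_a x_a$. Rather than differentiating, I would observe that $\phi$ is positively $1$-homogeneous and that $\phi(x)=-\sum_a x_a\log\frac{x_a}{s(x)}$. Set $z=tx+(1-t)y$. For each $a$, apply the log-sum inequality to the two pairs $(tx_a,ty_a')$, using
\begin{equation*}
tx_a\log\frac{tx_a}{t s(x)}+(1-t)y_a\log\frac{(1-t)y_a}{(1-t)s(y)}\ \ge\ z_a\log\frac{z_a}{s(z)},
\end{equation*}
which is the log-sum inequality with numerators $tx_a,(1-t)y_a$ and denominators $ts(x),(1-t)s(y)$. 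Equality holds if and only if
\begin{equation*}
\frac{x_a}{s(x)}=\frac{y_a}{s(y)}
\end{equation*}
whenever both are defined. Summing over $a$ and negating gives concavity, with equality exactly when the normalized rows coincide. The degenerate cases are handled separately: if $s(x)=0$ or $s(y)=0$, then equality is trivial by homogeneity, and the zero row counts as proportional to every row.

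The last step is bookkeeping. For each $c$, proportionality of $u(c,\cdot)$ and $v(c,\cdot)$, including the case where one row is zero, is equivalent to $\eta_v(c)u(c,a)=\eta_u(c)v(c,a)$ for all $a$. One direction is immediate: if the rows are proportional, the same scalar relates their sums $\eta_v(c)$ and $\eta_u(c)$, which gives the cross-product identity. For the converse, if both $\eta$'s are positive, the identity gives $u(c,\cdot)=\frac{\eta_u(c)}{\eta_v(c)}v(c,\cdot)$; if one of them vanishes, that row is zero. On a common positive-support face, both $\eta_u(c)$ and $\eta_v(c)$ are positive in every active context, so dividing by $\eta_u(c)\eta_v(c)$ turns the identity into $p_u(\cdot\mid c)=p_v(\cdot\mid c)$.

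The main obstacle is the equality case at the boundary, where the log-sum equality condition must be checked carefully when some $x_a$ or $y_a$ vanishes. I would verify directly that a term with $x_a=0<y_a$ forces strict inequality unless $s(x)=0$. This holds because $\frac{y_a}{s(y)}>0=\frac{x_a}{s(x)}$, so the two normalized rows differ in that coordinate.
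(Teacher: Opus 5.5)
Your proof is correct and follows essentially the paper's route: you split $J$ context by context and get both the inequality and its equality case from a per-row strict-convexity fact. You use the log-sum inequality coordinatewise, while the paper applies concavity of Shannon entropy to the mixture $p_w(\cdot\mid c)=\alpha_c p_u(\cdot\mid c)+(1-\alpha_c)p_v(\cdot\mid c)$ with $\alpha_c=t\eta_u(c)/\eta_w(c)$. Your treatment of zero rows (equality holds automatically there, and the cross-product identity reads $0=0$) is in fact more accurate than the paper's remark that a zero row mass forces the other row to vanish.
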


\begin{proof}
Write
\begin{equation*}
w:=tu+(1-t)v.
\end{equation*}
For each context $c\in A^r$, define the row masses
\begin{equation*}
\eta_u(c)=\sum_a u(c,a),
\qquad
\eta_v(c)=\sum_a v(c,a),
\qquad
\eta_w(c)=t\eta_u(c)+(1-t)\eta_v(c).
\end{equation*}
Then
\begin{equation*}
J(u)=\sum_c \eta_u(c)H\bigl(p_u(\cdot\mid c)\bigr),
\qquad
J(v)=\sum_c \eta_v(c)H\bigl(p_v(\cdot\mid c)\bigr),
\qquad
J(w)=\sum_c \eta_w(c)H\bigl(p_w(\cdot\mid c)\bigr),
\end{equation*}
where the row conditional laws are defined whenever the corresponding row mass is positive. For a fixed context $c$ with $\eta_w(c)>0$, define
\begin{equation*}
\alpha_c:=\frac{t\eta_u(c)}{\eta_w(c)},
\qquad
1-\alpha_c:=\frac{(1-t)\eta_v(c)}{\eta_w(c)}.
\end{equation*}
Then $\alpha_c\in[0,1]$ and, for each symbol $a\in A$,
\begin{equation*}
p_w(a\mid c)=\alpha_c p_u(a\mid c)+(1-\alpha_c)p_v(a\mid c)
\end{equation*}
whenever both row masses are positive. By concavity of Shannon entropy on the simplex~\cite{cover-thomas2006},
\begin{equation*}
H\bigl(p_w(\cdot\mid c)\bigr)
\ge
\alpha_c H\bigl(p_u(\cdot\mid c)\bigr)+(1-\alpha_c)H\bigl(p_v(\cdot\mid c)\bigr).
\end{equation*}
Multiplying by $\eta_w(c)$ yields
\begin{equation*}
\eta_w(c)H\bigl(p_w(\cdot\mid c)\bigr)
\ge
t\eta_u(c)H\bigl(p_u(\cdot\mid c)\bigr)
+
(1-t)\eta_v(c)H\bigl(p_v(\cdot\mid c)\bigr).
\end{equation*}
Summing over $c\in A^r$ gives the asserted concavity inequality.

For the equality statement, equality in the entropy concavity inequality for a fixed context $c$ holds if and only if the two conditional distributions in that row coincide whenever both mixture weights are positive. Equivalently, whenever $\eta_u(c),\eta_v(c)>0$,
\begin{equation*}
p_u(a\mid c)=p_v(a\mid c)
\qquad\text{for every }a\in A.
\end{equation*}
Multiplying through by the row masses gives
\begin{equation*}
\eta_v(c)u(c,a)=\eta_u(c)v(c,a)
\qquad\text{for every }a\in A.
\end{equation*}
If one of the row masses is zero, equality can hold only when the other row is also zero, which is again exactly the same proportionality condition. Therefore equality in the global concavity inequality holds if and only if the displayed proportionality identities hold in every context. These identities are equivalent to rowwise proportionality of $u(c,\cdot)$ and $v(c,\cdot)$.

Finally, on a common positive-support face all active row masses are positive, so dividing by $\eta_u(c)\eta_v(c)$ shows that the same condition is equivalent to
\begin{equation*}
p_u(\cdot\mid c)=p_v(\cdot\mid c)
\qquad\text{for every active context }c.
\end{equation*}

\end{proof}

\begin{theorem}[Strict-concavity characterization by row proportionality]\label{thm:uniqueness-beyond-a3}
Let $K$ be a convex set of stationary $(r+1)$-block laws on $A^{r+1}$. Then the following are equivalent:
\begin{enumerate}[label=(\roman*),leftmargin=2.5em]
\item $J$ is strictly concave on $K$.
\item For every distinct pair $u,v\in K$, there exist a context $c\in A^r$ and a symbol $a\in A$ such that
\begin{equation*}
\eta_v(c)u(c,a)\ne \eta_u(c)v(c,a).
\end{equation*}
\item No two distinct points of $K$ are rowwise proportional in every context.
\end{enumerate}
If, in addition, every element of $K$ belongs to a common positive-support face, then these conditions are also equivalent to:
\begin{enumerate}[label=(\roman*),leftmargin=2.5em,start=4]
\item For every distinct pair $u,v\in K$, there exists an active context $c$ such that
\begin{equation*}
p_u(\cdot\mid c)\ne p_v(\cdot\mid c).
\end{equation*}
\end{enumerate}
\end{theorem}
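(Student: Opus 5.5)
The plan is to reduce everything to the equality criterion of \cref{prop:concavity-equality}, applied pairwise on $K$. Strict concavity on the convex set $K$ means that for every distinct pair $u,v\in K$ and every $t\in(0,1)$,
\begin{equation*}
J(tu+(1-t)v)>tJ(u)+(1-t)J(v).
\end{equation*}
By \cref{prop:concavity-equality} the weak inequality always holds. Equality holds for a given $t$ if and only if $\eta_v(c)u(c,a)=\eta_u(c)v(c,a)$ for all $c,a$, and this condition does not depend on $t$. Hence, for a fixed pair, strictness at one $t$ is equivalent to strictness at all $t$, and strictness fails exactly when the proportionality identities hold for that pair.

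Step one proves (i)$\Leftrightarrow$(ii). For (ii)$\Rightarrow$(i), take a distinct pair and a witness $(c,a)$ with $\eta_v(c)u(c,a)\ne\eta_u(c)v(c,a)$. The equality condition then fails, so the inequality is strict for every $t\in(0,1)$. For (i)$\Rightarrow$(ii), argue by contraposition. If some distinct pair satisfies all the identities, then \cref{prop:concavity-equality} gives equality along the entire open segment, so $J$ is affine on it and not strictly concave on $K$. Convexity of $K$ is used only to guarantee that the segment lies in $K$.

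Step two proves (ii)$\Leftrightarrow$(iii). This is the translation, already recorded in the proof of \cref{prop:concavity-equality}, that the identities $\eta_v(c)u(c,a)=\eta_u(c)v(c,a)$ for all $a$ are equivalent to rowwise proportionality of $u(c,\cdot)$ and $v(c,\cdot)$. This step is the main point needing care, because proportionality must be read in a specific sense. If both row masses are positive, the identities say the normalized rows coincide. If exactly one mass is zero, say $\eta_u(c)=0$ and $\eta_v(c)>0$, the identity forces $u(c,\cdot)=0$. If both are zero, the rows are trivially zero. So "proportional" must be understood as in the proposition: equal normalized rows, with zero rows treated as matching only other zero rows. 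I will state this convention explicitly. The negation of "all identities hold" is then exactly "some $(c,a)$ violates one", which gives (ii)$\Leftrightarrow$(iii).

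Step three proves (ii)$\Leftrightarrow$(iv) under the common positive-support face hypothesis. On such a face, $u$ and $v$ have the same support, so each context is either active for both, with $\eta_u(c),\eta_v(c)>0$, or inactive for both, with zero rows. On inactive contexts the identities hold trivially. On active contexts, dividing by $\eta_u(c)\eta_v(c)$ turns them into $p_u(\cdot\mid c)=p_v(\cdot\mid c)$, which is the final clause of \cref{prop:concavity-equality}. Hence a violating pair $(c,a)$ exists if and only if some active context has distinct kernels, which is (iv).
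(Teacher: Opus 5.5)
Your decomposition matches the paper's proof: (i)$\Leftrightarrow$(ii) from the equality criterion of \cref{prop:concavity-equality}, (ii)$\Leftrightarrow$(iii) as a coordinate translation, and (ii)$\Leftrightarrow$(iv) by dividing by $\eta_u(c)\eta_v(c)$ on a common face. Steps one and three are fine. The error is the convention you fix in step two.

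When $\eta_u(c)=0<\eta_v(c)$, the row $u(c,\cdot)$ vanishes automatically by nonnegativity. Both sides of $\eta_v(c)u(c,a)=\eta_u(c)v(c,a)$ are then zero, so the identity holds for every $a$. The identities therefore say that in each context either some row is zero, or both rows are nonzero with equal normalized rows. A zero row matches \emph{every} row, not only zero rows.

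Under your convention, ``all identities hold'' is strictly weaker than ``rowwise proportional in every context''. So the sentence concluding (ii)$\Leftrightarrow$(iii) is false as written. For instance, take $r=1$, $A=\{0,1\}$, $u=\delta_{(0,0)}$ and $v=\delta_{(1,1)}$. Both laws are stationary-consistent, all identities hold, and $J\equiv 0$ on the segment between them, so this is a genuine equality case. Yet in context $0$ the rows $(1,0)$ and $(0,0)$ do not match in your sense.

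The repair is short. The paper reads ``proportional'' in the ordinary sense: one row is a scalar multiple of the other, so a zero row is proportional to every row. For nonnegative rows this is exactly the identity, and (ii)$\Leftrightarrow$(iii) is a tautology.

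If you prefer your stricter convention, you need convexity of $K$ for one direction:
\begin{enumerate}[label=(\alph*)]
\item The strict relation implies the identities, which gives (ii)$\Rightarrow$(iii).
\item Conversely, suppose distinct $u,v\in K$ satisfy all identities. For $s\ne t$ in $(0,1)$, the points $su+(1-s)v$ and $tu+(1-t)v$ are distinct points of $K$. Their rows are nonzero in exactly the same contexts, namely those with $\eta_u(c)+\eta_v(c)>0$, and they have equal normalized rows there. So they are proportional even in the strict sense.
\end{enumerate}

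Your convention seems to come from the sentence in the proof of \cref{prop:concavity-equality} asserting that equality ``can hold only when the other row is also zero''. That sentence is inaccurate, since equality holds in such a context regardless. The identity form of that proposition, which is all you use in step one, is correct.
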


\begin{proof}
The equivalence of \textnormal{(ii)} and \textnormal{(iii)} is the coordinate form of rowwise proportionality. If all elements of $K$ belong to a common positive-support face, then for each active context $c$ every row mass is positive, and dividing the identity
\begin{equation*}
\eta_v(c)u(c,a)=\eta_u(c)v(c,a)
\end{equation*}
by $\eta_u(c)\eta_v(c)$ shows that \textnormal{(ii)} is equivalent to \textnormal{(iv)}.

Assume \textnormal{(i)}. If \textnormal{(ii)} failed, then there would exist distinct $u,v\in K$ such that
\begin{equation*}
\eta_v(c)u(c,a)=\eta_u(c)v(c,a)
\qquad\text{for every }c\in A^r,\ a\in A.
\end{equation*}
By \cref{prop:concavity-equality}, equality would then hold along the whole segment joining $u$ and $v$:
\begin{equation*}
J((1-t)u+tv)=(1-t)J(u)+tJ(v)
\qquad\text{for every }t\in(0,1),
\end{equation*}
which contradicts strict concavity on $K$. Thus \textnormal{(i)} implies \textnormal{(ii)}.

Assume \textnormal{(ii)}. Let $u,v\in K$ be distinct and let $t\in(0,1)$. Since $K$ is convex, the point
\begin{equation*}
w:=(1-t)u+tv
\end{equation*}
belongs to $K$. By \cref{prop:concavity-equality},
\begin{equation*}
J(w)\ge (1-t)J(u)+tJ(v),
\end{equation*}
and equality holds if and only if $u$ and $v$ are rowwise proportional in every context. Hypothesis \textnormal{(ii)} rules out that equality case for distinct $u$ and $v$. Therefore
\begin{equation*}
J((1-t)u+tv)>(1-t)J(u)+tJ(v)
\qquad\text{for all distinct }u,v\in K,\ t\in(0,1),
\end{equation*}
which is exactly strict concavity of $J$ on $K$. Thus \textnormal{(ii)} implies \textnormal{(i)}.

This proves the equivalence of \textnormal{(i)}--\textnormal{(iii)}, and also of \textnormal{(i)}--\textnormal{(iv)} on a common positive-support face.
\end{proof}

\begin{corollary}[Global uniqueness on the feasible class beyond \textnormal{(A3)}]\label{cor:global-uniqueness-beyond-a3}
Assume \textnormal{(A1)--(A2)}. Suppose that for every distinct feasible pair $u,v\in \mathcal U_\Pi(\nu)$ there exist a context $c\in A^r$ and a symbol $a\in A$ such that
\begin{equation*}
\eta_v(c)u(c,a)\ne \eta_u(c)v(c,a).
\end{equation*}
Then the maximizer of $J$ over $\mathcal U_\Pi(\nu)$ is unique.
\end{corollary}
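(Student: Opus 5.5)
The plan is to combine existence with the strict-concavity characterization. By \cref{thm:existence}, under (A1)--(A2) the feasible set $\mathcal U_\Pi(\nu)$ is nonempty, compact and convex, and $J$ attains its maximum there. So the only thing left to prove is that no two distinct maximizers can exist.

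First I check that \cref{thm:uniqueness-beyond-a3} applies with $K:=\mathcal U_\Pi(\nu)$. Every element of $K$ is a stationary-consistent $(r+1)$-block law. $K$ is convex because it is cut out of the simplex by the linear equalities in (A1). The hypothesis of the corollary is, word for word, condition (ii) of \cref{thm:uniqueness-beyond-a3} for this $K$. The implication (ii)$\Rightarrow$(i) then shows that $J$ is strictly concave on $\mathcal U_\Pi(\nu)$.

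Next comes the standard argument. Suppose $u^\star\ne v^\star$ are both maximizers, with common value $M=\max J$. By convexity the midpoint $w=\tfrac12 u^\star+\tfrac12 v^\star$ is feasible. Strict concavity gives
\begin{equation*}
J(w)>\tfrac12 J(u^\star)+\tfrac12 J(v^\star)=M,
\end{equation*}
which contradicts the maximality of $M$. Hence the maximizer is unique.

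No step is a real obstacle. The only point to watch is whether \cref{prop:concavity-equality} (and so \cref{thm:uniqueness-beyond-a3}) holds when some row masses vanish. The proposition treats this through the proportionality identity $\eta_v(c)u(c,a)=\eta_u(c)v(c,a)$, which is stated for all contexts, including zero rows. So the hypothesis needs no support restriction, and I would just note this when invoking the theorem.
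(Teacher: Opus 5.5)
Your proposal is correct and is essentially the paper's own proof: existence and convexity from \cref{thm:existence}, strict concavity of $J$ on $\mathcal U_\Pi(\nu)$ from the implication \textnormal{(ii)}$\Rightarrow$\textnormal{(i)} of \cref{thm:uniqueness-beyond-a3}, and uniqueness because a strictly concave function has at most one maximizer on a convex set (your midpoint argument just spells out this last step). Your remark on vanishing row masses is also sound: if $\eta_u(c)=0$, then both sides of $\eta_v(c)u(c,a)=\eta_u(c)v(c,a)$ vanish and the context-$c$ terms of the concavity inequality coincide, so the equality criterion of \cref{prop:concavity-equality} holds as stated with no support restriction.
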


\begin{proof}
By \cref{thm:existence}, the feasible class $\mathcal U_\Pi(\nu)$ is nonempty, compact, and convex, and $J$ attains a maximum on it. The stated hypothesis implies, via \cref{thm:uniqueness-beyond-a3}, that $J$ is strictly concave on $\mathcal U_\Pi(\nu)$. A strictly concave function has at most one maximizer on a convex set, so the maximizer is unique.
\end{proof}

\begin{corollary}[Kernel-separation criterion on a common positive-support face]
Assume \textnormal{(A1)--(A2)} and suppose every feasible block law in $\mathcal U_\Pi(\nu)$ belongs to a common positive-support face. Assume, in addition, that for every distinct feasible pair $u,v\in \mathcal U_\Pi(\nu)$ there exists an active context $c$ such that
\begin{equation*}
p_u(\cdot\mid c)\ne p_v(\cdot\mid c).
\end{equation*}
Then the maximizer of $J$ over $\mathcal U_\Pi(\nu)$ is unique.
\end{corollary}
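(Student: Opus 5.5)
The plan is to reduce this corollary to \cref{cor:global-uniqueness-beyond-a3} by translating the kernel-separation hypothesis into the coordinate condition (ii) of \cref{thm:uniqueness-beyond-a3}. Take $K:=\mathcal U_\Pi(\nu)$. This set is convex, being the intersection of the simplex with the linear stationarity and retained-observable constraints, as already noted in the proof of \cref{thm:existence}. By hypothesis every element of $K$ lies in a common positive-support face, so the supplementary equivalence (iv)$\Leftrightarrow$(ii) of \cref{thm:uniqueness-beyond-a3} applies.

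The steps are as follows. First, fix distinct feasible $u,v$ and take the active context $c$ with $p_u(\cdot\mid c)\ne p_v(\cdot\mid c)$ supplied by the hypothesis. On the common face the row masses $\eta_u(c),\eta_v(c)$ are positive, so some symbol $a$ satisfies $u(c,a)/\eta_u(c)\ne v(c,a)/\eta_v(c)$. Multiplying by $\eta_u(c)\eta_v(c)>0$ gives $\eta_v(c)u(c,a)\ne\eta_u(c)v(c,a)$, which is exactly the hypothesis of \cref{cor:global-uniqueness-beyond-a3}. Second, invoke that corollary, which rests on \cref{thm:existence} for existence and on strict concavity via \cref{thm:uniqueness-beyond-a3}, to conclude that the maximizer is unique. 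Equivalently, one can apply \cref{thm:uniqueness-beyond-a3} directly to get strict concavity of $J$ on $K$, and then note that a strictly concave function has at most one maximizer on a convex set.

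The only point needing care is the meaning of "common positive-support face": for each active context $c$, every feasible law must have $\eta_u(c)>0$, so that the kernels are defined and the division is legitimate. Inactive contexts have zero rows for every $u$, so the proportionality identity holds trivially there and carries no information. This is why restricting attention to active contexts loses nothing. Beyond that, the argument is a direct bookkeeping reduction with no substantive obstacle.
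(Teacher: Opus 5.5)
Your proposal is correct and follows the paper's proof. The paper likewise uses the common positive-support hypothesis to turn kernel separation into condition (ii) of \cref{thm:uniqueness-beyond-a3}, and then concludes via \cref{cor:global-uniqueness-beyond-a3}; you carry out the needed direction (iv)$\Rightarrow$(ii) by hand, multiplying through by $\eta_u(c)\eta_v(c)>0$, which the hypothesis justifies because both kernels must be defined at the separating context.
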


\begin{proof}
Under the common positive-support assumption, \cref{thm:uniqueness-beyond-a3} shows that the kernel-separation hypothesis is equivalent to strict concavity of $J$ on $\mathcal U_\Pi(\nu)$. The conclusion then follows from \cref{cor:global-uniqueness-beyond-a3}.
\end{proof}

\begin{remark}
\cref{thm:uniqueness-beyond-a3} is strictly broader than \cref{thm:uniqueness}. Assumption \textnormal{(A3)} eliminates all variation of the context marginal across the feasible set, whereas the broader characterization excludes only the flat directions along which every row changes by a scalar factor while the conditional kernels remain unchanged.
\end{remark}

\paragraph{A sufficient condition for Assumption \textnormal{(A3)}.}
Suppose that for each context $c\in A^r$, the context-indicator function
\begin{equation*}
I_c(c',a):=\mathbf 1\{c'=c\},
\qquad (c',a)\in A^{r+1},
\end{equation*}
belongs to the linear span of the retained observable features $G_1,\dots,G_m$ together with the constant function $1$ on $A^{r+1}$. Then the feasible set $\mathcal U_\Pi(\nu)$ fixes the full context marginal, so Assumption \textnormal{(A3)} holds and \cref{thm:uniqueness} applies.

\begin{proof}
For any feasible block law $u\in \mathcal U_\Pi(\nu)$,
\begin{equation*}
\eta_u(c)=\sum_{a\in A}u(c,a)=\sum_{c',a}u(c',a)I_c(c',a).
\end{equation*}
By hypothesis, one may write
\begin{equation*}
I_c=\alpha_{c,0}+\sum_{j=1}^m \alpha_{c,j}G_j
\end{equation*}
for suitable coefficients depending on $c$. Therefore
\begin{equation*}
\eta_u(c)=\alpha_{c,0}\sum_{c',a}u(c',a)+\sum_{j=1}^m \alpha_{c,j}\sum_{c',a}u(c',a)G_j(c',a).
\end{equation*}
The first term equals $\alpha_{c,0}$ by normalization, and each remaining term is fixed on $\mathcal U_\Pi(\nu)$ by the retained moment constraints. Hence $\eta_u(c)$ is the same for every feasible $u$. Hence the full context marginal is fixed on $\mathcal U_\Pi(\nu)$.
\end{proof}

\section{Global Characterization Theorems}

\begin{theorem}[Fixed marginal case]
\label{thm:fixed-marginal-iid}
Let $A$ be a finite alphabet, and let $\pi=(\pi_a)_{a\in A}$ be a probability vector on $A$.
Consider the feasible class
\begin{equation*}
U(\pi)
:=
\left\{
 u=(u_{a,b})_{a,b\in A} :
 u_{a,b}\ge 0,
 \sum_{b\in A}u_{a,b}=\pi_a,
 \sum_{a\in A}u_{a,b}=\pi_b
\right\}.
\end{equation*}
Then the entropy-rate functional
\begin{equation*}
J(u)
=
-\sum_{a,b\in A}u_{a,b}\log\frac{u_{a,b}}{\pi_a}
\end{equation*}
has the unique maximizer
\begin{equation*}u^\star_{a,b}=\pi_a\pi_b,
\qquad a,b\in A.
\end{equation*}
That is, the entropy maximizer is the i.i.d.\ process with one-point law $\pi$.
\end{theorem}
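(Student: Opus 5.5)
The plan is to show that the i.i.d.\ coupling is the unique maximizer by writing $J$ as a conditional entropy and bounding it by the entropy of the second coordinate. Here $r=1$: the context is a single symbol $a$, and the row-sum constraint gives $\eta_u(a)=\pi_a$ for every $u\in U(\pi)$. So $J(u)$ is the conditional entropy $H(Y\mid X)$ of a pair $(X,Y)\sim u$ whose marginals are both $\pi$. The column-sum constraint says that $Y$ has law $\pi$, and the standard inequality ``conditioning reduces entropy'' will give the bound.

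First I will rewrite $J$ as a difference of entropies. Since $\sum_a u_{a,b}=\pi_b$, a direct rearrangement gives
\begin{equation*}
J(u)=H(\pi)-\sum_{a,b}u_{a,b}\log\frac{u_{a,b}}{\pi_a\pi_b}=H(\pi)-D\bigl(u\,\|\,\pi\otimes\pi\bigr),
\end{equation*}
with the convention $0\log 0=0$. The subtracted term is the mutual information $I(X;Y)$, i.e.\ the Kullback--Leibler divergence of $u$ from the product law. The computation is to add and subtract $\sum_{a,b}u_{a,b}\log\pi_b$ and use the column sums to identify that sum as $-H(\pi)$. Terms with $\pi_a=0$ or $\pi_b=0$ force $u_{a,b}=0$ and drop out, so I will restrict to $\operatorname{supp}\pi$ throughout.

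Next I will apply Gibbs' inequality, $D(u\|\pi\otimes\pi)\ge 0$ with equality iff $u=\pi\otimes\pi$. This gives $J(u)\le H(\pi)$ for every $u\in U(\pi)$, with equality exactly at $u_{a,b}=\pi_a\pi_b$. This point is feasible: it is nonnegative, its row sums are $\pi_a\sum_b\pi_b=\pi_a$, and its column sums are $\pi_b$. Hence it is the unique maximizer, and the kernel $p^\star(b\mid a)=\pi_b$ is the i.i.d.\ process with one-point law $\pi$.

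As a cross-check on uniqueness, I could also cite \cref{thm:uniqueness}. The context marginal is fixed at $\bar\eta=\pi$ and $U(\pi)$ is a nonempty compact convex set, so a unique maximizer exists. It would then only remain to identify it, which the divergence identity does anyway.

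The main obstacle is the bookkeeping of zero entries of $\pi$ in the identity. I will handle it by noting that such rows and columns vanish identically and reducing to the alphabet $\operatorname{supp}\pi$, where all $\pi_a>0$.
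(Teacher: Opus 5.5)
Your proposal is correct and follows essentially the same route as the paper, which writes $J(u)=H(X_1\mid X_0)$ for a pair with both marginals $\pi$ and uses $H(X_1\mid X_0)\le H(X_1)=H(\pi)$, with equality if and only if $X_0$ and $X_1$ are independent; your identity $J(u)=H(\pi)-D(u\,\|\,\pi\otimes\pi)$ combined with Gibbs' inequality is just the unpacked form of that step. Your explicit restriction to $\operatorname{supp}\pi$ to handle zero entries is slightly more careful than the paper, which leaves that convention implicit.
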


\begin{proof}
Let $(X_t)_{t\in\mathbb{Z}}$ be a stationary first-order process with two-block law $u$.
Because $u\in U(\pi)$, both $X_0$ and $X_1$ have marginal law $\pi$.
By the entropy-rate definition,
\begin{equation*}
J(u)=H(X_1\mid X_0).
\end{equation*}
By monotonicity of conditional entropy, applied to the conditioning variables $X_0$ and the trivial sigma-field~\cite{cover-thomas2006},
\begin{equation*}
H(X_1\mid X_0)\le H(X_1)=H(\pi).
\end{equation*}
Hence $J(u)\le H(\pi)$ for every $u\in U(\pi)$.

Now define $u^\star$ by $u^\star_{a,b}=\pi_a\pi_b$.
Then $u^\star\in U(\pi)$ because
\begin{equation*}
\sum_{b\in A}u^\star_{a,b}=\pi_a\sum_{b\in A}\pi_b=\pi_a,
\qquad
\sum_{a\in A}u^\star_{a,b}=\pi_b\sum_{a\in A}\pi_a=\pi_b.
\end{equation*}
The corresponding process is i.i.d.\ with one-point law $\pi$.
Therefore $X_1$ is independent of $X_0$, and thus
\begin{equation*}
J(u^\star)=H(X_1\mid X_0)=H(X_1)=H(\pi).
\end{equation*}
Hence $u^\star$ attains the maximum.

Suppose that $u\in U(\pi)$ also satisfies $J(u)=H(\pi)$.
Then
\begin{equation*}
H(X_1\mid X_0)=H(X_1).
\end{equation*}
Equality holds if and only if $X_1$ is independent of $X_0$.
Since both marginals equal $\pi$, independence implies
\begin{equation*}u_{a,b}=\mathbb{P}(X_0=a,X_1=b)=\pi_a\pi_b,
\qquad a,b\in A.
\end{equation*}
Hence $u=u^\star$.
\end{proof}

\begin{corollary}[Binary fixed-mean case]
\label{cor:binary-fixed-mean-iid}
Let $A=\{0,1\}$, and let the retained observable be the stationary mean
\begin{equation*}
m=\mathbb{P}(X_t=1)\in[0,1].
\end{equation*}
Then the unique entropy-rate maximizer among stationary binary first-order laws with mean $m$
is the i.i.d.\ Bernoulli$(m)$ law.
\end{corollary}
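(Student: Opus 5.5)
The plan is to reduce the corollary directly to \cref{thm:fixed-marginal-iid}. I will show that, for $A=\{0,1\}$, fixing the stationary mean $m$ forces the one-point marginal to be $\pi=(1-m,m)$. Then I will check that the resulting feasible class is exactly $U(\pi)$.

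First I will identify the feasible class. A stationary binary first-order law is a $2$-block law $u=(u_{a,b})_{a,b\in\{0,1\}}$ that is stationary-consistent, so its left and right one-point marginals agree. The retained observable $m=\mathbb{P}(X_t=1)$ is the linear constraint $\sum_b u_{1,b}=m$; equivalently, by stationary consistency, $\sum_a u_{a,1}=m$. Combined with normalization, it gives $\sum_b u_{0,b}=1-m$. Together with stationary consistency, both row and column marginals therefore equal $\pi=(1-m,m)$. Conversely, every $u\in U(\pi)$ is stationary-consistent and has mean $m$. So the feasible class coincides with $U(\pi)$, and on it $\eta_u=\pi$. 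This also shows that $J$ takes the form stated in \cref{thm:fixed-marginal-iid}.

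Next I will apply \cref{thm:fixed-marginal-iid} with this $\pi$. It yields the unique maximizer $u^\star_{a,b}=\pi_a\pi_b$, which is the two-block law of the i.i.d.\ Bernoulli$(m)$ process.

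The only point needing care is the degenerate endpoints $m\in\{0,1\}$. Here $U(\pi)$ is a single point, a Dirac mass on $(0,0)$ or $(1,1)$, which coincides with the i.i.d.\ law, and $J=0$ under the convention $0\log 0=0$. The theorem's argument already covers this case, since the conditional-entropy bound and its equality case do not require positivity of $\pi$. Still, I would remark on it explicitly, because $J$ is written with division by $\pi_a$, and terms with $\pi_a=0$ have $u_{a,b}=0$ and are interpreted as $0$. Otherwise the proof is a routine identification of constraint sets, and I expect no real obstacle.
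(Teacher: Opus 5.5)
Your proposal is correct and follows the same route as the paper, which simply applies \cref{thm:fixed-marginal-iid} with $\pi=(1-m,m)$. Your explicit identification of the feasible class with $U(\pi)$ and your remark on the endpoints $m\in\{0,1\}$ supply details the paper leaves implicit.
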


\begin{proof}
Apply \cref{thm:fixed-marginal-iid} with $\pi=(1-m,m)$.
\end{proof}

\begin{theorem}[Fixed $r$-block case]
\label{thm:fixed-r-block-markov}
Let $r\ge 1$, let $A$ be a finite alphabet, and let $\mu$ be a probability law on $A^r$.
Adopt the convention that $A^0=\{\varnothing\}$ and that the empty-block marginal equals $\mu(\varnothing)=1$ when $r=1$.
Consider the feasible class
\begin{equation*}
U(\mu)
:=
\left\{
 u=(u(c,a))_{c\in A^r,\ a\in A} :
 \begin{array}{l}
 u(c,a)\ge 0,\\[0.3em]
 \sum_{a\in A}u(c,a)=\mu(c)\quad \text{for every } c\in A^r,\\[0.3em]
 u \text{ is stationary-consistent}
 \end{array}
\right\}.
\end{equation*}
For $u\in U(\mu)$, write
\begin{equation*}
J(u)= -\sum_{c\in A^r}\sum_{a\in A}u(c,a)\log\frac{u(c,a)}{\mu(c)}
 = H(X_r\mid X_0^{r-1}),
\end{equation*}
where $(X_t)$ is any stationary process with $(r+1)$-block law $u$.
When $r=1$, the expression $X_1^{r-1}$ below is interpreted as the trivial conditioning sigma-field, so
\begin{equation*}
H_\mu(X_r\mid X_1^{r-1})=H_\mu(X_1).
\end{equation*}

For $s=(s_1,\dots,s_{r-1})\in A^{r-1}$ with $\mu(s)>0$, define
\begin{equation*}
q(a\mid s):=\frac{\mu(s,a)}{\mu(s)},
\qquad a\in A.
\end{equation*}
For $s\in A^{r-1}$ with $\mu(s)=0$, fix an arbitrary probability vector $q(\cdot\mid s)$ on $A$.
Define
\begin{equation*}u^\star(c,a)=\mu(c)\,q(a\mid c_2,\dots,c_r),
\qquad c=(c_1,\dots,c_r)\in A^r,\ a\in A.
\end{equation*}
Then $u^\star\in U(\mu)$ and $u^\star$ is the unique maximizer of $J$ on $U(\mu)$.
That is, among all stationary $(r+1)$-block laws extending $\mu$, the entropy-rate selector is the $(r-1)$-step Markov extension.
\end{theorem}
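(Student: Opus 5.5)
The proof mirrors that of \cref{thm:fixed-marginal-iid}, with the trivial conditioning replaced by conditioning on the last $r-1$ coordinates. It runs in three steps: feasibility of $u^\star$, an upper bound on $J$ over $U(\mu)$, and identification of the equality case.

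\emph{Feasibility.} I first check that $u^\star\in U(\mu)$.

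Nonnegativity is immediate. The row sums are correct because $\sum_a u^\star(c,a)=\mu(c)\sum_a q(a\mid c_2,\dots,c_r)=\mu(c)$.

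For stationary consistency I need, for every $(c_1,\dots,c_r)\in A^r$,
\begin{equation*}
\sum_\alpha u^\star(\alpha,c_1,\dots,c_r)=\sum_\beta u^\star(c_1,\dots,c_r,\beta)=\mu(c_1,\dots,c_r).
\end{equation*}
The right-hand equality is the row-sum identity just shown. For the left-hand sum, set $s=(c_1,\dots,c_{r-1})$. Then
\begin{equation*}
\sum_\alpha u^\star(\alpha,s,c_r)=\Bigl(\sum_\alpha\mu(\alpha,s)\Bigr)\,q(c_r\mid s).
\end{equation*}
At this point I must assume that $\mu$ is itself stationary-consistent as an $r$-block law, i.e.\ its left and right $(r-1)$-marginals agree. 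This is implicit in the statement, since $U(\mu)$ is empty otherwise: any $u\in U(\mu)$ has $\mu$ as both its left and right $r$-marginal, and these inherit consistency. Under this assumption $\sum_\alpha\mu(\alpha,s)=\mu(s)$.
\begin{itemize}[leftmargin=2.5em]
\item If $\mu(s)>0$, the left-hand sum is $\mu(s)\,q(c_r\mid s)=\mu(s,c_r)$, as required.
\item If $\mu(s)=0$, both sides vanish, and the arbitrary choice of $q(\cdot\mid s)$ is harmless.
\item For $r=1$, $s=\varnothing$ and $q=\mu$, so $u^\star(c,a)=\mu(c)\mu(a)$, which recovers \cref{thm:fixed-marginal-iid}.
\end{itemize}

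\emph{Upper bound.} Take any $u\in U(\mu)$ and realize it as the $(r+1)$-block law of $(X_0,\dots,X_r)$. By stationary consistency, $(X_1,\dots,X_r)$ has law $\mu$. Monotonicity of conditional entropy under dropping the conditioning variable $X_0$ gives
\begin{equation*}
J(u)=H(X_r\mid X_0^{r-1})\le H(X_r\mid X_1^{r-1})=H_\mu(X_r\mid X_1^{r-1}).
\end{equation*}
The right-hand side depends only on $\mu$. For $u^\star$, $X_r$ given $X_0^{r-1}$ has law $q(\cdot\mid X_1^{r-1})$, which depends only on $X_1^{r-1}$. Hence $X_r$ and $X_0$ are conditionally independent given $X_1^{r-1}$, and the bound is attained. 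Therefore $u^\star$ is a maximizer.

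\emph{Uniqueness.} This is the step needing most care. The gap is
\begin{equation*}
H_\mu(X_r\mid X_1^{r-1})-J(u)=I(X_0;X_r\mid X_1^{r-1}),
\end{equation*}
which vanishes if and only if $X_0$ and $X_r$ are conditionally independent given $X_1^{r-1}$ under $u$.

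Suppose this holds. Whenever $\mu(c)>0$ with $c=(c_1,\dots,c_r)$, we have $\mu(c_2,\dots,c_r)>0$ as well, since the left $(r-1)$-marginal of $\mu$ equals its right one. Conditional independence then gives
\begin{equation*}
u(c,a)=\mu(c)\,\mathbb P(X_r=a\mid X_1^{r-1}=c_2^r)=\mu(c)\,q(a\mid c_2,\dots,c_r),
\end{equation*}
because $(X_1,\dots,X_r)\sim\mu$. Whenever $\mu(c)=0$, the row-sum constraint forces $u(c,\cdot)=0=u^\star(c,\cdot)$. Hence $u=u^\star$.

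Alternatively, \cref{thm:uniqueness} applies directly here, since the context marginal is fixed at $\mu$; the explicit argument above additionally identifies the maximizer. The main obstacle is the zero-probability bookkeeping in the feasibility and uniqueness steps, together with making explicit the implicit consistency requirement on $\mu$.
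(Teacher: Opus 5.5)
Your proof is correct and follows the paper's argument essentially step for step: the bound $J(u)=H(X_r\mid X_0^{r-1})\le H_\mu(X_r\mid X_1^{r-1})$ obtained by dropping $X_0$ from the conditioning, the direct feasibility check for $u^\star$, attainment via $X_r\perp X_0\mid X_1^{r-1}$, and uniqueness from the vanishing of $I(X_0;X_r\mid X_1^{r-1})$, with zero-mass rows handled by the row-sum constraint. You are more careful than the paper in one place: you state explicitly that $\mu$ must have equal left and right $(r-1)$-marginals (a condition necessary for $U(\mu)\neq\varnothing$), whereas the paper uses this silently when it writes $\sum_{a}\mu(a,d_1,\dots,d_{r-1})=\mu(d_1,\dots,d_{r-1})$ ``by marginalization.''
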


\begin{proof}
Let $u\in U(\mu)$, and let $(X_t)_{t\in\mathbb{Z}}$ be a stationary process with $(r+1)$-block law $u$.
Because $\sum_{a\in A}u(c,a)=\mu(c)$ for every $c\in A^r$, the law of $(X_0,\dots,X_{r-1})$ is $\mu$.
Because $u$ is stationary-consistent, the law of $(X_1,\dots,X_r)$ is also $\mu$.
By definition,
\begin{equation*}
J(u)=H(X_r\mid X_0^{r-1}).
\end{equation*}
Since $X_1^{r-1}$ is a function of $X_0^{r-1}$, conditioning on the larger sigma-field cannot increase entropy~\cite{cover-thomas2006}, so
\begin{equation*}
H(X_r\mid X_0^{r-1})\le H(X_r\mid X_1^{r-1}).
\end{equation*}
Moreover, the joint law of $(X_1,\dots,X_r)$ is fixed and equal to $\mu$, so the quantity on the right-hand side depends only on $\mu$.
Therefore every feasible law satisfies
\begin{equation*}
J(u)\le H_{\mu}(X_r\mid X_1^{r-1}).
\end{equation*}

To verify $u^\star\in U(\mu)$, note first that
\begin{equation*}
\sum_{a\in A}u^\star(c,a)=\mu(c)\sum_{a\in A}q(a\mid c_2,\dots,c_r)=\mu(c)
\qquad c\in A^r,
\end{equation*}
so $u^\star$ has the required context marginal.
It remains to verify stationarity-consistency.
Fix $d=(d_1,\dots,d_r)\in A^r$.
Then
\begin{equation*}
\sum_{a\in A}u^\star(a,d)
=
\sum_{a\in A}\mu(a,d_1,\dots,d_{r-1})\,q(d_r\mid d_1,\dots,d_{r-1}).
\end{equation*}
If $\mu(d_1,\dots,d_{r-1})>0$, the factor $q(d_r\mid d_1,\dots,d_{r-1})$ is equal to
\begin{equation*}
\frac{\mu(d_1,\dots,d_r)}{\mu(d_1,\dots,d_{r-1})}.
\end{equation*}
Hence
\begin{equation*}
\sum_{a\in A}u^\star(a,d)
=
\Bigl(\sum_{a\in A}\mu(a,d_1,\dots,d_{r-1})\Bigr)
\frac{\mu(d_1,\dots,d_r)}{\mu(d_1,\dots,d_{r-1})}.
\end{equation*}
By marginalization of the probability law $\mu$,
\begin{equation*}
\sum_{a\in A}\mu(a,d_1,\dots,d_{r-1})=\mu(d_1,\dots,d_{r-1}).
\end{equation*}
Therefore
\begin{equation*}
\sum_{a\in A}u^\star(a,d)=\mu(d_1,\dots,d_r).
\end{equation*}
If instead $\mu(d_1,\dots,d_{r-1})=0$, then every term $\mu(a,d_1,\dots,d_{r-1})$ is zero, so
\begin{equation*}
\sum_{a\in A}u^\star(a,d)=0.
\end{equation*}
Also $\mu(d_1,\dots,d_r)\le \mu(d_1,\dots,d_{r-1})=0$, hence $\mu(d_1,\dots,d_r)=0$.
Therefore in all cases
\begin{equation*}
\sum_{a\in A}u^\star(a,d)=\mu(d)=\sum_{a\in A}u^\star(d,a),
\end{equation*}
which is stationarity-consistency.
Thus $u^\star\in U(\mu)$.

Fix a context $c=(c_1,\dots,c_r)$ with $\mu(c)>0$.
For every $a\in A$, the definition of $u^\star$ and the identity $\sum_{b\in A}u^\star(c,b)=\mu(c)$ give
\begin{equation*}
\mathbb{P}_{u^\star}(X_r=a\mid X_0^{r-1}=c)
=
\frac{u^\star(c,a)}{\mu(c)}
=
q(a\mid c_2,\dots,c_r).
\end{equation*}
The right-hand side depends on $c$ only through $(c_2,\dots,c_r)$, so
\begin{equation*}
\mathbb{P}_{u^\star}(X_r=a\mid X_0^{r-1})=
\mathbb{P}_{u^\star}(X_r=a\mid X_1^{r-1})
\qquad\text{for every } a\in A,
\end{equation*}
that is, $X_r\perp X_0\mid X_1^{r-1}$. Therefore
\begin{equation*}
H(X_r\mid X_0^{r-1})=H(X_r\mid X_1^{r-1}).
\end{equation*}
Since the law of $(X_1,\dots,X_r)$ is $\mu$,
\begin{equation*}
J(u^\star)=H_{\mu}(X_r\mid X_1^{r-1}).
\end{equation*}
Hence $u^\star$ attains the maximum.

Suppose that $u\in U(\mu)$ also attains the same maximal value.
Then equality holds in the conditional-entropy inequality:
\begin{equation*}
H(X_r\mid X_0^{r-1})=H(X_r\mid X_1^{r-1}).
\end{equation*}
Since $X_1^{r-1}$ is a function of $X_0^{r-1}$, the difference between these two entropies is the conditional mutual information of $X_0$ and $X_r$ given $X_1^{r-1}$, namely
\begin{equation*}
H(X_r\mid X_1^{r-1})-H(X_r\mid X_0^{r-1})=I(X_0,X_r\mid X_1^{r-1}).
\end{equation*}
Hence equality holds if and only if
\begin{equation*}
I(X_0,X_r\mid X_1^{r-1})=0,
\end{equation*}
that is, if and only if
\begin{equation*}
X_r\perp X_0\mid X_1^{r-1}.
\end{equation*}

Now fix $c=(c_1,\dots,c_r)\in A^r$ with $\mu(c)>0$ and let $a\in A$.
By the chain rule and the conditional independence just established,
\begin{equation*}u(c,a)
=
\mathbb{P}(X_0^{r-1}=c, X_r=a)
=
\mathbb{P}(X_0^{r-1}=c)\,\mathbb{P}(X_r=a\mid X_0^{r-1}=c)
\end{equation*}
\begin{equation*}
=
\mu(c)\,\mathbb{P}(X_r=a\mid X_1^{r-1}=c_2,\dots,c_r)
=
\mu(c)\,q(a\mid c_2,\dots,c_r)
=u^\star(c,a).
\end{equation*}
Thus $u(c,a)=u^\star(c,a)$ for every $c\in A^r$ with $\mu(c)>0$ and every $a\in A$.
If $\mu(c)=0$, then feasibility gives $\sum_{a\in A}u(c,a)=0$.
Since every coordinate is nonnegative, this implies $u(c,a)=0$ for every $a\in A$.
By the definition of $u^\star$ one also has $u^\star(c,a)=0$ for every $a\in A$.
Therefore $u=u^\star$ on all coordinates. 
\end{proof}

\begin{corollary}[Gap and conditional mutual information]
\label{cor:gap-cmi}
Under the hypotheses of \cref{thm:fixed-r-block-markov}, define
\begin{equation*}
\Delta_\mu(u):=H_\mu(X_r\mid X_1^{r-1})-J(u).
\end{equation*}
Then, for every $u\in U(\mu)$,
\begin{equation*}
\Delta_\mu(u)=I_u(X_0,X_r\mid X_1^{r-1})\ge 0.
\end{equation*}
Moreover,
\begin{equation*}
\Delta_\mu(u)=0
\quad\Longleftrightarrow\quad u=u^\star.
\end{equation*}
\end{corollary}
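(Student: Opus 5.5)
The plan is to read the corollary directly off the proof of \cref{thm:fixed-r-block-markov}, making its two implicit steps explicit. Fix $u\in U(\mu)$ and let $(X_t)$ be a stationary process with $(r+1)$-block law $u$. Such a process exists: take the $r$-step Markov chain with stationary initial law $\mu$ and kernel $p_u$, using stationary consistency. Only the law of $(X_0,\dots,X_r)$, which is $u$, enters the argument.

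\emph{Step 1: rewrite $\Delta_\mu$.} By the feasibility constraints, the law of $X_0^{r-1}$ is $\mu$, and by stationary consistency the law of $X_1^r$ is also $\mu$. Hence $H_\mu(X_r\mid X_1^{r-1})=H_u(X_r\mid X_1^{r-1})$, and $J(u)=H_u(X_r\mid X_0^{r-1})$. For $r=1$ the conditioning on $X_1^{r-1}$ is trivial, following the convention in \cref{thm:fixed-r-block-markov}.

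\emph{Step 2: identify the difference.} Since $X_0^{r-1}=(X_0,X_1^{r-1})$, the standard identity
\begin{equation*}
H(X_r\mid X_1^{r-1})-H(X_r\mid X_0,X_1^{r-1})=I(X_0,X_r\mid X_1^{r-1})
\end{equation*}
from \cite{cover-thomas2006} gives $\Delta_\mu(u)=I_u(X_0,X_r\mid X_1^{r-1})$. Nonnegativity follows because conditional mutual information is an average of Kullback--Leibler divergences.

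\emph{Step 3: the equality case.} If $\Delta_\mu(u)=0$, then $J(u)=H_\mu(X_r\mid X_1^{r-1})$. By the upper bound in \cref{thm:fixed-r-block-markov}, this value is the maximum of $J$ on $U(\mu)$, so $u$ is a maximizer. The uniqueness part of that theorem then gives $u=u^\star$. Conversely, the theorem computes $J(u^\star)=H_\mu(X_r\mid X_1^{r-1})$, so $\Delta_\mu(u^\star)=0$.

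The only point needing care is Step 1: the conditioning $r$-block $X_1^r$ must carry the law $\mu$, so that $H_\mu$ genuinely equals $H_u$. This follows from stationary consistency, as in the theorem. Everything else is bookkeeping on top of \cref{thm:fixed-r-block-markov}.
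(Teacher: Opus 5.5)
Your proposal is correct and follows the paper's proof essentially step for step: you identify $H_\mu(X_r\mid X_1^{r-1})$ with $H_u(X_r\mid X_1^{r-1})$ via stationary consistency, then apply the conditional mutual information identity for $X_0^{r-1}=(X_0,X_1^{r-1})$. The only cosmetic difference is in the equality case: you argue that $\Delta_\mu(u)=0$ makes $u$ a maximizer, so $u=u^\star$ by uniqueness, whereas the paper appeals directly to the equivalence between $X_r\perp X_0\mid X_1^{r-1}$ and $u=u^\star$ established inside the proof of \cref{thm:fixed-r-block-markov}; both rest on that theorem.
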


\begin{proof}
Let $u\in U(\mu)$.
Because the law of $(X_1,\dots,X_r)$ is fixed and equal to $\mu$, one has
\begin{equation*}
H_\mu(X_r\mid X_1^{r-1})=H_u(X_r\mid X_1^{r-1}).
\end{equation*}
Using the definition of $J(u)$ from \cref{thm:fixed-r-block-markov},
\begin{equation*}
\Delta_\mu(u)=H_u(X_r\mid X_1^{r-1})-H_u(X_r\mid X_0^{r-1}).
\end{equation*}
Since $X_0^{r-1}=(X_0,X_1^{r-1})$, the defining identity for conditional mutual information gives
\begin{equation*}
I_u(X_0,X_r\mid X_1^{r-1})
=
H_u(X_r\mid X_1^{r-1})-H_u(X_r\mid X_0,X_1^{r-1})
=
H_u(X_r\mid X_1^{r-1})-H_u(X_r\mid X_0^{r-1}).
\end{equation*}
Therefore
\begin{equation*}
\Delta_\mu(u)=I_u(X_0,X_r\mid X_1^{r-1})\ge 0.
\end{equation*}

Finally, $\Delta_\mu(u)=0$ if and only if $I_u(X_0,X_r\mid X_1^{r-1})=0$.
This is equivalent to
\begin{equation*}
X_r\perp X_0\mid X_1^{r-1}.
\end{equation*}
By the uniqueness statement already proved in \cref{thm:fixed-r-block-markov}, that conditional independence relation is equivalent to $u=u^\star$.
Hence $\Delta_\mu(u)=0$ if and only if $u=u^\star$.
\end{proof}

The experiment-induced linear restrictions are
\begin{equation}\label{eq:moment-constraints}
\sum_{c,a}u(c,a)G_j(c,a)=b_j,
\qquad j=1,\dots,m.
\end{equation}
The stationarity-consistency constraints can be written as
\begin{equation}\label{eq:stationarity-constraints}
\sum_{a\in A}u(c,a)-\sum_{a\in A}u(a,c)=0,
\qquad c\in A^r,
\end{equation}
where $u(a,c)$ denotes the block whose suffix of length $r$ is $c$.

For a block $(c,a)\in A^{r+1}$, let $\sigma(c,a)$ denote its suffix of length $r$.

\begin{proposition}[Optimality conditions]\label{prop:kkt}
Assume the unique maximizer $u^\star$ lies in the relative interior of a fixed-support face
of the feasible set. Then there exist multipliers $\lambda_1,\dots,\lambda_m$, a scalar
$\gamma$, and stationarity multipliers $\psi(c)$, $c\in A^r$, such that for each active
coordinate $(c,a)$,
\begin{equation*}
\log\frac{u^\star(c,a)}{\eta_{u^\star}(c)}
=
-\gamma
-\sum_{j=1}^m \lambda_j G_j(c,a)
-\psi(c)
+\psi(\sigma(c,a)).
\end{equation*}
Equivalently,
\begin{equation*}u^\star(c,a)=\eta_{u^\star}(c)
\exp\!\Bigl(
-\gamma
-\sum_{j=1}^m \lambda_j G_j(c,a)
-\psi(c)
+\psi(\sigma(c,a))
\Bigr).
\end{equation*}
\end{proposition}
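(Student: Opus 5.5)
The plan is to treat the problem, restricted to the face containing $u^\star$, as a smooth maximization over an affine subspace, and then apply the elementary Lagrange multiplier rule for linear equality constraints. Let $S\subseteq A^{r+1}$ be the support of $u^\star$, the set of active coordinates. I will take the fixed-support face to be $F_S:=\{u\in\mathcal U_\Pi(\nu): u(c,a)=0 \text{ for } (c,a)\notin S\}$. Write $V_S$ for the affine subspace of $\mathbb R^S$ cut out by three sets of constraints restricted to $S$: normalization $\sum_{(c,a)\in S}u(c,a)=1$, the moment constraints \eqref{eq:moment-constraints}, and the stationarity constraints \eqref{eq:stationarity-constraints}.

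The first step is to use (L1). Since $u^\star$ lies in the relative interior of $F_S$ and $u^\star>0$ on $S$, there is a neighbourhood $N$ of $u^\star$ in $V_S$ on which all coordinates indexed by $S$ stay positive. Hence $N\subseteq F_S\subseteq\mathcal U_\Pi(\nu)$, so $u^\star$ is a local maximizer of $J$ on the affine set $V_S$ with no inequality constraints active. For every $(c,a)\in S$ we have $\eta_{u^\star}(c)\ge u^\star(c,a)>0$, so $J$ is smooth near $u^\star$ as a function of the coordinates in $S$. Uniqueness of $u^\star$ is not needed here; only local maximality is used.

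The second step computes the gradient. Using $J(u)=-\sum u\log u+\sum_c\eta_u(c)\log\eta_u(c)$ from the proof of \cref{thm:existence}, the derivative with respect to an active coordinate is
\begin{equation*}
\frac{\partial J}{\partial u(c,a)}=-\log u(c,a)-1+\log\eta_u(c)+1=-\log\frac{u(c,a)}{\eta_u(c)}.
\end{equation*}
Local maximality on $V_S$ means $\nabla J(u^\star)$ annihilates the tangent space $T=\ker M_S$, where $M_S$ is the constraint matrix restricted to the columns in $S$. Because all constraints are linear, no constraint qualification is needed: finite-dimensional linear algebra gives $(\ker M_S)^\perp=\operatorname{row}(M_S)$. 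Therefore $\nabla J(u^\star)$ is a linear combination of the rows of $M_S$.

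The third step identifies those rows coordinatewise. The normalization row is the constant $1$, and the moment rows are $G_j(c,a)$. The stationarity row indexed by $c'\in A^r$ has entry $\mathbf 1\{c=c'\}-\mathbf 1\{\sigma(c,a)=c'\}$ at $(c,a)$. A combination of the stationarity rows with coefficients $\psi(c')$ therefore contributes $\psi(c)-\psi(\sigma(c,a))$ at $(c,a)$. So there exist scalars with
\begin{equation*}
-\log\frac{u^\star(c,a)}{\eta_{u^\star}(c)}=\gamma+\sum_j\lambda_jG_j(c,a)+\psi(c)-\psi(\sigma(c,a)),\qquad (c,a)\in S,
\end{equation*}
which is the first claimed identity. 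Exponentiating and multiplying by $\eta_{u^\star}(c)>0$ gives the second. The signs of the multipliers are free, so any sign convention can be absorbed into $\gamma$, $\lambda_j$ and $\psi$.

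The step I expect to need the most care is the first one: checking that relative interiority of the face really yields two-sided feasible perturbations in every direction of $\ker M_S$. This requires identifying the face with the support-restricted affine slice $V_S$ intersected with positivity, so that no inactive coordinate is forced to move. If the paper's notion of face is coarser, I would first replace it by the minimal face containing $u^\star$, which is exactly $F_S$ for $S=\operatorname{supp}u^\star$.
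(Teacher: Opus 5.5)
Your proposal is correct and follows essentially the same route as the paper. Both restrict to the support face, where $J$ is smooth, use $\partial J/\partial u(c,a)=\log\eta_u(c)-\log u(c,a)$, and read off multipliers for the normalization row, the moment rows $G_j$, and the stationarity rows, whose entry at $(c,a)$ is $\mathbf 1\{c=c'\}-\mathbf 1\{\sigma(c,a)=c'\}$.

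Your argument via $(\ker M_S)^\perp=\operatorname{row}(M_S)$ is a little cleaner than the paper's appeal to ``the usual Lagrange-multiplier conditions.'' It needs no constraint qualification and is unaffected by the linear dependence among the stationarity equations. You are also right that uniqueness of $u^\star$ is never used.
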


\begin{proof}
Restrict the feasible set to the affine slice determined by the linear experiment restrictions, the
normalization constraint, and the stationarity equations, and then to the fixed-support face on
which $u^\star(c,a)>0$. On that relative interior, $J$ is $C^1$, so the usual finite-dimensional
Lagrange-multiplier conditions apply.

Consider the Lagrangian
\begin{equation*}
L(u,\lambda,\psi,\gamma)
=
J(u)
-\sum_{j=1}^m \lambda_j\Bigl(\sum_{c,a}u(c,a)G_j(c,a)-b_j\Bigr)
-\gamma\Bigl(\sum_{c,a}u(c,a)-1\Bigr)
-\sum_{c\in A^r}\psi(c)\Bigl(\sum_a u(c,a)-\sum_a u(a,c)\Bigr).
\end{equation*}
Since
\begin{equation*}
J(u)= -\sum_{c,a}u(c,a)\log u(c,a)+\sum_c \eta_u(c)\log\eta_u(c),
\qquad
\eta_u(c)=\sum_a u(c,a),
\end{equation*}
its derivative with respect to an active coordinate $u(c,a)$ is
\begin{equation*}
\frac{\partial}{\partial u(c,a)}
\Bigl[-\sum_{c',a'}u(c',a')\log u(c',a')\Bigr]
=-(\log u(c,a)+1),
\end{equation*}
while
\begin{equation*}
\frac{\partial}{\partial u(c,a)}
\Bigl[\sum_{c'} \eta_u(c')\log\eta_u(c')\Bigr]
=\log\eta_u(c)+1.
\end{equation*}
Hence
\begin{equation*}
\frac{\partial J}{\partial u(c,a)}=\log\eta_u(c)-\log u(c,a).
\end{equation*}

At the interior maximizer $u^\star$, the first-order condition on the fixed-support face reads
\begin{equation*}
0
=
\log\eta_{u^\star}(c)-\log u^\star(c,a)
-\sum_{j=1}^m \lambda_j G_j(c,a)
-\gamma
-\psi(c)
+\psi(\sigma(c,a)).
\end{equation*}
Rearranging and exponentiating gives the result on the active support.
\end{proof}

\begin{corollary}[Kernel representation on the active support]\label{cor:kernel}
On the active support where $\eta_{u^\star}(c)>0$, the induced kernel
\begin{equation*}
p^\star(a\mid c):=\frac{u^\star(c,a)}{\eta_{u^\star}(c)}
\end{equation*}
satisfies
\begin{equation*}
p^\star(a\mid c)
\propto
\exp\!\Bigl(
-\sum_{j=1}^m \lambda_j G_j(c,a)+\psi(\sigma(c,a))
\Bigr),
\qquad a\in A.
\end{equation*}
Equivalently,
\begin{equation*}
p^\star(a\mid c)=
\frac{\exp\!\Bigl(-\sum_{j=1}^m \lambda_j G_j(c,a)+\psi(\sigma(c,a))\Bigr)}
{\sum_{\beta\in A}\exp\!\Bigl(-\sum_{j=1}^m \lambda_j G_j(c,\beta)+\psi(\sigma(c,\beta))\Bigr)}.
\end{equation*}
If, in addition, the term $\psi(\sigma(c,a))$ is constant in $a$ for each fixed context $c$, then this simplifies to the familiar rowwise exponential-family formula
\begin{equation*}
p^\star(a\mid c)=
\frac{\exp\!\bigl(-\sum_{j=1}^m \lambda_j G_j(c,a)\bigr)}
{\sum_{\beta\in A}\exp\!\bigl(-\sum_{j=1}^m \lambda_j G_j(c,\beta)\bigr)}.
\end{equation*}
\end{corollary}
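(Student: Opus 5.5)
The plan is to derive the corollary directly from the pointwise identity of \cref{prop:kkt}, dividing by the row mass and normalizing over the symbol variable. Fix an active context $c$, meaning $\eta_{u^\star}(c)>0$. For every active coordinate $(c,a)$, \cref{prop:kkt} gives
\begin{equation*}
p^\star(a\mid c)=\frac{u^\star(c,a)}{\eta_{u^\star}(c)}
=\exp\!\bigl(-\gamma-\psi(c)\bigr)\,
\exp\!\Bigl(-\sum_{j=1}^m \lambda_j G_j(c,a)+\psi(\sigma(c,a))\Bigr).
\end{equation*}
The prefactor $\exp(-\gamma-\psi(c))$ depends only on $c$, not on $a$. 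This is exactly the asserted proportionality in $a$.

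To pass from proportionality to the explicit normalized formula, I sum over $a$. Since $\sum_a p^\star(a\mid c)=1$, I get
\begin{equation*}
\exp(-\gamma-\psi(c))=\Bigl(\sum_\beta \exp\bigl(-\textstyle\sum_j\lambda_jG_j(c,\beta)+\psi(\sigma(c,\beta))\bigr)\Bigr)^{-1},
\end{equation*}
and substituting this back yields the displayed quotient.

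The one point needing care, and the main obstacle, is that \cref{prop:kkt} only speaks about active coordinates, whereas the quotient sums over all $\beta\in A$. I will handle this through the standing hypothesis (L1): $u^\star$ lies in the relative interior of a fixed-support face. I interpret the statement as holding on that face, with the active support taken rowwise, so that within an active row every coordinate $(c,\beta)$ is active and the sums range over the active symbols. If some row contained inactive symbols, I would state explicitly that $\beta$ ranges over the active symbols of row $c$, and that $p^\star(a\mid c)=0$ for the remaining symbols. Either way the identity holds verbatim on the active support, which is the scope claimed.

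For the simplification, suppose $\psi(\sigma(c,a))=\kappa(c)$ for all $a$. Then the factor $e^{\kappa(c)}$ appears in every term of both the numerator and the denominator and cancels, leaving the rowwise exponential-family formula.
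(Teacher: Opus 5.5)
Your proposal is correct and follows the paper's route. The paper gives no separate argument: the corollary is read off from the identity in \cref{prop:kkt}, since the prefactor $e^{-\gamma-\psi(c)}$ depends only on $c$, and normalizing over $a$ gives the quotient. Your caveat about inactive symbols is well taken. Note, however, that (L1) alone does not force an active row to be fully supported. It is your fallback that makes the displayed quotient literally correct: restrict $\beta$ to the active symbols of row $c$, with $p^\star(a\mid c)=0$ for the remaining symbols.
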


\begin{remark}
In the general formula, the stationarity multipliers enter through $\psi(\sigma(c,a))$, coupling rows across future contexts. The rowwise exponential-family form appears only in the special case of \cref{cor:kernel}.
\end{remark}

\section{Local Geometry and the Gap Functional}

Throughout this section, assume \textnormal{(L1)--(L2)} and work on a fixed-support face. Let $(b,\xi)\mapsto u(b,\xi)$ denote a smooth local chart near $u^\star=u(b_0,\xi_0)$, where $b$ denotes retained observable coordinates and $\xi$ denotes fiber coordinates. Define
\begin{equation}\label{eq:local-objective}
\widetilde J(b,\xi):=J(u(b,\xi)).
\end{equation}
The selector map is then
\begin{equation}\label{eq:selector-map}
s(b)\in\arg\max_{\xi}\widetilde J(b,\xi).
\end{equation}

\paragraph{Intrinsic tangent space on a fixed face.}
Fix an fixed-support face $\mathfrak F\subseteq \Delta(A^{r+1})$ and let $\mathcal A_{\mathfrak F}$ denote the affine space of stationary-consistent block laws supported on $\mathfrak F$ and satisfying the retained observable constraints. Its translation space is
\begin{equation}\label{eq:tangent-space}
\mathrm{Tan}(\mathcal A_{\mathfrak F})
=
\left\{h\in \mathbb R^{A^{r+1}}:\;
\begin{aligned}
&\operatorname{supp}(h)\subseteq \mathfrak F,\quad
\textstyle\sum_{c,a}h(c,a)=0,\\
&\textstyle\sum_{c,a}h(c,a)G_j(c,a)=0\quad(j=1,\dots,m),\\
&\textstyle\sum_{a\in A}h(c,a)=\sum_{a\in A}h(a,c)\quad(c\in A^r)
\end{aligned}
\right\}.
\end{equation}
In particular, any local fiber coordinate $\xi$ on a fixed face may be chosen along a basis of this tangent space.

\begin{proof}
Inside a fixed face, the feasible set is cut out by the normalization constraint, the retained moment equalities, and the stationarity-consistency equations, all of which are affine-linear in the block-law coordinates. The translation space of that affine slice is therefore exactly the set of perturbations satisfying the corresponding homogeneous linear equations. The support restriction records that one remains on the same fixed-support face.
\end{proof}

\paragraph{Restricted Hessian in block-law coordinates.}
Let $u$ be strictly positive on the fixed face $\mathfrak F$, and let $h\in \mathrm{Tan}(\mathcal A_{\mathfrak F})$. Then the second variation of the entropy-rate functional along $h$ is
\begin{equation}\label{eq:hessian}
D^2J(u)[h,h]
=
-\sum_{c\in A^r}\sum_{a\in A}\frac{h(c,a)^2}{u(c,a)}
+\sum_{c\in A^r}\frac{\bigl(\sum_{a\in A}h(c,a)\bigr)^2}{\eta_u(c)}.
\end{equation}
Consequently $D^2J(u)[h,h]\le 0$ for every such $h$. If, in addition, $\sum_{a\in A}h(c,a)=0$ for every context $c$ and $h\neq 0$, then $D^2J(u)[h,h]<0$.

\begin{proof}
Write
\begin{equation*}
J(u)= -\sum_{c,a}u(c,a)\log u(c,a)+\sum_c \eta_u(c)\log\eta_u(c),
\qquad \eta_u(c)=\sum_a u(c,a).
\end{equation*}
For $u_t:=u+th$, differentiation gives
\begin{equation*}
\frac{d^2}{dt^2}\Big|_{t=0}\Bigl[-\sum_{c,a}u_t(c,a)\log u_t(c,a)\Bigr]
=-\sum_{c,a}\frac{h(c,a)^2}{u(c,a)},
\end{equation*}
while
\begin{equation*}
\frac{d^2}{dt^2}\Big|_{t=0}\Bigl[\sum_c \eta_{u_t}(c)\log\eta_{u_t}(c)\Bigr]
=\sum_c \frac{\bigl(\sum_a h(c,a)\bigr)^2}{\eta_u(c)}.
\end{equation*}
Adding the two contributions yields the displayed formula.

For each fixed context $c$, Cauchy--Schwarz gives
\begin{equation*}
\Bigl(\sum_a h(c,a)\Bigr)^2
\le
\Bigl(\sum_a u(c,a)\Bigr)
\Bigl(\sum_a \frac{h(c,a)^2}{u(c,a)}\Bigr)
=
\eta_u(c)\sum_a \frac{h(c,a)^2}{u(c,a)}.
\end{equation*}
Summing over $c$ proves that $D^2J(u)[h,h]\le 0$. If in addition each row sum $\sum_a h(c,a)$ vanishes, then the positive term disappears and one obtains
\begin{equation*}
D^2J(u)[h,h]= -\sum_{c,a}\frac{h(c,a)^2}{u(c,a)}<0
\end{equation*}
for every nonzero $h$.
\end{proof}

The fiber Hessian in \cref{thm:selector-diff,thm:gap} is the restriction of $D^2J(u)$ to the tangent directions selected by the local coordinates. Concavity is automatic on each fixed face; strict negative definiteness is an additional hypothesis.

\begin{proposition}[Null directions of the restricted Hessian]\label{prop:hessian-null}
Let $u$ be strictly positive on the fixed face $\mathfrak F$, and let $h\in \mathrm{Tan}(\mathcal A_{\mathfrak F})$. Then the following are equivalent:
\begin{enumerate}[label=(\roman*),leftmargin=2.5em]
\item $D^2J(u)[h,h]=0$.
\item For every context $c\in A^r$ there exists a scalar $\alpha(c)\in \mathbb R$ such that
\begin{equation*}
h(c,a)=\alpha(c)u(c,a)
\qquad\text{for all }a\in A.
\end{equation*}
\end{enumerate}
In particular, the null directions of the restricted Hessian are exactly the row-rescaling directions that preserve each conditional law $p_u(\cdot\mid c)$.
\end{proposition}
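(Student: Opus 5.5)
The plan is to start from the Hessian formula \eqref{eq:hessian} and split it into a sum of per-context contributions. Each contribution will turn out to be nonpositive, and the quadratic form vanishes exactly when every contribution vanishes. For each context $c$ set
\begin{equation*}
Q_c(h):=\sum_{a\in A}\frac{h(c,a)^2}{u(c,a)}-\frac{\bigl(\sum_{a\in A}h(c,a)\bigr)^2}{\eta_u(c)},
\end{equation*}
where the sum over $a$ runs over the active symbols. The support restriction $\operatorname{supp}(h)\subseteq\mathfrak F$ ensures that $h(c,a)=0$ wherever $u(c,a)=0$. Rows that lie entirely outside the face contribute nothing and can be handled trivially. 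With this notation $D^2J(u)[h,h]=-\sum_c Q_c(h)$. The Cauchy--Schwarz step already used in the proof of \eqref{eq:hessian} shows that $Q_c(h)\ge 0$ for every $c$, so (i) holds if and only if $Q_c(h)=0$ for every active context $c$.

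Next comes the equality case in Cauchy--Schwarz, which is the only substantive step. For each active row, write $h(c,a)=\sqrt{u(c,a)}\cdot\bigl(h(c,a)/\sqrt{u(c,a)}\bigr)$. Equality holds exactly when the vector $\bigl(h(c,a)/\sqrt{u(c,a)}\bigr)_a$ is proportional to $\bigl(\sqrt{u(c,a)}\bigr)_a$. Since $u(c,\cdot)$ is positive on the active row, that vector is nonzero, so the proportionality can be written as $h(c,a)=\alpha(c)u(c,a)$ for some scalar $\alpha(c)$. Extending to inactive coordinates, where both sides are zero, gives (ii). I will write out this equality-case argument carefully rather than cite it, including the degenerate case $h(c,\cdot)=0$, which corresponds to $\alpha(c)=0$. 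For inactive contexts with $\eta_u(c)=0$, the whole row of $h$ vanishes and $\alpha(c)$ can be chosen arbitrarily.

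For the converse, substitute $h(c,a)=\alpha(c)u(c,a)$ directly into $Q_c$. This gives $\alpha(c)^2\eta_u(c)-\alpha(c)^2\eta_u(c)^2/\eta_u(c)=0$, hence $D^2J(u)[h,h]=0$.

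For the interpretive statement at the end, observe that along $u+th$ with $h$ of the form in (ii), row $c$ becomes $(1+t\alpha(c))u(c,\cdot)$ and the row mass becomes $(1+t\alpha(c))\eta_u(c)$. Their ratio $p_u(\cdot\mid c)$ is therefore unchanged, so these are exactly the row-rescaling directions. This matches the flat directions identified in \cref{prop:concavity-equality} and \cref{thm:uniqueness-beyond-a3}. I expect the only delicate point to be bookkeeping the support: making sure that division by $u(c,a)$ occurs only on $\mathfrak F$, and that $\alpha(c)$ is well defined on rows with partial support. No use of the tangent-space constraints beyond the support condition is needed.
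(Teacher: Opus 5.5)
Your proposal is correct and follows the paper's proof essentially step for step: you split $D^2J(u)[h,h]$ into per-context terms, use Cauchy--Schwarz to show each term has a sign, use its equality case to get $h(c,\cdot)=\alpha(c)u(c,\cdot)$, and use the same rescaling computation to show $p_u(\cdot\mid c)$ is preserved. Your explicit treatment of partially supported rows and rows with $\eta_u(c)=0$, and your direct-substitution converse, are slightly more careful than the paper, which leaves that support bookkeeping implicit.
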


\begin{proof}
By the restricted-Hessian formula proved above,
\begin{equation*}
D^2J(u)[h,h]
=
-\sum_{c\in A^r}\Biggl[
\sum_{a\in A}\frac{h(c,a)^2}{u(c,a)}
-
\frac{\bigl(\sum_{a\in A}h(c,a)\bigr)^2}{\eta_u(c)}
\Biggr].
\end{equation*}
For each fixed context $c$, set
\begin{equation*}
S_c:=\sum_{a\in A}\frac{h(c,a)^2}{u(c,a)}
-
\frac{\bigl(\sum_{a\in A}h(c,a)\bigr)^2}{\eta_u(c)}.
\end{equation*}
The Cauchy--Schwarz inequality applied to the vectors
\begin{equation*}
\Bigl(\frac{h(c,a)}{\sqrt{u(c,a)}}\Bigr)_{a\in A}
\qquad\text{and}\qquad
\bigl(\sqrt{u(c,a)}\bigr)_{a\in A}
\end{equation*}
gives
\begin{equation*}
\Bigl(\sum_{a\in A}h(c,a)\Bigr)^2
\le
\eta_u(c)\sum_{a\in A}\frac{h(c,a)^2}{u(c,a)},
\end{equation*}
so each $S_c\ge 0$. Therefore
\begin{equation*}
D^2J(u)[h,h]= -\sum_{c\in A^r}S_c\le 0.
\end{equation*}
Hence $D^2J(u)[h,h]=0$ if and only if $S_c=0$ for every context $c$.

Fix a context $c$. Equality in Cauchy--Schwarz holds if and only if the two vectors displayed above are linearly dependent. Since $u(c,a)>0$ on the fixed face, this is equivalent to the existence of a scalar $\alpha(c)$ such that
\begin{equation*}
\frac{h(c,a)}{\sqrt{u(c,a)}}=\alpha(c)\sqrt{u(c,a)}
\qquad\text{for all }a\in A,
\end{equation*}
which is exactly
\begin{equation*}
h(c,a)=\alpha(c)u(c,a)
\qquad\text{for all }a\in A.
\end{equation*}
Thus $S_c=0$ if and only if the row $h(c,\cdot)$ is proportional to the row $u(c,\cdot)$. Since this must hold for every context, statements \textnormal{(i)} and \textnormal{(ii)} are equivalent.

Finally, if $h(c,a)=\alpha(c)u(c,a)$ for all $a$, then for every sufficiently small $t$ such that $u+th$ remains in the face,
\begin{equation*}
p_{u+th}(a\mid c)
=
\frac{u(c,a)+th(c,a)}{\eta_u(c)+t\sum_a h(c,a)}
=
\frac{(1+t\alpha(c))u(c,a)}{(1+t\alpha(c))\eta_u(c)}
=
p_u(a\mid c).
\end{equation*}
Therefore these directions change only the row masses and preserve each conditional law.
\end{proof}

\begin{theorem}[Strict concavity on a fixed face]\label{thm:affine-slice-strict}
Let $K$ be a convex subset of $\mathcal A_{\mathfrak F}\cap \operatorname{relint}(\mathfrak F)$. Assume that for every $u\in K$ and every nonzero direction $h\in \mathrm{Tan}(\mathcal A_{\mathfrak F})$, there exists a context $c\in A^r$ such that the row $h(c,\cdot)$ is not proportional to the row $u(c,\cdot)$. Then $J$ is strictly concave on $K$. In particular, if $K$ is nonempty and compact, then $J$ has a unique maximizer on $K$.
\end{theorem}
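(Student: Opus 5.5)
The plan is to reduce strict concavity on $K$ to a one-variable statement along segments, and then apply the null-direction characterization in \cref{prop:hessian-null}. Fix distinct $u,v\in K$ and set $h:=v-u$. Both points lie in the affine space $\mathcal A_{\mathfrak F}$, so $h$ is a nonzero element of $\mathrm{Tan}(\mathcal A_{\mathfrak F})$. Since $K$ is convex, $u_t:=u+th\in K$ for every $t\in[0,1]$, and every $u_t$ is strictly positive on $\mathfrak F$. Define $\varphi(t):=J(u_t)$. On the relative interior of the face, $J$ is smooth: the formula $J(u)=-\sum u\log u+\sum_c\eta_u\log\eta_u$ only involves logarithms of positive quantities along the segment. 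Hence $\varphi\in C^2([0,1])$, and
\begin{equation*}
\varphi''(t)=D^2J(u_t)[h,h].
\end{equation*}

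By the restricted-Hessian formula, $\varphi''(t)\le 0$ for all $t$. By \cref{prop:hessian-null}, $\varphi''(t)=0$ holds exactly when every row $h(c,\cdot)$ is proportional to $u_t(c,\cdot)$. The hypothesis of the theorem, applied at the point $u_t\in K$ with the nonzero direction $h$, rules this out. Therefore $\varphi''(t)<0$ for every $t\in[0,1]$. A $C^2$ function with strictly negative second derivative on an interval is strictly concave there. This gives
\begin{equation*}
J((1-t)u+tv)>(1-t)J(u)+tJ(v)\qquad\text{for all } t\in(0,1).
\end{equation*}
Since $u,v$ were arbitrary, $J$ is strictly concave on $K$.

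For the final claim: $J$ is continuous on $K$ (see the proof of \cref{thm:existence}), so if $K$ is nonempty and compact, a maximizer exists. Strict concavity on the convex set $K$ then gives uniqueness. Indeed, if $u\ne v$ were both maximizers with value $M$, their midpoint would lie in $K$ and have value strictly greater than $M$, a contradiction.

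The main obstacle is justifying that the hypothesis applies at every point of the segment, not just at its endpoints. This is exactly why it is phrased for every $u\in K$, and why convexity of $K$ is needed. A second point to check is regularity: along the segment, all coordinates on $\mathfrak F$ stay positive, so the Hessian formula applies at every $t$ and no boundary degeneracy arises. An alternative route would be to invoke \cref{prop:concavity-equality} directly. Rowwise proportionality of $u$ and $v$ would force $h(c,\cdot)=v(c,\cdot)-u(c,\cdot)$ to be proportional to $u(c,\cdot)$ for every $c$, which contradicts the hypothesis at $u$. That argument gives strict concavity without any differentiation, and I would include it as a cross-check.
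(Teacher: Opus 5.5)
Your proof is correct and is essentially the paper's argument: restrict $J$ to the segment $u+t(v-u)$ inside $K$, then use the restricted-Hessian formula together with \cref{prop:hessian-null} at each point $u_t\in K$ to get $\varphi''(t)<0$, and obtain uniqueness from continuity, compactness and strict concavity (working on the closed interval $[0,1]$ handles the endpoints slightly more cleanly than the paper's open-interval phrasing). Your cross-check via \cref{prop:concavity-equality} is also valid and a bit sharper: on a common positive-support face, rowwise proportionality of $u$ and $v$ makes $h=v-u$ rowwise proportional to $u$, so that route needs the hypothesis only at the endpoint $u$ rather than along the whole segment.
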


\begin{proof}
Let $u,v\in K$ be distinct, and define
\begin{equation*}
h:=v-u.
\end{equation*}
Because $K\subseteq \mathcal A_{\mathfrak F}$ and $\mathcal A_{\mathfrak F}$ is affine with translation space $\mathrm{Tan}(\mathcal A_{\mathfrak F})$, one has
\begin{equation*}
h\in \mathrm{Tan}(\mathcal A_{\mathfrak F})
\qquad\text{and}\qquad h\ne 0.
\end{equation*}
For each $t\in(0,1)$, set
\begin{equation*}
\nu_t:=(1-t)u+tv.
\end{equation*}
Since $K$ is convex, $\nu_t\in K\subseteq \mathcal A_{\mathfrak F}\cap \operatorname{relint}(\mathfrak F)$ for every $t\in(0,1)$. Consider the one-variable function
\begin{equation*}
\varphi(t):=J(\nu_t).
\end{equation*}
Because $J$ is $C^2$ on the relative interior of the fixed face, $\varphi$ is twice continuously differentiable on $(0,1)$ and
\begin{equation*}
\varphi''(t)=D^2J(\nu_t)[h,h].
\end{equation*}
By hypothesis, for each $t\in(0,1)$ the nonzero vector $h$ is not rowwise proportional to $\nu_t$ in every context. Therefore \cref{prop:hessian-null} yields
\begin{equation*}
D^2J(\nu_t)[h,h]<0
\qquad\text{for every }t\in(0,1).
\end{equation*}
Hence $\varphi$ is strictly concave on $(0,1)$. In particular,
\begin{equation*}
\varphi(t)>(1-t)\varphi(0)+t\varphi(1)
\qquad\text{for every }t\in(0,1),
\end{equation*}
that is,
\begin{equation*}
J((1-t)u+tv)>(1-t)J(u)+tJ(v)
\qquad\text{for all distinct }u,v\in K,\ t\in(0,1).
\end{equation*}
Thus $J$ is strictly concave on $K$.

If $K$ is nonempty and compact, continuity of $J$ on $K$ implies that $J$ attains its maximum on $K$. Strict concavity on a convex set gives uniqueness of that maximizer.
\end{proof}

\begin{remark}
\cref{thm:affine-slice-strict} is the fixed-face analogue of \cref{thm:uniqueness-beyond-a3}: the global theorem is phrased in terms of pairs of feasible block laws, the fixed-face theorem in terms of the nullspace of the restricted Hessian.
\end{remark}

\begin{theorem}[Differentiability of the maximizer]\label{thm:selector-diff}
Assume $\widetilde J$ is $C^2$ near $(b_0,\xi_0)$, where $\xi_0=s(b_0)$, and that the chart is
 taken on a neighborhood with fixed active support. Assume also that $\partial_\xi \widetilde J(b_0,\xi_0)=0$ and that the fiber Hessian
\begin{equation*}
\partial_{\xi\xi}^2\widetilde J(b_0,\xi_0)
\end{equation*}
is negative definite. Then there exists a neighborhood of $b_0$ in which the local selector is uniquely
defined as the unique local maximizer on the fixed face, it is differentiable, and
\begin{equation*}
Ds(b_0)=
-\bigl[\partial_{\xi\xi}^2\widetilde J(b_0,\xi_0)\bigr]^{-1}
\partial_{\xi b}^2\widetilde J(b_0,\xi_0).
\end{equation*}
\end{theorem}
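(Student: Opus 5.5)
The plan is to apply the implicit function theorem to the first-order condition in the fiber variable. Consider the map
\begin{equation*}
F(b,\xi):=\partial_\xi \widetilde J(b,\xi),
\end{equation*}
defined on a neighborhood of $(b_0,\xi_0)$ where the chart stays on the fixed active-support face. Since $\widetilde J$ is $C^2$, $F$ is $C^1$. We have $F(b_0,\xi_0)=0$ by hypothesis, and $\partial_\xi F(b_0,\xi_0)=\partial^2_{\xi\xi}\widetilde J(b_0,\xi_0)$ is negative definite, hence invertible. The implicit function theorem then gives neighborhoods $V\ni b_0$ and $W\ni\xi_0$ and a $C^1$ map $\tilde s:V\to W$ such that, for $(b,\xi)\in V\times W$, we have $F(b,\xi)=0$ if and only if $\xi=\tilde s(b)$.

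Next I would show that $\tilde s(b)$ is the unique local maximizer in $W$. Negative definiteness is an open condition and the fiber Hessian depends continuously on $(b,\xi)$. So after shrinking $V$ and taking $W$ to be a convex ball around $\xi_0$, the Hessian $\partial^2_{\xi\xi}\widetilde J(b,\xi)$ is negative definite on all of $V\times W$. Hence $\xi\mapsto\widetilde J(b,\xi)$ is strictly concave on $W$ for each $b\in V$. A strictly concave function on a convex open set has at most one critical point, and that point is its unique maximizer there. Since $\tilde s(b)$ is a critical point, it is the unique maximizer in $W$, so the local selector $s$ is well defined on $V$ and equals $\tilde s$. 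Staying inside $W$ also keeps us on the fixed face, as in (L1)--(L2). The existence results are not needed here: everything is local. In this fixed-face setting the negative-definiteness hypothesis plays the role that the row-proportionality condition of \cref{thm:affine-slice-strict} and \cref{prop:hessian-null} plays globally.

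Finally, the derivative formula comes from differentiating the identity $\partial_\xi\widetilde J(b,s(b))=0$ at $b_0$, which gives
\begin{equation*}
\partial^2_{\xi b}\widetilde J(b_0,\xi_0)+\partial^2_{\xi\xi}\widetilde J(b_0,\xi_0)\,Ds(b_0)=0.
\end{equation*}
Solving this using invertibility of the fiber Hessian yields the stated formula for $Ds(b_0)$.

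The main obstacle is the uniqueness step: the implicit function theorem alone gives only a unique critical point near $\xi_0$, not a unique maximizer. The uniform-concavity argument on a convex neighborhood closes this gap, but it requires shrinking the neighborhoods carefully so that they remain inside the chart and the fixed face.
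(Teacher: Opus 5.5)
Your proposal is correct and follows essentially the same route as the paper: apply the implicit function theorem to $F(b,\xi)=\partial_\xi\widetilde J(b,\xi)$, use continuity of the fiber Hessian to upgrade the critical point $s(b)$ to the unique local maximizer on the fixed face, and differentiate $F(b,s(b))=0$ at $b_0$ to get the formula for $Ds(b_0)$. Your strict-concavity argument on a convex ball $W$ (with $V$ shrunk so that $\tilde s(V)\subset W$) is a more explicit version of the uniqueness step that the paper states in a single sentence.
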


\begin{proof}
Define
\begin{equation*}
F(b,\xi):=\partial_\xi \widetilde J(b,\xi).
\end{equation*}
Because $\widetilde J$ is $C^2$, the map $F$ is $C^1$ near $(b_0,\xi_0)$. The selector is
characterized in this chart by the first-order condition
\begin{equation*}
F(b,s(b))=0.
\end{equation*}
The Jacobian of $F$ with respect to $\xi$ is
\begin{equation*}
\partial_\xi F(b_0,\xi_0)=\partial_{\xi\xi}^2\widetilde J(b_0,\xi_0),
\end{equation*}
which is nonsingular by hypothesis. Therefore the implicit function theorem yields a unique local
$C^1$ critical-point map $s$ near $b_0$. Because $\partial_{\\xi\\xi}^2\\widetilde J(b_0,\\xi_0)$ is negative definite, continuity of the Hessian implies that, after shrinking the neighborhood if necessary, the critical point $s(b)$ remains a strict local maximizer and is the unique local maximizer on the fixed face near $\\xi_0$. Because the chart is taken on a neighborhood with fixed
active support, no support change occurs inside this parametrization, so the theorem is purely local
on one fixed face of the feasible set.

Differentiating the identity $F(b,s(b))=0$ with respect to $b$ at $b_0$ gives
\begin{equation*}
\partial_{\xi b}^2\widetilde J(b_0,\xi_0)
+
\partial_{\xi\xi}^2\widetilde J(b_0,\xi_0)Ds(b_0)=0,
\end{equation*}
and solving for $Ds(b_0)$ proves the formula.
\end{proof}

Define the optimized value function
\begin{equation}\label{eq:value-function}
V(b):=\widetilde J(b,s(b)).
\end{equation}

\begin{proposition}[Envelope identities]
At the selected point,
\begin{equation*}
DV(b_0)=\partial_b\widetilde J(b_0,\xi_0).
\end{equation*}
Moreover,
\begin{equation*}
D^2V(b_0)
=
\partial_{bb}^2\widetilde J(b_0,\xi_0)
-
\partial_{b\xi}^2\widetilde J(b_0,\xi_0)
\bigl[\partial_{\xi\xi}^2\widetilde J(b_0,\xi_0)\bigr]^{-1}
\partial_{\xi b}^2\widetilde J(b_0,\xi_0).
\end{equation*}
\end{proposition}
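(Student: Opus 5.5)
The plan is to apply the chain rule to $V(b)=\widetilde J(b,s(b))$, using the first-order condition $\partial_\xi\widetilde J(b,s(b))=0$ together with the derivative formula for $s$ from \cref{thm:selector-diff}. The hypotheses of that theorem are taken to be in force: $\widetilde J$ is $C^2$, $\partial_\xi\widetilde J(b_0,\xi_0)=0$, and the fiber Hessian is negative definite. Under these hypotheses $s$ is a well-defined $C^1$ map near $b_0$ and satisfies $F(b,s(b))=0$ identically in a neighborhood of $b_0$, where $F=\partial_\xi\widetilde J$.

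\emph{First-order identity.} I differentiate $V$ by the chain rule:
\begin{equation*}
DV(b)=\partial_b\widetilde J(b,s(b))+\partial_\xi\widetilde J(b,s(b))\,Ds(b).
\end{equation*}
The second term vanishes for every $b$ near $b_0$, because $s(b)$ is a critical point in $\xi$. This gives $DV(b)=\partial_b\widetilde J(b,s(b))$ on the whole neighborhood, and evaluating at $b_0$ yields the first claim.

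\emph{Second-order identity.} The identity $DV(b)=\partial_b\widetilde J(b,s(b))$ has a $C^1$ right-hand side, since $\widetilde J$ is $C^2$ and $s$ is $C^1$. So $V$ is $C^2$, and differentiating once more gives
\begin{equation*}
D^2V(b_0)=\partial^2_{bb}\widetilde J(b_0,\xi_0)+\partial^2_{b\xi}\widetilde J(b_0,\xi_0)\,Ds(b_0).
\end{equation*}
Substituting $Ds(b_0)=-[\partial^2_{\xi\xi}\widetilde J]^{-1}\partial^2_{\xi b}\widetilde J$ from \cref{thm:selector-diff} produces the stated Schur-complement formula.

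The main obstacle is bookkeeping rather than substance. One has to keep the index conventions consistent: $\partial^2_{b\xi}$ is the transpose of $\partial^2_{\xi b}$, and the symmetry of mixed partials for a $C^2$ function ensures the product is well formed with matching dimensions. One also has to justify using the vanishing of $\partial_\xi\widetilde J$ on a whole neighborhood, not just at $b_0$, before differentiating a second time. That is exactly what the implicit-function conclusion of \cref{thm:selector-diff} provides.
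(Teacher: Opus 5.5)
Your proposal is correct and follows the paper's proof: you apply the chain rule, drop the $\partial_\xi\widetilde J$ term by the selector's first-order condition, then differentiate again and substitute $Ds(b_0)$ from \cref{thm:selector-diff}. Your remarks that the first-order condition holds on a whole neighborhood of $b_0$ and that $V$ is $C^2$ make explicit what the paper leaves implicit.
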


\begin{proof}
By the chain rule,
\begin{equation*}
DV(b)=\partial_b\widetilde J(b,s(b))
+
\partial_\xi\widetilde J(b,s(b))Ds(b).
\end{equation*}
The second term vanishes by the selector first-order condition, which proves the first identity.
Differentiating once more and substituting the formula for $Ds(b_0)$ yields the second identity.
\end{proof}

\begin{definition}[Gap functional]
Define
\begin{equation*}
\Delta(b,\xi):=\widetilde J(b,s(b))-\widetilde J(b,\xi)\ge 0.
\end{equation*}
\end{definition}

\begin{theorem}[Quadratic expansion of the gap]\label{thm:gap}
Assume $\widetilde J$ is $C^2$ near $(b_0,\xi_0)$, where $\xi_0=s(b_0)$, and work in the
fixed active-support chart introduced above. Let
\begin{equation*}
K(b_0):=-\partial_{\xi\xi}^2\widetilde J(b_0,\xi_0),
\end{equation*}
and assume that $K(b_0)$ is positive definite. Then
\begin{equation*}
\Delta(b,\xi)
=
\frac12(\xi-s(b))^\top K(b_0)(\xi-s(b))
+
o(\|\xi-s(b)\|^2)
\end{equation*}
as $(b,\xi)\to(b_0,\xi_0)$ within this chart.
\end{theorem}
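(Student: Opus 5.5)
The plan is a second-order Taylor expansion of $\widetilde J(b,\cdot)$ around the selected point $s(b)$, using the first-order condition to remove the linear term. First, the setting is that of \cref{thm:selector-diff}. Since $K(b_0)$ is positive definite, the fiber Hessian $\partial^2_{\xi\xi}\widetilde J(b_0,\xi_0)$ is negative definite. I will also use $\partial_\xi\widetilde J(b_0,\xi_0)=0$, the interior first-order condition defining $s(b_0)$ on the fixed face. With these, \cref{thm:selector-diff} gives a neighborhood $B$ of $b_0$ on which $s$ is $C^1$, satisfies $\partial_\xi\widetilde J(b,s(b))=0$, and $s(b)\to\xi_0$ as $b\to b_0$. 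In particular $\Delta(b,\xi)=\widetilde J(b,s(b))-\widetilde J(b,\xi)$ is well defined on a product neighborhood, and the asymptotic regime $(b,\xi)\to(b_0,\xi_0)$ forces both $\xi$ and $s(b)$ into a small ball around $\xi_0$.

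Next, for fixed $b$ I expand $\xi\mapsto\widetilde J(b,\xi)$ around $s(b)$ using Taylor's formula with integral remainder. Writing $\delta:=\xi-s(b)$,
\begin{equation*}
\widetilde J(b,\xi)=\widetilde J(b,s(b))+\partial_\xi\widetilde J(b,s(b))\,\delta+\int_0^1(1-\tau)\,\delta^\top\partial^2_{\xi\xi}\widetilde J\bigl(b,s(b)+\tau\delta\bigr)\,\delta\,d\tau.
\end{equation*}
The linear term vanishes by the first-order condition. The segment from $s(b)$ to $\xi$ stays inside the chart domain for $(b,\xi)$ close enough, so
\begin{equation*}
\Delta(b,\xi)=-\int_0^1(1-\tau)\,\delta^\top\partial^2_{\xi\xi}\widetilde J\bigl(b,s(b)+\tau\delta\bigr)\,\delta\,d\tau .
\end{equation*}

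The final step, and the only real point of care, is to replace the Hessian along the segment by its value at $(b_0,\xi_0)$ uniformly. The bound needed is $\sup_{\tau\in[0,1]}\|\partial^2_{\xi\xi}\widetilde J(b,s(b)+\tau\delta)-\partial^2_{\xi\xi}\widetilde J(b_0,\xi_0)\|\to0$ as $(b,\xi)\to(b_0,\xi_0)$. This holds because every point $(b,s(b)+\tau\delta)$ lies within distance $\|b-b_0\|+\|s(b)-\xi_0\|+\|\xi-\xi_0\|$ of $(b_0,\xi_0)$, and $\partial^2_{\xi\xi}\widetilde J$ is continuous there by the $C^2$ hypothesis. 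Combined with $\int_0^1(1-\tau)\,d\tau=\tfrac12$, this gives
\begin{equation*}
\Delta(b,\xi)=\tfrac12\,\delta^\top K(b_0)\,\delta+\varepsilon(b,\xi)\|\delta\|^2,\qquad \varepsilon(b,\xi)\to0,
\end{equation*}
which is the claimed expansion.

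The expected obstacle is making sure that the $o(\|\xi-s(b)\|^2)$ is uniform in the joint limit and not merely for fixed $b$. This is handled by the continuity of $s$ from \cref{thm:selector-diff} together with the sup estimate above. As a by-product, positive definiteness of $K(b_0)$ gives $\Delta\ge0$ locally, consistent with the definition of the gap functional.
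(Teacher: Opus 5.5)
Your proposal is correct and follows essentially the same route as the paper: Taylor-expand $\xi\mapsto\widetilde J(b,\xi)$ about $s(b)$, remove the linear term with the first-order condition $\partial_\xi\widetilde J(b,s(b))=0$, and replace the fiber Hessian by $-K(b_0)$ using continuity of $\partial^2_{\xi\xi}\widetilde J$ and of $s$. Your integral-form remainder with the supremum estimate along the segment is a useful refinement. It makes explicit that the $o(\|\xi-s(b)\|^2)$ term is uniform in the joint limit $(b,\xi)\to(b_0,\xi_0)$, which the paper's fixed-$b$ Peano remainder leaves implicit.
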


\begin{proof}
Fix $b$ near $b_0$ and write $\delta:=\xi-s(b)$. Applying Taylor expansion to the map
$\xi\mapsto \widetilde J(b,\xi)$ around $\xi=s(b)$ gives
\begin{equation*}
\widetilde J(b,\xi)
=
\widetilde J(b,s(b))
+
\partial_\xi\widetilde J(b,s(b))\delta
+
\frac12\delta^\top \partial_{\xi\xi}^2\widetilde J(b,s(b))\delta
+
o(\|\delta\|^2).
\end{equation*}
Because $s(b)$ is the local selector in this chart, the first-order term vanishes:
\begin{equation*}
\partial_\xi\widetilde J(b,s(b))=0.
\end{equation*}
Hence
\begin{equation*}
\Delta(b,\xi)
=
-\frac12\delta^\top \partial_{\xi\xi}^2\widetilde J(b,s(b))\delta
+
o(\|\delta\|^2).
\end{equation*}

Now add and subtract $K(b_0)$:
\begin{equation*}
-\partial_{\xi\xi}^2\widetilde J(b,s(b))
=
K(b_0)+\Bigl[-\partial_{\xi\xi}^2\widetilde J(b,s(b))-K(b_0)\Bigr].
\end{equation*}
Since $\widetilde J$ is $C^2$ and $s$ is continuous, the bracketed term tends to zero as
$(b,\xi)\to(b_0,\xi_0)$ within the fixed chart. Therefore
\begin{equation*}
\delta^\top
\Bigl[-\partial_{\xi\xi}^2\widetilde J(b,s(b))-K(b_0)\Bigr]
\delta
=
o(\|\delta\|^2),
\end{equation*}
which yields
\begin{equation*}
\Delta(b,\xi)
=
\frac12\delta^\top K(b_0)\delta
+
o(\|\delta\|^2).
\end{equation*}
This is the claimed expansion.
\end{proof}

\section{Hidden Realizations}

Let $u^\star$ be a selected stationary $(r+1)$-block law with induced active-support kernel $p^\star(a\mid c):=u^\star(c,a)/\eta_{u^\star}(c)$.

\begin{theorem}[Random-mapping realization]
Let $u^\star$ be a selected stationary $(r+1)$-block law, and let $p^\star(a\mid c)$ denote the induced kernel on the active contexts $(c\in A^r\text{ with }\eta_{u^\star}(c)>0)$. Then there exist a measurable map
\begin{equation*}
F:A^r\times [0,1)\to A,
\end{equation*}
an independent and identically distributed (i.i.d.)\ sequence $U_t\sim \text{uniform on }[0,1)$, and an initial context
\begin{equation*}
Y_{-r}^{-1}\sim \eta_{u^\star},
\end{equation*}
independent of $(U_t)_{t\ge 0}$, such that the recursion
\begin{equation*}
Y_t=F(Y_{t-r}^{t-1},U_t),
\qquad t\ge 0,
\end{equation*}
defines a stationary one-sided order-$r$ Markov chain whose stationary $(r+1)$-block law is $u^\star$. In particular, it admits the two-sided stationary extension with the same block law.
\end{theorem}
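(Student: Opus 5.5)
The plan is to build $F$ by the inverse-CDF construction, applied row by row. Fix an enumeration $A=\{a_1,\dots,a_{|A|}\}$. For an active context $c$ (one with $\eta_{u^\star}(c)>0$), put $\kappa_0(c)=0$ and $\kappa_k(c)=\sum_{i\le k}p^\star(a_i\mid c)$, and set
\begin{equation*}
F(c,x)=a_k\quad\text{for }x\in[\kappa_{k-1}(c),\kappa_k(c)).
\end{equation*}
For an inactive context, let $F(c,\cdot)$ be any fixed measurable rule, for instance $F(c,x)=a_1$. The map $F$ is measurable because $A^r$ is finite and each section $x\mapsto F(c,x)$ is a step function. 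Since $U$ is uniform on $[0,1)$, this gives $\mathbb P(F(c,U)=a)=p^\star(a\mid c)$ for every active $c$. Draw $Y_{-r}^{-1}\sim\eta_{u^\star}$ independently of the i.i.d.\ sequence $(U_t)$ and run the recursion. Because $U_t$ is independent of $(Y_{-r},\dots,Y_{t-1})$, which is a function of the initial context and $U_0,\dots,U_{t-1}$, the process $Y$ is an order-$r$ Markov chain with kernel $F(c,U)$. On active contexts this kernel is $p^\star$.

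The main step is stationarity, proved by induction on $t$. The claim is that the law of each window $Y_{t-r}^{t-1}$ is $\eta_{u^\star}$ and that the law of $(Y_{t-r}^{t-1},Y_t)$ is $u^\star$. Suppose the window has law $\eta_{u^\star}$. With probability one it lies in an active context, so
\begin{equation*}
\mathbb P(Y_{t-r}^{t-1}=c,\,Y_t=a)=\eta_{u^\star}(c)\,p^\star(a\mid c)=u^\star(c,a)
\end{equation*}
by \eqref{eq:block-kernel}. Inactive rows contribute nothing: on them both sides vanish, because $u^\star(c,\cdot)=0$ whenever $\eta_{u^\star}(c)=0$. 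The next window $Y_{t-r+1}^{t}$ has the right $r$-block marginal of $u^\star$ as its law. Stationary consistency says this equals the left marginal $\eta_{u^\star}$, which closes the induction. This is where the hypothesis that $u^\star$ is stationary-consistent, i.e.\ feasible, gets used. It is also the step needing the most care: one must check that the arbitrary choice of $F$ on inactive contexts never matters, since those contexts are almost surely never visited.

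Once every window has the same law and the transition mechanism does not depend on time, the Markov property gives the joint law of $(Y_{t},\dots,Y_{t+n})$ as $\eta_{u^\star}$ times the product of kernels, which is independent of $t$. So the one-sided chain is stationary with $(r+1)$-block law $u^\star$. The two-sided extension then follows from the Kolmogorov extension theorem. The finite-dimensional laws of the shifted windows are consistent and shift-invariant, so they define a stationary law on $A^{\mathbb Z}$ with the same block law. Equivalently, one can run the same recursion from time $-n$ and let $n\to\infty$.
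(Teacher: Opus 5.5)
Your proposal is correct and follows the paper's proof. Both use the same interval-partition (inverse-CDF) construction of $F$ with $|I_{c,a}|=p^\star(a\mid c)$, allow an arbitrary choice on inactive contexts, and use stationary consistency of $u^\star$ to show that $\eta_{u^\star}$ is invariant for the induced context chain; your explicit induction on the window laws and your Kolmogorov-extension argument for the two-sided process fill in steps the paper only asserts.
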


\begin{proof}
For each context $c\in A^r$, partition $[0,1)$ into half-open intervals
\begin{equation*}
\{I_{c,a}:a\in A\}
\end{equation*}
with lengths
\begin{equation*}
|I_{c,a}|=p^\star(a\mid c).
\end{equation*}
Define
\begin{equation*}
F(c,u)=a
\qquad\text{if }u\in I_{c,a}.
\end{equation*}
Then, for every active context $c$,
\begin{equation*}
P(Y_t=a\mid Y_{t-r}^{t-1}=c)=P(U_t\in I_{c,a})=|I_{c,a}|=p^\star(a\mid c).
\end{equation*}
Thus the recursion has transition kernel $p^\star$.

It remains to identify the stationary law. By definition,
\begin{equation*}
u^\star(c,a)=\eta_{u^\star}(c)p^\star(a\mid c).
\end{equation*}
Because $u^\star$ is stationary-consistent, the context marginal $\eta_{u^\star}$ is invariant for
the induced context chain on $A^r$. Choosing the initial context $Y_{-r}^{-1}$ with law
$\eta_{u^\star}$ therefore makes the resulting order-$r$ Markov chain stationary, and its
stationary $(r+1)$-block law is exactly $u^\star$. For contexts outside the active support,
$F(c,\cdot)$ may be chosen arbitrarily without affecting the realized stationary law.
\end{proof}

\begin{theorem}[Invariance under hidden measure-preserving actions]
Let $(Z_t)$ be a hidden state process, let $(U_t)$ be an independent and identically distributed (i.i.d.)\ sequence of
$\text{uniform on }[0,1)$ variables independent of $(Z_t)$, and let
\begin{equation*}
T_z:[0,1)\to[0,1)
\end{equation*}
be measurable and measure-preserving for each hidden state $z$. Define
\begin{equation*}
\varepsilon_t:=T_{Z_t}(U_t),
\qquad
Y_t:=F(Y_{t-r}^{t-1},\varepsilon_t).
\end{equation*}
Let
\begin{equation*}
\mathcal G_t:=\sigma(Z_s:s\le t),
\qquad
\mathcal F_{t-1}:=\sigma(Y_s:s\le t-1).
\end{equation*}
Then
\begin{equation*}
\mathcal L(\varepsilon_t\mid \mathcal G_t\vee \mathcal F_{t-1})=\text{uniform on }[0,1).
\end{equation*}
Consequently,
\begin{equation*}
\mathcal L(\varepsilon_t\mid \mathcal G_t)=\text{uniform on }[0,1),
\qquad
\mathcal L(\varepsilon_t\mid \mathcal F_{t-1})=\text{uniform on }[0,1),
\end{equation*}
and the induced visible kernel remains $p^\star$.
\end{theorem}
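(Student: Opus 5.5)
The plan is to compute the conditional law of $\varepsilon_t$ by conditioning first on the richer $\sigma$-field $\mathcal G_t\vee\mathcal F_{t-1}\vee\sigma(\text{initial context})$. We then show that, given this $\sigma$-field, $U_t$ is still uniform and independent. The key structural observation is the information structure of the recursion. The past visible variables $Y_s$, $s\le t-1$, are measurable functions of the initial context $Y_{-r}^{-1}$ and of $\varepsilon_0,\dots,\varepsilon_{t-1}$. Each $\varepsilon_s=T_{Z_s}(U_s)$ is in turn a function of $(Z_s,U_s)$. Hence $\mathcal G_t\vee\mathcal F_{t-1}$ is contained in $\mathcal H_t:=\sigma(Z_s:s\le t)\vee\sigma(Y_{-r}^{-1})\vee\sigma(U_s:s\le t-1)$. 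Here I will assume, as in the preceding random-mapping realization, that the initial context is independent of $(U_s)_{s\ge0}$; I would also take it independent of $U$ jointly with $Z$, stating this explicitly if needed. By the standing independence of $(U_t)$ from $(Z_t)$ and the i.i.d.\ property, $U_t$ is then independent of $\mathcal H_t$.

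Next I compute $\mathcal L(\varepsilon_t\mid\mathcal H_t)$. Since $Z_t$ is $\mathcal H_t$-measurable and $U_t\perp\mathcal H_t$, the freezing lemma (substitution for independent variables) gives, for bounded measurable $g$,
\begin{equation*}
\mathbb E\bigl[g(T_{Z_t}(U_t))\mid\mathcal H_t\bigr]=\Phi(Z_t),\qquad \Phi(z):=\int_0^1 g(T_z(u))\,du=\int_0^1 g(u)\,du,
\end{equation*}
where the last equality is exactly the measure-preserving property of $T_z$. For the freezing lemma to apply, $(z,u)\mapsto T_z(u)$ must be jointly measurable. This is automatic because the hidden alphabet $H$ is finite, so only finitely many maps $T_z$ occur. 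Thus the conditional law given $\mathcal H_t$ is uniform and nonrandom.

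Then I pass to the smaller $\sigma$-fields by the tower property. Since the conditional expectation given $\mathcal H_t$ is the constant $\int g$, conditioning further on any sub-$\sigma$-field $\mathcal K\subseteq\mathcal H_t$ returns the same constant. Taking $\mathcal K=\mathcal G_t\vee\mathcal F_{t-1}$, then $\mathcal G_t$, then $\mathcal F_{t-1}$ yields all three displayed claims. For the visible kernel, I condition on $\mathcal F_{t-1}$, on the event $Y_{t-r}^{t-1}=c$ with $c$ an active context. Since $\varepsilon_t$ is uniform and independent of $\mathcal F_{t-1}$, we get $P(Y_t=a\mid\mathcal F_{t-1})=P(\varepsilon_t\in I_{c,a})=p^\star(a\mid c)$, exactly as in the proof of the random-mapping theorem. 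The process is therefore the same order-$r$ Markov chain with kernel $p^\star$, and the stationary block law is still $u^\star$.

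The main obstacle is the measurability and independence bookkeeping in the first step. I need to verify that $\mathcal F_{t-1}$ really sits inside a $\sigma$-field independent of $U_t$; this is an induction on $t$ through the recursion. I also need to handle the case where $Z$ is two-sided, so that $\mathcal G_t$ involves infinitely many $Z_s$. Independence of the whole process $Z$ from $(U_s)$ covers the latter, so the inclusion is the only real check.
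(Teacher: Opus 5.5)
Your proposal is correct and follows essentially the same route as the paper. Both arguments freeze $Z_t$, use that $U_t$ is uniform and independent of the conditioning $\sigma$-field to obtain $\lambda(T_{Z_t}^{-1}(B))=\lambda(B)$, pass to $\mathcal G_t$ and $\mathcal F_{t-1}$ by the tower property, and read off the kernel $p^\star$. The one difference is that the paper simply asserts that $U_t$ is independent of $\mathcal G_t\vee\mathcal F_{t-1}$. Your enlargement to $\mathcal H_t$ actually justifies this, and it correctly flags that the justification needs the initial context and $(Z_t)$ to be \emph{jointly} independent of $(U_t)$, which is slightly more than the hypotheses the paper states.
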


\begin{proof}
For every Borel set $B\subseteq[0,1)$, define
\begin{equation*}
A_B:=T_{Z_t}^{-1}(B).
\end{equation*}
Then $A_B$ is $\mathcal G_t$-measurable, hence $\mathcal G_t\vee\mathcal F_{t-1}$-measurable. Since $U_t$ is independent of $\mathcal G_t\vee\mathcal F_{t-1}$ and is uniform on $[0,1)$, the conditional-expectation identity for indicators of random measurable sets yields
\begin{equation*}
P\bigl(U_t\in A_B\mid \mathcal G_t\vee \mathcal F_{t-1}\bigr)=\lambda(A_B)
\qquad\text{a.s.}
\end{equation*}
Because $\varepsilon_t=T_{Z_t}(U_t)$, this becomes
\begin{equation*}
P(\varepsilon_t\in B\mid \mathcal G_t\vee \mathcal F_{t-1})
=\lambda\bigl(T_{Z_t}^{-1}(B)\bigr).
\end{equation*}
Since each $T_z$ preserves Lebesgue measure,
\begin{equation*}
\lambda\bigl(T_{Z_t}^{-1}(B)\bigr)=\lambda(B).
\end{equation*}
Therefore
\begin{equation*}
\mathcal L(\varepsilon_t\mid \mathcal G_t\vee \mathcal F_{t-1})=\text{uniform on }[0,1).
\end{equation*}
The two displayed marginal conditional-law statements follow immediately by the tower property.

Finally, for every context $c\in A^r$ and symbol $a\in A$,
\begin{equation*}
P(Y_t=a\mid Y_{t-r}^{t-1}=c)
=
P(\varepsilon_t\in I_{c,a}\mid Y_{t-r}^{t-1}=c)
=
|I_{c,a}|=p^\star(a\mid c).
\end{equation*}
Thus hidden structural action is compatible with no change in the visible law while leaving the visible
transition kernel unchanged.
\end{proof}

\begin{remark}
The theorem shows that hidden structure may act on the primitive shocks in a measure-preserving way while remaining invisible at the level of the conditional visible law. In particular, this section proves a realizability and invariance statement for the selected visible kernel, not a realization theorem inside a prescribed hidden observational fiber.
\end{remark}

\section{Empirical Maximizers and Consistency}

Let
\begin{equation*}
Y_0,\dots,Y_n
\end{equation*}
be one observed visible sample path on the finite alphabet $A$. For $c\in A^r$ and $a\in A$,
define the empirical $(r+1)$-block frequency
\begin{equation}\label{eq:empirical-block}
\hat u_n(c,a):=\frac{1}{n-r}\sum_{t=r}^{n-1}\mathbf 1\{Y_{t-r}^{t-1}=c,\ Y_t=a\}.
\end{equation}
Its empirical context marginal is
\begin{equation*}
\hat\eta_n(c):=\sum_{a\in A}\hat u_n(c,a).
\end{equation*}
If the experiment-induced features are $G_1,\dots,G_m$, define the empirical targets by
\begin{equation*}
\hat b_{n,j}:=\sum_{c,a}\hat u_n(c,a)G_j(c,a),
\qquad j=1,\dots,m.
\end{equation*}

For a deterministic target vector $b\in\mathbb R^m$, write
\begin{equation*}
\mathcal U(b):=
\Bigl\{u\text{ on }A^{r+1}:u\text{ is stationary-consistent and }
\sum_{c,a}u(c,a)G_j(c,a)=b_j,\ j=1,\dots,m
\Bigr\}.
\end{equation*}
Then the empirical feasible class is
\begin{equation*}
\hat{\mathcal U}_n=\mathcal U(\hat b_n).
\end{equation*}
Define the empirical criterion
\begin{equation*}
\hat J_n(u):=-\sum_{c,a}u(c,a)\log\frac{u(c,a)}{\eta_u(c)}.
\end{equation*}
Because the objective depends only on $u$ and not explicitly on $n$, one may equivalently write
\begin{equation*}
\hat J_n(u)=J(u)
\qquad\text{on }\hat{\mathcal U}_n.
\end{equation*}
Let
\begin{equation}\label{eq:empirical-selector}
\hat u_n^\star\in\arg\max_{u\in\hat{\mathcal U}_n}J(u)
\end{equation}
be any empirical selector.

\begin{proposition}[Empirical block-frequency convergence]\label{prop:block-freq}
Assume the true visible process is stationary, irreducible, and $r$-step Markov on the finite
alphabet $A$. Then for each $(c,a)\in A^{r+1}$,
\begin{equation*}
\hat u_n(c,a)\to u_0(c,a)
\qquad\text{almost surely,}
\end{equation*}
where $u_0$ denotes the true stationary $(r+1)$-block law. Consequently,
\begin{equation*}
\hat\eta_n(c)\to \eta_{u_0}(c)
\qquad\text{almost surely}
\end{equation*}
for every context $c\in A^r$.
\end{proposition}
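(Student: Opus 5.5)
The approach is to reduce the statement to the ergodic theorem for a finite-state Markov chain on contexts. I would lift the visible process to the context chain $W_t:=Y_{t-r}^{t-1}\in A^r$. Because $(Y_t)$ is $r$-step Markov, $(W_t)$ is a first-order Markov chain on the finite set $A^r$. Its transitions are $c\mapsto\sigma(c,a)$ with probability $p_0(a\mid c)$, where $p_0(a\mid c)=u_0(c,a)/\eta_{u_0}(c)$ on active contexts. The stationary law of $(W_t)$ is $\eta_{u_0}$, since $u_0$ is stationary-consistent.

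The next step is to lift once more, to the pair chain $V_t:=(W_t,Y_t)=Y_{t-r}^{t}\in A^{r+1}$. This is again a stationary Markov chain on a finite state space, and its stationary law is exactly $u_0$. Moreover, $\hat u_n(c,a)$ is precisely the time average
\begin{equation*}
\frac{1}{n-r}\sum_{t=r}^{n-1}\mathbf 1\{V_t=(c,a)\},
\end{equation*}
that is, the empirical occupation frequency of the state $(c,a)$. The $a.s.$ convergence $\hat u_n(c,a)\to u_0(c,a)$ then follows from the strong law of large numbers for positive-recurrent Markov chains, or equivalently from Birkhoff's ergodic theorem applied to a stationary ergodic process. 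The second claim, $\hat\eta_n(c)\to\eta_{u_0}(c)$, follows by summing finitely many almost-sure limits over $a\in A$, using linearity of $u\mapsto\eta_u$.

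The main obstacle is to show that irreducibility of the visible process, read as irreducibility of the context chain on its support, transfers to ergodicity of the pair chain $V_t$. I would restrict the state space of $V_t$ to $\operatorname{supp}(u_0)$. Take $(c,a)$ and $(c',a')$ in the support. Irreducibility of $W$ gives a path from $\sigma(c,a)$ to $c'$ with positive probability, and $p_0(a'\mid c')>0$ then completes a path to $(c',a')$. So $V$ is irreducible on this finite set, hence positive recurrent with unique stationary law $u_0$, and the Markov-chain ergodic theorem applies for every initial state in the support. If the intended meaning is only that the stationary process is ergodic, I would instead invoke Birkhoff directly: $V_t$ is a finite-window factor of an ergodic stationary process, so it is itself stationary and ergodic. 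In either reading, the normalization $1/(n-r)$ in place of $1/n$ changes nothing asymptotically, and no aperiodicity is needed because only time averages are involved.
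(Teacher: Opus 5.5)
Your proposal is correct and follows the same route as the paper, which invokes the ergodic theorem for finite-state Markov chains on the bounded block indicators and then sums over $a\in A$. Your explicit lift to the $(r+1)$-block chain $V_t=Y_{t-r}^{t}$, together with the check that irreducibility of the context chain passes to $V$ on $\operatorname{supp}(u_0)$, supplies details the paper leaves implicit, but the argument is the same.
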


\begin{proof}
For each fixed block $(c,a)$, the indicator
\begin{equation*}
\mathbf 1\{Y_{t-r}^{t-1}=c,\ Y_t=a\}
\end{equation*}
is bounded. The ergodic theorem for finite-state Markov chains therefore gives
\begin{equation*}
\hat u_n(c,a)\to u_0(c,a)
\qquad\text{almost surely}
\end{equation*}
for each $(c,a)$. Summing over $a\in A$ yields the convergence of the empirical context
marginals.
\end{proof}

\begin{proposition}[Empirical feature convergence]\label{prop:feature-convergence}
Assume the hypotheses of \cref{prop:block-freq}. Then for each $j=1,\dots,m$,
\begin{equation*}
\hat b_{n,j}\to b_{0,j}:=\sum_{c,a}u_0(c,a)G_j(c,a)
\qquad\text{almost surely.}
\end{equation*}
Equivalently,
\begin{equation*}
\hat b_n\to b_0
\qquad\text{almost surely.}
\end{equation*}
\end{proposition}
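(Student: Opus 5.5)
The plan is to obtain this as a direct corollary of \cref{prop:block-freq}, using only that $\hat b_n$ is a fixed linear image of the empirical block-frequency vector $\hat u_n$.

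Fix $j\in\{1,\dots,m\}$. By definition,
\begin{equation*}
\hat b_{n,j}=\sum_{c,a}\hat u_n(c,a)G_j(c,a),
\end{equation*}
which is a finite linear combination, over the finite index set $A^{r+1}$, of the coordinates $\hat u_n(c,a)$. The coefficients $G_j(c,a)$ are deterministic real numbers that do not depend on $n$.

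\cref{prop:block-freq} gives, for each $(c,a)$, an event $\Omega_{c,a}$ of probability one on which $\hat u_n(c,a)\to u_0(c,a)$. Since $A^{r+1}$ is finite, the intersection $\Omega_0:=\bigcap_{(c,a)}\Omega_{c,a}$ still has probability one. On $\Omega_0$, linearity of limits applied to this finite sum yields
\begin{equation*}
\hat b_{n,j}\to \sum_{c,a}u_0(c,a)G_j(c,a)=b_{0,j}.
\end{equation*}
The same event $\Omega_0$ serves for every $j$, so the convergence holds simultaneously for all $j=1,\dots,m$. This gives $\hat b_n\to b_0$ almost surely in $\mathbb R^m$, since coordinatewise convergence is equivalent to convergence in norm in finite dimension.

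There is no substantive obstacle. The only point needing care is that the almost-sure events must be intersected over finitely many blocks (and, trivially, over finitely many features), rather than treated separately. Finiteness of the alphabet and of the feature list makes this immediate.
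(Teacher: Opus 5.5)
Your proposal is correct and matches the paper's proof: write $\hat b_{n,j}-b_{0,j}$ as a finite linear combination of the differences $\hat u_n(c,a)-u_0(c,a)$ with fixed coefficients $G_j(c,a)$, then apply \cref{prop:block-freq} coordinatewise. The paper leaves implicit the step of intersecting the finitely many probability-one events, which you spell out.
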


\begin{proof}
Because $A^{r+1}$ is finite, each feature $G_j$ is bounded on $A^{r+1}$. Hence
\begin{equation*}
\hat b_{n,j}-b_{0,j}
=
\sum_{c,a}\bigl(\hat u_n(c,a)-u_0(c,a)\bigr)G_j(c,a),
\end{equation*}
and every summand converges almost surely to zero by \cref{prop:block-freq}. Therefore
$\hat b_{n,j}\to b_{0,j}$ almost surely for each $j$, and thus $\hat b_n\to b_0$ almost surely.
\end{proof}

Fix the fixed-support face
\begin{equation*}
\mathfrak F\subseteq \Delta(A^{r+1})
\end{equation*}
containing the selected point $u^\star$ in its relative interior, and let
\begin{equation*}
\mathcal A_{\mathfrak F}
\end{equation*}
denote the affine subspace of stationary-consistent block laws supported on $\mathfrak F$. Write
\begin{equation*}
\mathrm{Tan}(\mathcal A_{\mathfrak F})
\end{equation*}
for the translation space of the affine set $\mathcal A_{\mathfrak F}$. Define the linear moment map
\begin{equation*}
T:\mathcal A_{\mathfrak F}\to\mathbb R^m,
\qquad
T(u):=\Bigl(\sum_{c,a}u(c,a)G_j(c,a)\Bigr)_{j=1}^m.
\end{equation*}

\begin{lemma}[Local feasible continuation near the selected point]\label{lem:local-feasible}
Assume $u^\star\in \mathcal U(b_0)\cap \mathcal A_{\mathfrak F}$ lies in the relative interior of the
face $\mathfrak F$, and assume that the restriction of $T$ to the translation space of
$\mathcal A_{\mathfrak F}$ has full row rank. Then there exist a neighborhood $B$ of $b_0$ in
$\mathbb R^m$, a neighborhood $N$ of $u^\star$ in $\mathcal A_{\mathfrak F}$, and a constant $L>0$ such that
\begin{equation*}
\overline N\subset \operatorname{relint}(\mathfrak F),
\end{equation*}
where $\operatorname{relint}$ denotes relative interior, and for every $b\in B$ there exists a point
\begin{equation*}u(b)\in \mathcal U(b)\cap N
\end{equation*}
satisfying
\begin{equation*}
\|u(b)-u^\star\|\le L\|b-b_0\|.
\end{equation*}
In particular,
\begin{equation*}
\mathcal U(b)\cap N\neq\varnothing
\end{equation*}
for every $b\in B$.
\end{lemma}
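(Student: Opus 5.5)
The plan is to build the continuation explicitly through a right inverse of the linear moment map. Let $V:=\mathrm{Tan}(\mathcal A_{\mathfrak F})$ be the translation space of $\mathcal A_{\mathfrak F}$, and let $T_0:V\to\mathbb R^m$ be the linear part of $T$ restricted to $V$; this is just $h\mapsto(\sum_{c,a}h(c,a)G_j(c,a))_j$. The full-row-rank hypothesis says that $T_0$ is surjective onto $\mathbb R^m$. Hence there is a linear right inverse $R:\mathbb R^m\to V$ with $T_0R=\mathrm{id}$. For concreteness we take the Moore--Penrose choice $R=T_0^\top(T_0T_0^\top)^{-1}$, computed with respect to any fixed inner product on $V$. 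We then set $L:=\|R\|_{\mathrm{op}}$; if $R=0$ (only possible when $m=0$), any $L>0$ works.

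Next we define the candidate
\begin{equation*}
u(b):=u^\star+R(b-b_0).
\end{equation*}
Since $u^\star\in\mathcal A_{\mathfrak F}$ and $R(b-b_0)\in V$, the point $u(b)$ lies in $\mathcal A_{\mathfrak F}$. In particular it is stationary-consistent, its total mass is $1$ (if normalization is encoded in $\mathcal A_{\mathfrak F}$; otherwise we adjoin the constant feature to the $G_j$, which preserves the rank hypothesis on the relevant subspace), and it is supported on $\mathfrak F$. Linearity gives $T(u(b))=T(u^\star)+T_0R(b-b_0)=b_0+(b-b_0)=b$. The bound $\|u(b)-u^\star\|\le L\|b-b_0\|$ follows immediately from the definition of $L$.

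It remains to secure positivity, so that $u(b)$ is a genuine element of $\mathcal U(b)$ lying in $\operatorname{relint}(\mathfrak F)$. Because $u^\star$ is in the relative interior, $\delta:=\min_{(c,a)\in\mathfrak F}u^\star(c,a)>0$. We take $N:=\{u\in\mathcal A_{\mathfrak F}:\|u-u^\star\|_\infty<\delta/2\}$. Every point of $\overline N$ then has all active coordinates at least $\delta/2>0$, and all coordinates off $\mathfrak F$ equal to zero, so $\overline N\subset\operatorname{relint}(\mathfrak F)$. Finally we choose $B:=\{b:\|b-b_0\|<\delta/(2L')\}$, where $L'$ is the $\infty$-norm operator bound of $R$; the $\infty$-norm and the norm in the statement are equivalent on this finite-dimensional space, so this only adjusts constants. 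For $b\in B$ we get $u(b)\in N$, hence $u(b)\ge 0$ with total mass $1$, so $u(b)\in\mathcal U(b)\cap N$, and nonemptiness follows.

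The only delicate point is bookkeeping about which affine constraints are contained in $\mathcal A_{\mathfrak F}$, namely normalization and support, versus those carried by $T$. We must make sure that the rank hypothesis is applied to $T$ on exactly the space $V$ in which the perturbation lives, so that adding $R(b-b_0)$ changes neither stationarity, nor mass, nor support. Once that is fixed, the argument is purely linear algebra together with an openness argument.
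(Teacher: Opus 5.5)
Your proof is correct and follows the same route as the paper: a linear right inverse $R$ of the moment map on $\mathrm{Tan}(\mathcal A_{\mathfrak F})$, the affine continuation $u(b)=u^\star+R(b-b_0)$ with $L=\|R\|$, and a neighborhood of $u^\star$ whose closure stays in $\operatorname{relint}(\mathfrak F)$. Your explicit choice of $N$ via $\delta=\min_{\mathfrak F}u^\star$ and your handling of the degenerate case $R=0$ make precise what the paper only asserts; the parenthetical fallback about adjoining the constant feature is unnecessary, and its claim that the rank hypothesis survives is not automatic, but $\mathcal A_{\mathfrak F}$ consists of block laws and so already encodes normalization.
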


\begin{proof}
Because the restricted moment map has full row rank, there exists a linear right inverse
\begin{equation*}
R:\mathbb R^m\to \mathrm{Tan}(\mathcal A_{\mathfrak F})
\end{equation*}
with $TR=I_m$. Define
\begin{equation*}u(b):=u^\star+R(b-b_0).
\end{equation*}
Then $u(b)\in \mathcal A_{\mathfrak F}$ and
\begin{equation*}
T\bigl(u(b)\bigr)=T(u^\star)+TR(b-b_0)=b.
\end{equation*}
Thus $u(b)\in \mathcal U(b)\cap \mathcal A_{\mathfrak F}$ whenever it remains in the face.

Since $u^\star\in \operatorname{relint}(\mathfrak F)$, there exists a neighborhood $N$ of $u^\star$
in $\mathcal A_{\mathfrak F}$ whose compact closure satisfies
\begin{equation*}
\overline N\subset \operatorname{relint}(\mathfrak F).
\end{equation*}
Because $u(b)\to u^\star$ as $b\to b_0$, there exists a neighborhood $B$ of $b_0$ such that
$u(b)\in N$ for every $b\in B$. This gives the claim, with $L:=\|R\|$.
\end{proof}

\begin{theorem}[Local consistency on a fixed-support face]
Assume the hypotheses of \cref{prop:block-freq,prop:feature-convergence}. Assume also that:
\begin{enumerate}[label=(C\arabic*),leftmargin=2.5em]
\item the population selector $u^\star$ is the unique maximizer of $J$ over
\begin{equation*}
\mathcal U(b_0)\cap \mathcal A_{\mathfrak F},
\end{equation*}

\item $u^\star$ lies in the relative interior of the fixed-support face $\mathfrak F$,

\item the restricted moment map in \cref{lem:local-feasible} has full row rank.
\end{enumerate}
Then there exists a neighborhood $N$ of $u^\star$ in $\mathcal A_{\mathfrak F}$ with compact closure contained in $\operatorname{relint}(\mathfrak F)$ such that the following holds almost surely for all sufficiently large $n$: the compact local empirical feasible class $\hat{\mathcal U}_n\cap \overline N$ is nonempty, and every empirical maximizer
\begin{equation*}
\hat u_n^\star\in \arg\max_{u\in \hat{\mathcal U}_n\cap \overline N}J(u)
\end{equation*}
satisfies $\hat u_n^\star\to u^\star$ almost surely.
\end{theorem}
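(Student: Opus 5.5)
The argument is a standard "argmax continuity" (Berge-type) argument on a compact neighborhood, driven by the almost-sure convergence $\hat b_n\to b_0$ and the local feasible continuation from \cref{lem:local-feasible}.

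\emph{Setting up.} Take the neighborhood $N$ and the neighborhood $B$ of $b_0$ supplied by \cref{lem:local-feasible}. By hypothesis (C3) the restricted moment map has full row rank, and by (C2) $u^\star$ lies in the relative interior of $\mathfrak F$. The lemma then gives $\overline N\subset\operatorname{relint}(\mathfrak F)$, and a map $b\mapsto u(b)\in\mathcal U(b)\cap N$ with $\|u(b)-u^\star\|\le L\|b-b_0\|$. \Cref{prop:feature-convergence} gives $\hat b_n\to b_0$ almost surely. Fix an outcome in this full-measure event. For all large $n$ we have $\hat b_n\in B$, so $\hat{\mathcal U}_n\cap\overline N\ni u(\hat b_n)$ is nonempty.

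\emph{Compactness and existence.} The set $\hat{\mathcal U}_n\cap\overline N$ is compact: it is the intersection of a closed affine slice with the compact set $\overline N$. Since $J$ is continuous there, an empirical maximizer exists.

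\emph{Identifying the limit.} Let $\hat u_n^\star$ be any sequence of empirical maximizers. We argue by subsequences. Because $\overline N$ is compact, every subsequence has a further subsequence converging to some $\bar u\in\overline N$. Passing to the limit in the linear constraints, using $T(\hat u_n^\star)=\hat b_n\to b_0$ and closedness of $\mathcal A_{\mathfrak F}$, gives $\bar u\in\mathcal U(b_0)\cap\mathcal A_{\mathfrak F}$. Optimality gives $J(\hat u_n^\star)\ge J(u(\hat b_n))$. Since $u(\hat b_n)\to u^\star$ and $J$ is continuous on $\overline N$, taking limits yields $J(\bar u)\ge J(u^\star)$. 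By (C1), $u^\star$ is the unique maximizer on $\mathcal U(b_0)\cap\mathcal A_{\mathfrak F}$, so $\bar u=u^\star$. Since every subsequence has a further subsequence converging to $u^\star$, the whole sequence converges to $u^\star$. This holds on the full-measure event, hence almost surely.

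The main obstacle is the bookkeeping rather than any hard estimate. First, one must make sure the limit point $\bar u$ actually lies in the class over which (C1) asserts uniqueness. Working in $\overline N\subset\operatorname{relint}(\mathfrak F)\cap\mathcal A_{\mathfrak F}$ handles this, since no mass escapes the face. Second, the statement has to be read with the empirical class intersected with $\mathcal A_{\mathfrak F}$, i.e. the maximizer is taken over $\hat{\mathcal U}_n\cap\overline N$ with $N\subset\mathcal A_{\mathfrak F}$. If nonnegativity or the normalization $\sum u=1$ is not built into $\mathcal U(b)$, I would note that $\mathcal A_{\mathfrak F}$, as a set of block laws supported on $\mathfrak F$, already encodes the normalization. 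Positivity on $\overline N$ then follows from $\overline N\subset\operatorname{relint}(\mathfrak F)$.
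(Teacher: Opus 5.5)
Your proposal is correct and follows essentially the same route as the paper: the neighborhood $N$ and the comparison points $u(\hat b_n)\in\hat{\mathcal U}_n\cap N$ come from \cref{lem:local-feasible}, and a compactness/subsequence argument using $T(\hat u_n^\star)=\hat b_n\to b_0$, the inequality $J(\hat u_n^\star)\ge J(u(\hat b_n))$, and the uniqueness in (C1) forces every limit point to equal $u^\star$. The only difference is that the paper first records an $\varepsilon$-separation of $J$ outside $N$ on the population class, which its argument never uses, so omitting it costs you nothing.
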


\begin{proof}
Because $u^\star$ is the unique maximizer of the continuous function $J$ on the compact set
\begin{equation*}
\mathcal U(b_0)\cap \mathcal A_{\mathfrak F},
\end{equation*}
there exists a neighborhood $N$ of $u^\star$ in $\mathcal A_{\mathfrak F}$ with compact closure
\begin{equation*}
\overline N\subset \operatorname{relint}(\mathfrak F)
\end{equation*}
and an $\varepsilon>0$ such that
\begin{equation*}
J(u)\le J(u^\star)-\varepsilon
\qquad
\text{for every }u\in \bigl(\mathcal U(b_0)\cap \mathcal A_{\mathfrak F}\bigr)\setminus N.
\end{equation*}
Shrinking $N$ if necessary, I may also assume that \cref{lem:local-feasible} holds on this same
neighborhood.

By \cref{prop:feature-convergence}, one has~\cite{billingsley1995,vandervaart1998}
\begin{equation*}
\hat b_n\to b_0
\qquad\text{almost surely.}
\end{equation*}
Hence, by \cref{lem:local-feasible}, for all sufficiently large $n$ there exists
\begin{equation*}u_n\in \hat{\mathcal U}_n\cap N
\end{equation*}
with $u_n\to u^\star$. In particular,
\begin{equation*}
\hat{\mathcal U}_n\cap N\neq\varnothing
\end{equation*}
eventually almost surely.

Fix an almost sure sample point on which $\hat b_n\to b_0$. Let $\hat u_n^\star$ be any local
empirical maximizer in $\hat{\mathcal U}_n\cap N$. Because $\overline N$ is compact, every
subsequence of $(\hat u_n^\star)$ has a further convergent subsequence, let
\begin{equation*}
\hat u_{n_k}^\star\to \bar u\in \overline N.
\end{equation*}
Since
\begin{equation*}
T(\hat u_{n_k}^\star)=\hat b_{n_k}
\qquad\text{and}\qquad
\hat b_{n_k}\to b_0,
\end{equation*}
continuity of $T$ yields
\begin{equation*}
T(\bar u)=b_0,
\end{equation*}
so $\bar u\in \mathcal U(b_0)\cap \overline N$.

Moreover, because $u_{n_k}\in \hat{\mathcal U}_{n_k}\cap N$ and $\hat u_{n_k}^\star$ maximizes
$J$ over that local feasible class,
\begin{equation*}
J(\hat u_{n_k}^\star)\ge J(u_{n_k}).
\end{equation*}
Passing to the limit and using continuity of $J$ gives
\begin{equation*}
J(\bar u)\ge J(u^\star).
\end{equation*}
Since $u^\star$ is feasible for the population problem and is the unique maximizer of $J$ on
$\mathcal U(b_0)\cap \mathcal A_{\mathfrak F}$, it follows that $\bar u=u^\star$.

Thus every convergent subsequence of $(\hat u_n^\star)$ converges to $u^\star$, and therefore the
whole sequence converges almost surely to $u^\star$.
\end{proof}

The local asymptotic statistic from \cref{thm:gap} may be evaluated at empirical local
coordinates derived from $\hat u_n^\star$. Thus the gap functional is a population-geometric object and may serve as the basis of a testing procedure under additional asymptotic assumptions.

\begin{remark}
A full empirical theory across support changes would require substantially more work. The present
result isolates the local argument: convergence of block frequencies yields convergence of
retained moments, full row rank yields nearby feasible continuation on one face, and uniqueness of
the population selector yields the separation needed to force local empirical maximizers back to
$u^\star$.
\end{remark}

\section{An Aliased Hidden-State Example}

This section gives a concrete application in which the selector resolves a genuinely underidentified visible model, while hidden completion remains non-unique. The construction stays within the finite-state finite-memory scope of the paper and makes the observational-fiber interpretation explicit.

Let the hidden state space be
\begin{equation*}
E:=\{a_0,a_1,b_0,b_1\},
\end{equation*}
and let the visible alphabet be $A:=\{0,1\}$. Define the observation map
\begin{equation*}
\phi(a_0)=\phi(a_1)=0,
\qquad
\phi(b_0)=\phi(b_1)=1.
\end{equation*}
Thus two hidden states are observationally aliased into each visible symbol.

Fix parameters
\begin{equation*}
a,b,\lambda,\mu\in(0,1)
\end{equation*}
and define a Markov chain $(X_t)$ on $E$ by the transition matrix
\begin{equation*}
K=
\begin{pmatrix}
(1-a)\lambda & (1-a)(1-\lambda) & a\mu & a(1-\mu)\\
(1-a)\lambda & (1-a)(1-\lambda) & a\mu & a(1-\mu)\\
b\lambda & b(1-\lambda) & (1-b)\mu & (1-b)(1-\mu)\\
b\lambda & b(1-\lambda) & (1-b)\mu & (1-b)(1-\mu)
\end{pmatrix},
\end{equation*}
where the rows are ordered as $(a_0,a_1,b_0,b_1)$. Let the visible process be
\begin{equation*}
Y_t:=\phi(X_t).
\end{equation*}

The point of the construction is that the visible transition probabilities depend only on the aggregated parameters $a$ and $b$, whereas the hidden completion still depends on the latent splitting parameters $\lambda$ and $\mu$.

\begin{proposition}[Aliased hidden realization of any binary first-order chain]\label{prop:aliased-binary-realization}
Let
\begin{equation*}
P=
\begin{pmatrix}
1-a & a\\
b & 1-b
\end{pmatrix}
\qquad\text{with }a,b\in(0,1)
\end{equation*}
be a stationary irreducible binary Markov kernel, and let
\begin{equation*}
m:=\frac{a}{a+b}.
\end{equation*}
For any $\lambda,\mu\in(0,1)$, the hidden transition matrix $K$ defined above has the following properties:
\begin{enumerate}[label=(\roman*),leftmargin=2.5em]
\item The hidden chain is irreducible and aperiodic.
\item It admits the stationary distribution
\begin{equation*}
\pi^E=\bigl((1-m)\lambda,(1-m)(1-\lambda),m\mu,m(1-\mu)\bigr).
\end{equation*}
\item The observed process $Y_t=\phi(X_t)$ is a stationary binary first-order Markov chain with transition matrix $P$.
\end{enumerate}
In particular, every stationary irreducible binary first-order Markov chain admits infinitely many aliased hidden-state realizations of this form.
\end{proposition}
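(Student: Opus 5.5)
The key structural observation is that $K$ factors as $K(x,y)=P(\phi(x),\phi(y))\,\rho_{\phi(y)}(y)$. Here $\rho_0=(\lambda,1-\lambda)$ is a law on $\{a_0,a_1\}$ and $\rho_1=(\mu,1-\mu)$ is a law on $\{b_0,b_1\}$. So the next hidden state is produced in two steps: first draw the visible successor symbol from $P$, using only the current visible symbol, and then draw the hidden representative from a splitting law that depends only on that new symbol. All three claims follow from this factorization, so I would state it first and check it row by row from the definition of $K$. The verification is immediate, since rows $a_0,a_1$ coincide and equal $((1-a)\rho_0,\,a\rho_1)$, and rows $b_0,b_1$ coincide and equal $(b\rho_0,\,(1-b)\rho_1)$.

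\emph{Part (i).} Because $a,b,\lambda,\mu\in(0,1)$, every entry of $K$ is strictly positive. A strictly positive stochastic matrix is irreducible, and it is aperiodic because its diagonal entries are positive.

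\emph{Part (ii).} I would verify $\pi^E K=\pi^E$ directly. Fix a column $y$ with $\phi(y)=j$. Then
\[
\sum_x \pi^E(x)K(x,y)=\rho_j(y)\sum_{i\in\{0,1\}}\Bigl(\sum_{x:\phi(x)=i}\pi^E(x)\Bigr)P(i,j).
\]
The inner sum equals $(1-m)$ for $i=0$ and $m$ for $i=1$, so the right-hand side is $\rho_j(y)\,(\bar\pi P)_j$ with $\bar\pi=(1-m,m)$. Using $m=a/(a+b)$, a two-line check gives $\bar\pi P=\bar\pi$. The right-hand side therefore equals $\rho_j(y)\bar\pi_j=\pi^E(y)$. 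Uniqueness of $\pi^E$ comes from part (i), although the statement only asks that $\pi^E$ be stationary.

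\emph{Part (iii).} This is the one step needing care, because the image of a Markov chain under a function is not in general Markov. I would start the hidden chain from $\pi^E$, so $(X_t)$ is stationary and hence $(Y_t)$ is stationary. I would then compute cylinder probabilities:
\[
P(Y_0=y_0,\dots,Y_n=y_n)=\sum_{x_0,\dots,x_n:\ \phi(x_t)=y_t}\pi^E(x_0)\prod_{t=1}^n P(y_{t-1},y_t)\,\rho_{y_t}(x_t).
\]
Summing over the $x_t$, each factor $\rho_{y_t}$ sums to $1$ and $\pi^E$ marginalizes to $\bar\pi_{y_0}$. This leaves
\[
P(Y_0=y_0,\dots,Y_n=y_n)=\bar\pi_{y_0}\prod_{t=1}^n P(y_{t-1},y_t),
\]
which is exactly the law of a stationary Markov chain with kernel $P$ and initial law $\bar\pi$. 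This is a lumpability argument, and the factorization is what makes it go through. An equivalent route is to note that $P(X_{t+1}\in\phi^{-1}(j)\mid X_t=x)=P(\phi(x),j)$ depends on $x$ only through $\phi(x)$, which is Dynkin's criterion.

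\emph{Final claim.} The construction works for every pair $(\lambda,\mu)\in(0,1)^2$, and distinct pairs give distinct stationary laws $\pi^E$. Hence every stationary irreducible binary first-order chain with $a,b\in(0,1)$ has infinitely many distinct aliased hidden-state realizations, all lying in the same observational fiber.
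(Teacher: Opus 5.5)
Your proof is correct. Parts (i) and (ii) agree with the paper in substance. The paper also uses strict positivity of $K$ for (i). For (ii) it checks $\pi^E K=\pi^E$ coordinate by coordinate using $(1-m)a=mb$, which is your identity $\bar\pi P=\bar\pi$ written out row by row.

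Where you differ is part (iii). You isolate the factorization $K(x,y)=P(\phi(x),\phi(y))\,\rho_{\phi(y)}(y)$ and use it to compute all cylinder probabilities, obtaining $\bar\pi_{y_0}\prod_{t} P(y_{t-1},y_t)$ directly. The paper instead takes the lumpability route you mention at the end. It notes that $P(Y_{t+1}=z\mid X_t=x)$ depends on $x$ only through $\phi(x)$, then conditions on $\sigma(Y_t)$ to get $P(Y_{t+1}=z\mid Y_t=y)=P(y,z)$.

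As literally written, the paper's step identifies only the one-step transition probabilities. To conclude the Markov property one must condition on the whole visible past $\sigma(Y_0,\dots,Y_t)\subseteq\sigma(X_0,\dots,X_t)$ and use the Markov property of $(X_t)$. Your cylinder computation gives every finite-dimensional distribution at once, so it settles both the Markov property and the stationarity of $(Y_t)$ without that step. The cost is that it relies on the special product structure of $K$: the within-fiber splitting law depends only on the new symbol. The Dynkin criterion needs only constant fiber row sums, so it covers more general aliased kernels.

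Your distinctness argument through $\pi^E$, rather than through the entries of $K$ as in the paper, is also fine. It yields distinct hidden stationary laws immediately.
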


\begin{proof}
Because all entries of $K$ are strictly positive, the hidden chain is irreducible and aperiodic. This proves \textnormal{(i)}.

To prove \textnormal{(ii)}, write
\begin{equation*}
\pi^E=(\pi_{a_0},\pi_{a_1},\pi_{b_0},\pi_{b_1})
:=\bigl((1-m)\lambda,(1-m)(1-\lambda),m\mu,m(1-\mu)\bigr).
\end{equation*}
It is a probability vector because
\begin{equation*}
(1-m)\lambda+(1-m)(1-\lambda)+m\mu+m(1-\mu)=1.
\end{equation*}
Now compute the first coordinate of $\pi^E K$:
\begin{align*}
(\pi^E K)_{a_0}
&=\bigl(\pi_{a_0}+\pi_{a_1}\bigr)(1-a)\lambda+\bigl(\pi_{b_0}+\pi_{b_1}\bigr)b\lambda\\
&=(1-m)(1-a)\lambda+mb\lambda.
\end{align*}
Because $m=a/(a+b)$, one has
\begin{equation*}
(1-m)a=mb,
\end{equation*}
so
\begin{equation*}
(1-m)(1-a)+mb=(1-m)- (1-m)a+mb=1-m.
\end{equation*}
Hence
\begin{equation*}
(\pi^E K)_{a_0}=(1-m)\lambda=\pi_{a_0}.
\end{equation*}
The same calculation gives
\begin{equation*}
(\pi^E K)_{a_1}=(1-m)(1-\lambda)=\pi_{a_1}.
\end{equation*}
Likewise,
\begin{align*}
(\pi^E K)_{b_0}
&=\bigl(\pi_{a_0}+\pi_{a_1}\bigr)a\mu+\bigl(\pi_{b_0}+\pi_{b_1}\bigr)(1-b)\mu\\
&=(1-m)a\mu+m(1-b)\mu
= m\mu
=\pi_{b_0},
\end{align*}
where the identity $(1-m)a=mb$ was used again. The fourth coordinate is analogous, so
\begin{equation*}
\pi^E K=\pi^E.
\end{equation*}
Thus $\pi^E$ is stationary.

To prove \textnormal{(iii)}, note that the visible transition probabilities are obtained by summing the hidden transition probabilities over each fiber of $\phi$. If the current hidden state is $a_0$, then the probability of moving to the visible symbol $0$ is
\begin{equation*}
K(a_0,a_0)+K(a_0,a_1)=(1-a)\lambda+(1-a)(1-\lambda)=1-a,
\end{equation*}
and the probability of moving to the visible symbol $1$ is
\begin{equation*}
K(a_0,b_0)+K(a_0,b_1)=a\mu+a(1-\mu)=a.
\end{equation*}
Exactly the same identities hold when the current hidden state is $a_1$, because the first two rows of $K$ are equal. Similarly, if the current hidden state is $b_0$ or $b_1$, then the last two rows of $K$ give
\begin{equation*}
P(Y_{t+1}=0\mid X_t=x)=b,
\qquad
P(Y_{t+1}=1\mid X_t=x)=1-b,
\qquad x\in\{b_0,b_1\}.
\end{equation*}
Consequently, for each visible state $y\in\{0,1\}$ and each symbol $z\in\{0,1\}$, the quantity
\begin{equation*}
P(Y_{t+1}=z\mid X_t=x)
\end{equation*}
is the same for all hidden states $x$ satisfying $\phi(x)=y$, and its common value is exactly $P(y,z)$. By conditioning on the sigma-field generated by $Y_t$ and applying the tower property, it follows that
\begin{equation*}
P(Y_{t+1}=z\mid Y_t=y)=P(y,z).
\end{equation*}
Hence $(Y_t)$ is a binary first-order Markov chain with transition matrix $P$. Since the hidden chain is stationary under $\pi^E$, the visible process is stationary as well. This proves \textnormal{(iii)}.

Finally, varying $\lambda$ and $\mu$ in $(0,1)$ changes the hidden transition matrix $K$ while leaving the visible transition matrix $P$ unchanged. Hence one obtains infinitely many distinct aliased hidden-state realizations of the same visible chain.
\end{proof}

\begin{corollary}[Visible underidentification under a fixed mean]\label{cor:aliased-fixed-mean}
Fix $m\in(0,1)$. For each
\begin{equation*}
q\in\bigl(0,\min\{m,1-m\}\bigr),
\end{equation*}
define
\begin{equation*}
a(q):=\frac{q}{1-m},
\qquad
b(q):=\frac{q}{m}.
\end{equation*}
Then for every choice of $\lambda,\mu\in(0,1)$, the aliased hidden transition matrix obtained from
\begin{equation*}
a=a(q),
\qquad
b=b(q)
\end{equation*}
induces a stationary binary visible first-order Markov chain with stationary mean $m$. As $q$ varies, these visible laws are distinct. Consequently, if the retained visible observable is only the stationary mean $m$, then the visible completion problem is non-unique even inside this single aliased hidden-state architecture.
\end{corollary}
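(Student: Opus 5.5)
The plan is to reduce everything to \cref{prop:aliased-binary-realization} and then compute the visible two-block law explicitly as a function of $q$. First I check the parameter constraints. For $q\in(0,\min\{m,1-m\})$ one has $a(q)=q/(1-m)\in(0,1)$ because $q<1-m$, and $b(q)=q/m\in(0,1)$ because $q<m$. Hence $P(q)=\begin{pmatrix}1-a(q)&a(q)\\ b(q)&1-b(q)\end{pmatrix}$ is an irreducible binary kernel. \cref{prop:aliased-binary-realization} then applies for every $\lambda,\mu\in(0,1)$. It shows that the hidden chain $K$ is stationary under $\pi^E$, and that $Y_t=\phi(X_t)$ is a stationary first-order Markov chain with transition matrix $P(q)$.

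Next I identify the stationary mean. By part (ii) of \cref{prop:aliased-binary-realization}, the visible one-point law puts mass $m_q:=a(q)/(a(q)+b(q))$ on the symbol $1$. Substituting gives
\begin{equation*}
a(q)+b(q)=q\Bigl(\frac{1}{1-m}+\frac1m\Bigr)=\frac{q}{m(1-m)},
\end{equation*}
so $m_q=\frac{q/(1-m)}{q/(m(1-m))}=m$. The stationary mean is therefore $m$, independently of $q$, $\lambda$ and $\mu$.

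For distinctness, I compute the visible stationary two-block law $u_q(y,z)=\pi^Y(y)P(q)(y,z)$. Its off-diagonal entry is $u_q(0,1)=(1-m)a(q)=q$, which is strictly monotone in $q$. Distinct values of $q$ therefore give distinct two-block laws, and hence distinct visible stationary laws. Equivalently, $P(q)$ itself determines $q$ through $a(q)$. Every such law lies in the feasible class $U(\pi)$ of \cref{thm:fixed-marginal-iid} with $\pi=(1-m,m)$, that is, in the class determined by retaining only the mean. This class therefore contains a one-parameter family of distinct visible completions, all realized inside the same aliased architecture, which is the claimed non-uniqueness. I would remark that the value $q=m(1-m)$, which gives the i.i.d.\ case $a=m$, $b=1-m$, lies in this range. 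By \cref{cor:binary-fixed-mean-iid} it is the selected completion.

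There is no real obstacle; the only point requiring care is to state non-uniqueness as the existence of distinct elements of the visible feasible class, not as non-uniqueness of the maximizer.
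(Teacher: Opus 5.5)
Your proposal is correct and takes essentially the same route as the paper. Both reduce to \cref{prop:aliased-binary-realization}, use $(1-m)a(q)=q=mb(q)$ to get stationary mean $m$, and distinguish the visible laws through $q$. Your check that $a(q),b(q)\in(0,1)$, and your use of the two-block entry $u_q(0,1)=q$ to pass from distinct kernels to distinct visible laws, make explicit two points the paper leaves implicit.
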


\begin{proof}
For each admissible $q$, the definitions of $a(q)$ and $b(q)$ give
\begin{equation*}
(1-m)a(q)=q=mb(q),
\end{equation*}
so the stationary mean of the visible chain in \cref{prop:aliased-binary-realization} is exactly $m$. Distinct values of $q$ give distinct transition matrices
\begin{equation*}
\begin{pmatrix}
1-a(q) & a(q)\\
b(q) & 1-b(q)
\end{pmatrix},
\end{equation*}
hence distinct visible laws. Therefore the visible law is not determined by the retained mean alone.
\end{proof}

\begin{theorem}[Visible maximizing completion under aliasing]\label{thm:aliased-canonical-completion}
Fix $m\in(0,1)$, and let $\mathcal V_m^{\mathrm{aliased}}$ denote the class of stationary irreducible binary visible first-order Markov laws generated by the aliased hidden-state construction of \cref{prop:aliased-binary-realization,cor:aliased-fixed-mean} under the single retained visible observable
\begin{equation*}
P(Y_t=1)=m.
\end{equation*}
Then the entropy-rate functional has a unique maximizer on $\mathcal V_m^{\mathrm{aliased}}$. This unique selector is the i.i.d.\ Bernoulli$(m)$ law, equivalently the binary first-order Markov chain with transition matrix
\begin{equation*}
P^\star=
\begin{pmatrix}
1-m & m\\
1-m & m
\end{pmatrix}.
\end{equation*}
Thus the selector canonically resolves the visible underidentification inside this aliased hidden-state architecture, even though the hidden completion remains non-unique in the explicit sense of \cref{prop:no-canonical-hidden-completion}, equivalently, the selected visible law still has an infinite hidden observational fiber by \cref{cor:selected-visible-fiber-continuum}.
\end{theorem}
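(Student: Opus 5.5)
The plan is to first describe $\mathcal V_m^{\mathrm{aliased}}$ in two-block coordinates and then reduce to \cref{thm:fixed-marginal-iid} (equivalently \cref{cor:binary-fixed-mean-iid}). By \cref{prop:aliased-binary-realization}, every law generated by the aliased construction is a stationary binary first-order chain with kernel $P=\begin{pmatrix}1-a&a\\ b&1-b\end{pmatrix}$, where $a,b\in(0,1)$. Imposing the retained observable $P(Y_t=1)=m$ means $a/(a+b)=m$, that is, $(1-m)a=mb=:q$. Conversely, \cref{cor:aliased-fixed-mean} shows that every such $q$ is realized. Hence $\mathcal V_m^{\mathrm{aliased}}$ is parametrized by $q$ ranging over the set of values for which $a(q)=q/(1-m)$ and $b(q)=q/m$ both lie in $(0,1)$, namely $q\in(0,\min\{m,1-m\})$. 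The corresponding two-block law is
\begin{equation*}
u_q=\begin{pmatrix}1-m-q & q\\ q & m-q\end{pmatrix},
\end{equation*}
which has both one-point marginals equal to $\pi=(1-m,m)$. Therefore $\mathcal V_m^{\mathrm{aliased}}\subseteq U(\pi)$.

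Next I will apply \cref{thm:fixed-marginal-iid}. It gives $J(u)\le H(\pi)$ on $U(\pi)$, with equality only at $u_{a,b}=\pi_a\pi_b$. The i.i.d.\ law corresponds to $q=m(1-m)$, where $a=m$ and $b=1-m$. Since $m(1-m)<\min\{m,1-m\}$ for $m\in(0,1)$, this value of $q$ is admissible. The resulting kernel has rows $(1-m,m)$, which is $P^\star$, and it is realized by the aliased construction for any $\lambda,\mu$. So the global maximizer over $U(\pi)$ lies in the subclass, and is therefore the unique maximizer on $\mathcal V_m^{\mathrm{aliased}}$.

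As a sanity check, I can also verify this directly. Here $J(u_q)=(1-m)h(a(q))+m\,h(b(q))$, which is strictly concave in $q$, and its derivative vanishes exactly where $a/(1-a)=b/(1-b)$, i.e.\ $a=1-b$. This is not strictly needed, since \cref{thm:fixed-marginal-iid} already gives uniqueness.

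For the final sentence of the theorem, I will invoke \cref{prop:aliased-binary-realization} with $a=m$ and $b=1-m$. Every $(\lambda,\mu)\in(0,1)^2$ gives a distinct hidden kernel $K$ with the same visible law, so the hidden fiber of the selected law is infinite, as recorded in \cref{prop:no-canonical-hidden-completion,cor:selected-visible-fiber-continuum}. The only delicate step is the admissibility check that $q=m(1-m)$ lies in the open parameter interval, so that the i.i.d.\ law genuinely belongs to the aliased class rather than only to its closure. That check is the inequality $m(1-m)<\min\{m,1-m\}$ noted above.
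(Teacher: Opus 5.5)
Your proof is correct, but it argues differently from the paper. The paper stays inside the one-parameter family. It writes the entropy rate as $h(q)=(1-m)h_2(q/(1-m))+m\,h_2(q/m)$ and computes $h''(q)=-\frac{1}{1-m-q}-\frac{1}{m-q}-\frac{2}{q}<0$, which gives strict concavity on $(0,\min\{m,1-m\})$. It then solves $h'(q)=0$ to find the single critical point $q^\star=m(1-m)$ and checks that this point is admissible.

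You instead note that the two-block laws $u_q$ all lie in the fixed-marginal class $U(\pi)$, and you inherit both the bound and the uniqueness from \cref{thm:fixed-marginal-iid}. The remaining work is the membership check $m(1-m)\in(0,\min\{m,1-m\})$, plus injectivity of $q\mapsto u_q$, which is immediate from the off-diagonal entries.

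Your route is shorter and calculus-free. It also shows more: the aliasing restriction does not change the selector. The maximizer on the aliased class is the same as the maximizer over all stationary first-order laws with marginal $\pi$, including reducible and boundary ones such as $q=0$.

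The paper's route is self-contained and elementary. It gives the explicit objective, its derivative, and its curvature along the family, which is what you would need to make the local expansion of \cref{thm:gap} concrete in this example.

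There is one slip in your optional sanity check. The condition $h_2'(a)+h_2'(b)=0$ means $\frac{1-a}{a}=\frac{b}{1-b}$, not $\frac{a}{1-a}=\frac{b}{1-b}$; the latter would force $a=b$. The conclusion $a=1-b$ that you draw is nonetheless correct.
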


\begin{proof}
By \cref{cor:aliased-fixed-mean}, the class $\mathcal V_m^{\mathrm{aliased}}$ is parameterized by
\begin{equation*}
q\in\bigl(0,\min\{m,1-m\}\bigr)
\end{equation*}
through
\begin{equation*}
a(q)=\frac{q}{1-m},
\qquad
b(q)=\frac{q}{m}.
\end{equation*}
For such a chain the entropy rate is
\begin{equation*}
h(q)=(1-m)h_2\!\Bigl(\frac{q}{1-m}\Bigr)+mh_2\!\Bigl(\frac{q}{m}\Bigr),
\end{equation*}
where
\begin{equation*}
h_2(p):=-p\log p-(1-p)\log(1-p).
\end{equation*}
Since
\begin{equation*}
h_2'(p)=\log\frac{1-p}{p}
\qquad\text{and}\qquad
h_2''(p)=-\frac{1}{p(1-p)}<0
\qquad (0<p<1),
\end{equation*}
one obtains
\begin{align*}
h'(q)
&=h_2'\!\Bigl(\frac{q}{1-m}\Bigr)+h_2'\!\Bigl(\frac{q}{m}\Bigr)\\
&=\log\frac{1-m-q}{q}+\log\frac{m-q}{q}\\
&=\log\frac{(1-m-q)(m-q)}{q^2},
\end{align*}
and therefore
\begin{equation*}
h''(q)=-\frac{1}{1-m-q}-\frac{1}{m-q}-\frac{2}{q}<0
\qquad\text{for }q\in\bigl(0,\min\{m,1-m\}\bigr).
\end{equation*}
Hence $h$ is strictly concave on the whole parameter interval, so it has at most one maximizer.

The critical-point equation $h'(q)=0$ is equivalent to
\begin{equation*}
(1-m-q)(m-q)=q^2.
\end{equation*}
Expanding the left-hand side gives
\begin{equation*}
m(1-m)-q+q^2=q^2,
\end{equation*}
so the unique critical point is
\begin{equation*}
q^\star=m(1-m).
\end{equation*}
Because
\begin{equation*}
0<m(1-m)<\min\{m,1-m\}
\qquad\text{for every }m\in(0,1),
\end{equation*}
this critical point lies in the admissible interval. Strict concavity now implies that $q^\star$ is the unique global maximizer of $h$ on $\mathcal V_m^{\mathrm{aliased}}$.

Substituting $q^\star$ into the parameterization yields
\begin{equation*}
a^\star=\frac{q^\star}{1-m}=m,
\qquad
b^\star=\frac{q^\star}{m}=1-m.
\end{equation*}
Therefore the unique maximizing visible kernel is
\begin{equation*}
P^\star=
\begin{pmatrix}
1-m & m\\
1-m & m
\end{pmatrix},
\end{equation*}
which is the i.i.d.\ Bernoulli$(m)$ law. This proves the visible-selection statement. The hidden non-uniqueness assertion is then made precise by \cref{prop:no-canonical-hidden-completion,cor:selected-visible-fiber-continuum}.
\end{proof}

\begin{proposition}[Failure of hidden maximizing completion]\label{prop:no-canonical-hidden-completion}
Fix $m\in(0,1)$ and let $P^\star$ be the selected visible kernel from \cref{thm:aliased-canonical-completion}. Then the class of aliased hidden-state realizations that generate $P^\star$ contains infinitely many distinct hidden transition matrices, namely
\begin{equation*}
K_{\lambda,\mu}=
\begin{pmatrix}
(1-m)\lambda & (1-m)(1-\lambda) & m\mu & m(1-\mu)\\
(1-m)\lambda & (1-m)(1-\lambda) & m\mu & m(1-\mu)\\
(1-m)\lambda & (1-m)(1-\lambda) & m\mu & m(1-\mu)\\
(1-m)\lambda & (1-m)(1-\lambda) & m\mu & m(1-\mu)
\end{pmatrix},
\qquad \lambda,\mu\in(0,1).
\end{equation*}
Consequently, the selector of the present paper determines a canonical visible completion, but it does not determine a canonical hidden completion.
\end{proposition}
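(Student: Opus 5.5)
The proof reduces to \cref{prop:aliased-binary-realization}, specialized to the selected kernel $P^\star$. By \cref{thm:aliased-canonical-completion}, the selected visible kernel has $a^\star=m$ and $b^\star=1-m$, and both lie in $(0,1)$ because $m\in(0,1)$. We can therefore substitute $a=m$ and $b=1-m$ into the general aliased matrix $K$.

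The first two rows of $K$ become $\bigl((1-m)\lambda,(1-m)(1-\lambda),m\mu,m(1-\mu)\bigr)$. The last two rows are $\bigl(b\lambda,b(1-\lambda),(1-b)\mu,(1-b)(1-\mu)\bigr)$ with $b=1-m$ and $1-b=m$, which is the same row vector. This gives exactly the displayed rank-one matrix $K_{\lambda,\mu}$. Then \cref{prop:aliased-binary-realization} (iii) shows that, for every $\lambda,\mu\in(0,1)$, the observed process $Y_t=\phi(X_t)$ is a stationary first-order chain with kernel $P^\star$. Parts (i)--(ii) of that proposition supply irreducibility and the stationary law $\pi^E$, so each $K_{\lambda,\mu}$ is a genuine stationary hidden realization.

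For distinctness, the entry $K_{\lambda,\mu}(a_0,a_0)=(1-m)\lambda$ is strictly monotone in $\lambda$ because $1-m>0$. Similarly, $K_{\lambda,\mu}(a_0,b_0)=m\mu$ is strictly monotone in $\mu$ because $m>0$. The map $(\lambda,\mu)\mapsto K_{\lambda,\mu}$ is therefore injective on $(0,1)^2$, which yields a continuum of distinct hidden transition matrices, and in particular infinitely many. The stationary path laws also differ, since their one-point marginals $\pi^E$ already differ. So the hidden observational fiber of the selected visible law is infinite.

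For the consequence, the selector is defined in \cref{thm:aliased-canonical-completion} through $J$ on visible block laws, and $J$ depends only on the visible law. All the $K_{\lambda,\mu}$ generate the same visible law and hence receive the same value. Nothing in the selection procedure distinguishes among them, so no canonical hidden completion is determined. No step is a real obstacle. The only point needing care is checking that the substitution $b=1-m$ makes all four rows coincide, which is what lets the stated matrix take its rank-one form.
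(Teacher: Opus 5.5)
Your proposal is correct and follows the paper's proof essentially step for step. You substitute $a=m$, $b=1-m$ into \cref{prop:aliased-binary-realization} to see that every $K_{\lambda,\mu}$ realizes $P^\star$, and then separate distinct parameter pairs using the entries $(a_0,a_0)$ and $(a_0,b_0)$. Your extra remark that the stationary path laws already differ through their one-point marginals $\pi^E$ is a slightly quicker version of what the paper proves later in \cref{cor:selected-visible-fiber-continuum} via two-point cylinder probabilities.
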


\begin{proof}
Set $a=m$ and $b=1-m$ in \cref{prop:aliased-binary-realization}. Then the resulting visible kernel is exactly
\begin{equation*}
P^\star=
\begin{pmatrix}
1-m & m\\
1-m & m
\end{pmatrix}.
\end{equation*}
For every pair $(\lambda,\mu)\in(0,1)^2$, \cref{prop:aliased-binary-realization} shows that the hidden transition matrix $K_{\lambda,\mu}$ realizes this same visible law.

It remains to show that different parameter pairs produce different hidden completions. If
\begin{equation*}
(\lambda,\mu)\ne (\lambda',\mu'),
\end{equation*}
then either $\lambda\ne\lambda'$ or $\mu\ne\mu'$. In the first case the entry in position $(a_0,a_0)$ differs:
\begin{equation*}
K_{\lambda,\mu}(a_0,a_0)=(1-m)\lambda
\ne (1-m)\lambda'=K_{\lambda',\mu'}(a_0,a_0).
\end{equation*}
In the second case the entry in position $(a_0,b_0)$ differs:
\begin{equation*}
K_{\lambda,\mu}(a_0,b_0)=m\mu
\ne m\mu'=K_{\lambda',\mu'}(a_0,b_0).
\end{equation*}
Hence
\begin{equation*}
K_{\lambda,\mu}\ne K_{\lambda',\mu'}.
\end{equation*}
Since $(0,1)^2$ contains infinitely many points, this yields infinitely many distinct hidden transition matrices realizing the same selected visible law.

Therefore the entropy selector canonically resolves the visible ambiguity but leaves the hidden completion problem non-unique. This is exactly the distinction between visible selection and hidden completion that motivates the observational-fiber viewpoint of the paper.
\end{proof}

\begin{proposition}[Uniqueness of the visible maximizer and nonuniqueness of hidden realizations]
Fix $m\in(0,1)$. Within the aliased hidden-state architecture of this section, the retained visible observable
\begin{equation*}
P(Y_t=1)=m
\end{equation*}
determines a unique canonical visible completion by \cref{thm:aliased-canonical-completion}, but it does not determine a unique hidden completion by \cref{prop:no-canonical-hidden-completion}. Consequently, any rule that selects a single hidden law from this architecture must use additional information not contained in the retained visible observable and the observation map alone.
\end{proposition}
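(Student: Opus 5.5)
This proposition combines two results already established, so I will assemble them rather than compute anything new. The statement has three parts: uniqueness of the visible completion, non-uniqueness of the hidden completion, and the consequence that any hidden selection rule needs extra information.

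For the first part I will invoke \cref{thm:aliased-canonical-completion}. Within the class $\mathcal V_m^{\mathrm{aliased}}$ of visible laws compatible with $P(Y_t=1)=m$, that theorem says the entropy-rate functional has the unique maximizer $P^\star$, the i.i.d.\ Bernoulli$(m)$ kernel. Since the selector depends only on $m$ and on the architecture, the retained observable determines a canonical visible completion.

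For the second part I will invoke \cref{prop:no-canonical-hidden-completion}. It shows that the family $K_{\lambda,\mu}$, for $(\lambda,\mu)\in(0,1)^2$, consists of pairwise distinct hidden transition matrices, and that each of them generates $P^\star$. So even after the visible law is fixed at $P^\star$, the hidden fiber inside the architecture contains a continuum of distinct laws. I will also note that the corresponding stationary hidden path laws are distinct, not just the matrices. By \cref{prop:aliased-binary-realization}, each chain is irreducible, so its stationary law is unique. Distinct kernels then give distinct two-step hidden marginals, and hence distinct hidden path laws.

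For the final sentence I will argue by contradiction. Suppose a rule selects a single hidden law using only the pair consisting of the retained observable $m$ and the map $\phi$, possibly through the visible law. Then the rule is a function of $m$ and $\phi$ alone, or of $P^\star$, which is itself determined by $m$. It therefore assigns one $K_{\lambda_0,\mu_0}$. But every $K_{\lambda,\mu}$ is equally compatible with the same inputs. Each produces the same visible law and the same value of $m$, so the inputs cannot single out $(\lambda_0,\mu_0)$ over any other pair. Singling out one member thus requires extra information, such as a prior, an additional hidden-level criterion, or extra observables.

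The only delicate step is making "uses only the retained observable and the observation map" precise. I will formalize it as measurability with respect to $(m,\phi)$, or equivalently as invariance under changes of hidden law that preserve the visible law. Under that formalization the conclusion is immediate, because every $K_{\lambda,\mu}$ has the same image $(m,\phi)$.
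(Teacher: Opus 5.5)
Your proposal follows the paper's proof: \cref{thm:aliased-canonical-completion} for the unique visible completion, \cref{prop:no-canonical-hidden-completion} for the continuum of hidden completions of $P^\star$, and the same indistinguishability (``tie-breaking'') contradiction at the end; your extra check that distinct $K_{\lambda,\mu}$ give distinct stationary path laws is what the paper records in \cref{cor:selected-visible-fiber-continuum}. One caution: under either of your formalizations a constant choice such as $\lambda=\mu=\tfrac12$ is admissible, so the conclusion is not ``immediate'' as a literal contradiction; what the invariance version rigorously gives is that the rule's output is the same for every $K_{\lambda,\mu}$, hence matches the generating hidden law for at most one $(\lambda,\mu)$, which is the precise sense in which any particular choice is extra information (cf.\ \cref{prop:no-observable-determined-hidden-selector}).
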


\begin{proof}
The first assertion is exactly the content of \cref{thm:aliased-canonical-completion}. The second assertion is exactly the content of \cref{prop:no-canonical-hidden-completion}. Therefore the observation experiment together with the retained visible mean determines a unique selected visible law but leaves infinitely many admissible hidden realizations.

Suppose, to the contrary, that there were a hidden-selection rule on this architecture determined solely by the observation map and the retained visible observable. Applied at the present value of $m$, such a rule would have to choose one hidden law from a family of infinitely many hidden laws that are indistinguishable at the retained visible level. Hence the rule would require a tie-breaking criterion not encoded in the given visible information. This contradicts the assumption that the rule is determined by the observation map and retained visible observable alone.
\end{proof}

\begin{corollary}[Continuum hidden observational fiber at the selected visible law]\label{cor:selected-visible-fiber-continuum}
Let $\nu^\star$ denote the stationary visible path law of the selected Bernoulli$(m)$ process from \cref{thm:aliased-canonical-completion}, and let $\Pi$ be the pathwise observation map induced by $\phi$. For each $(\lambda,\mu)\in(0,1)^2$, let $Q_{\lambda,\mu}$ be the stationary path law of the hidden Markov chain with transition matrix $K_{\lambda,\mu}$ from \cref{prop:no-canonical-hidden-completion}. Then
\begin{equation*}
Q_{\lambda,\mu}\in \mathcal E_\Pi(\nu^\star)
\qquad\text{for every }(\lambda,\mu)\in(0,1)^2,
\end{equation*}
and the set
\begin{equation*}
\{Q_{\lambda,\mu}:(\lambda,\mu)\in(0,1)^2\}
\end{equation*}
is infinite. In particular, the observational fiber of the selected visible law is non-singleton and in fact contains infinitely many stationary hidden laws.
\end{corollary}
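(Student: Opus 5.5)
The argument has two parts: membership of each $Q_{\lambda,\mu}$ in the fiber, and infiniteness of the family. For membership, I fix $(\lambda,\mu)\in(0,1)^2$ and apply \cref{prop:aliased-binary-realization} with $a=m$ and $b=1-m$, as in the proof of \cref{prop:no-canonical-hidden-completion}. This shows that $K_{\lambda,\mu}$ is irreducible and aperiodic with stationary distribution $\pi^E=((1-m)\lambda,(1-m)(1-\lambda),m\mu,m(1-\mu))$. Hence the stationary path law $Q_{\lambda,\mu}$ is well defined and lies in $\mathcal P_{\mathrm{stat}}(E^{\mathbb Z})$.

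Part (iii) of that proposition says that $Y_t=\phi(X_t)$ is a stationary binary first-order Markov chain with kernel $P^\star$ and stationary mean $m/(m+1-m)=m$. I need this at the level of path laws, and I expect this to be the only step needing care. Proposition~\ref{prop:aliased-binary-realization} only identifies one-step conditionals given $Y_t$, but its proof actually shows more: $P(Y_{t+1}=z\mid X_t=x)$ depends on $x$ only through $\phi(x)$. I will use this stronger fact together with the Markov property of $X$ and the tower property, conditioning on $\sigma(X_s:s\le t)\supseteq\sigma(Y_s:s\le t)$. This gives
\begin{equation*}
P(Y_{t+1}=z\mid Y_s,\ s\le t)=P^\star(Y_t,z).
\end{equation*}
It follows that every finite-dimensional distribution of $(Y_t)$ equals that of the stationary Markov chain with kernel $P^\star$ and initial law $(1-m,m)$. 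Since $P^\star$ has identical rows, this is the i.i.d.\ Bernoulli$(m)$ law $\nu^\star$. By Kolmogorov consistency, $\Pi_\#Q_{\lambda,\mu}=\nu^\star$, so $Q_{\lambda,\mu}\in\mathcal E_\Pi(\nu^\star)$ by \eqref{eq:observational-fiber}.

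For infiniteness, I will show that $(\lambda,\mu)\mapsto Q_{\lambda,\mu}$ is injective. It suffices to look at the one-dimensional marginal of $Q_{\lambda,\mu}$, which is $\pi^E$. Its $a_0$-coordinate $(1-m)\lambda$ determines $\lambda$ because $1-m>0$, and its $b_0$-coordinate $m\mu$ determines $\mu$ because $m>0$. Distinct parameter pairs therefore give distinct one-point marginals and hence distinct path laws. Alternatively, one can cite the distinctness of the matrices $K_{\lambda,\mu}$ from \cref{prop:no-canonical-hidden-completion} and note that a stationary Markov path law determines its transition matrix on a fully supported chain. Since $(0,1)^2$ is infinite, the family is infinite, and the fiber is non-singleton.
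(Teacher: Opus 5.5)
Your proposal is correct and follows the paper's proof. Membership comes from the aliased realization with $a=m$, $b=1-m$ (as in \cref{prop:no-canonical-hidden-completion}), and infiniteness comes from injectivity of $(\lambda,\mu)\mapsto Q_{\lambda,\mu}$.

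The differences are minor improvements. You recover $(\lambda,\mu)$ directly from the one-point marginal $\pi^E$, rather than recovering $K_{\lambda,\mu}$ from two-point cylinders as the paper does. Your tower-property argument, conditioning on $\sigma(X_s:s\le t)$, also supplies the path-level identity $\Pi_\#Q_{\lambda,\mu}=\nu^\star$, which the paper asserts without detail. That step could be shortened further: $K_{\lambda,\mu}$ has identical rows, so $X$ is i.i.d.\ with law $\pi^E$ and $Y=\phi(X)$ is immediately i.i.d.\ Bernoulli$(m)$.
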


\begin{proof}
By \cref{prop:no-canonical-hidden-completion}, every matrix $K_{\lambda,\mu}$ generates the same selected visible kernel $P^\star$. Because each such hidden chain is finite-state, irreducible, and aperiodic, it has a unique stationary path law, denoted here by $Q_{\lambda,\mu}$. The visible process obtained from $Q_{\lambda,\mu}$ under the pathwise observation map $\Pi$ is exactly the stationary Bernoulli$(m)$ law $\nu^\star$. Hence
\begin{equation*}
\Pi_\#Q_{\lambda,\mu}=\nu^\star,
\end{equation*}
so by the definition of the observational fiber in \cref{sec:observational-fibers}, one has
\begin{equation*}
Q_{\lambda,\mu}\in \mathcal E_\Pi(\nu^\star).
\end{equation*}
If $(\lambda,\mu)\neq(\lambda',\mu')$, then \cref{prop:no-canonical-hidden-completion} gives
\begin{equation*}
K_{\lambda,\mu}\neq K_{\lambda',\mu'}.
\end{equation*}
Because every state has strictly positive stationary mass, the stationary path law of a finite-state Markov chain determines its two-point cylinder probabilities and hence its transition matrix. Therefore distinct transition matrices yield distinct stationary path laws, so
\begin{equation*}
Q_{\lambda,\mu}\neq Q_{\lambda',\mu'}.
\end{equation*}
Therefore the family above is infinite.
\end{proof}

\subsection{Comparison with Related Approaches}

The selector is related to three nearby objectives. It concerns canonical visible completion rather than hidden identification. It is not obtained by projection alone from a hidden chain, but by entropy-rate maximization on a visible feasible class determined by retained observables. The exponential representation appears, when it appears, as a consequence of the constrained block-law optimization itself.

\begin{proposition}[Comparison with hidden identification]
Fix $m\in(0,1)$ and work in the aliased hidden-state architecture of \cref{prop:aliased-binary-realization}. Then the following statements hold.
\begin{enumerate}[label=(\roman*),leftmargin=2.5em]
\item The retained visible observable $P(Y_t=1)=m$ does not identify a unique hidden law.
\item The selected visible law is not determined by projection alone from the retained visible observable: within the same aliased architecture and under the same retained mean $m$, there exist infinitely many distinct visible first-order Markov laws, but exactly one entropy-rate maximizer.
\end{enumerate}
Consequently, the selector of the present paper solves a canonical visible-completion problem and not a hidden-identification problem.
\end{proposition}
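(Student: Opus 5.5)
The two assertions are proved separately, each by quoting a result already established in this section, and the concluding sentence then follows by combining them.

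For (i), I will take the visible law with kernel $P^\star$; by \cref{thm:aliased-canonical-completion} it is the selected completion and satisfies $P(Y_t=1)=m$. By \cref{prop:no-canonical-hidden-completion}, the matrices $K_{\lambda,\mu}$ with $(\lambda,\mu)\in(0,1)^2$ are pairwise distinct and all generate $P^\star$. By \cref{cor:selected-visible-fiber-continuum}, their stationary path laws $Q_{\lambda,\mu}$ are pairwise distinct and lie in $\mathcal E_\Pi(\nu^\star)$. Each of these hidden laws is therefore consistent with the retained mean $m$, so the mean does not identify a unique hidden law. In fact, even the full visible law $\nu^\star$ fails to identify one, which is a stronger statement.

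For (ii), I will invoke \cref{cor:aliased-fixed-mean}. Within the aliased architecture, the parameter $q\in(0,\min\{m,1-m\})$ indexes pairwise distinct visible first-order Markov laws, all with stationary mean $m$. This interval is nondegenerate, so the family is infinite. Hence the retained observable alone does not single out a visible law, which is what ``not determined by projection alone'' means. For the second half of (ii), \cref{thm:aliased-canonical-completion} shows that $h(q)$ is strictly concave with a unique maximizer $q^\star=m(1-m)$, so there is exactly one entropy-rate maximizer on $\mathcal V_m^{\mathrm{aliased}}$.

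The consequence follows by putting the two parts together. The visible problem has a unique solution, while the hidden problem remains non-unique even after that solution is fixed. The selector therefore answers the visible-completion question and not the hidden-identification one.

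The only delicate point is interpretive: the phrase ``determined by projection alone'' has to be read as ``the retained observable alone fixes the visible law.'' Under that reading, (ii) reduces to the non-injectivity shown in \cref{cor:aliased-fixed-mean}. Everything else is citation, and no new computation is needed.
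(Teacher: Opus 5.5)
Your proposal is correct and follows the paper's own proof essentially verbatim: (i) via \cref{cor:selected-visible-fiber-continuum} (your extra citation of \cref{prop:no-canonical-hidden-completion} is what that corollary already rests on), and (ii) via \cref{cor:aliased-fixed-mean} together with the unique maximizer $q^\star=m(1-m)$ from \cref{thm:aliased-canonical-completion}. The paper reads ``not determined by projection alone'' the same way you do, as the same retained mean being compatible with infinitely many projected visible laws.
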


\begin{proof}
Assertion \textnormal{(i)} is exactly the content of \cref{cor:selected-visible-fiber-continuum}: the selected visible law $\nu^\star$ has an infinite observational fiber
\begin{equation*}
\mathcal E_\Pi(\nu^\star).
\end{equation*}
In particular, the retained visible observable together with the observation map does not determine a unique hidden stationary law.

For \textnormal{(ii)}, \cref{cor:aliased-fixed-mean} shows that under the fixed retained mean $m$ there are infinitely many distinct visible first-order Markov laws in the same aliased architecture, parameterized by
\begin{equation*}
q\in\bigl(0,\min\{m,1-m\}\bigr).
\end{equation*}
By \cref{thm:aliased-canonical-completion}, the entropy-rate functional has a unique maximizer on this visible class, attained at
\begin{equation*}
q^\star=m(1-m),
\end{equation*}
which yields the Bernoulli$(m)$ law. Therefore the selected visible law is not obtained from the retained mean by projection alone, because the same retained mean is compatible with infinitely many projected visible laws. What singles out one visible law is the additional canonical entropy-rate selection principle.

Combining \textnormal{(i)} and \textnormal{(ii)} proves the final statement.
\end{proof}

\begin{proposition}[Comparison with exponential-family selection]
Assume the hypotheses of \cref{prop:kkt}. Then any exponential representation of the selected active-support kernel in the present paper is derived from the constrained visible optimization problem itself. More precisely, on the active support one has the multiplier identity
\begin{equation*}
\log\frac{u^\star(c,a)}{\eta_{u^\star}(c)}
=
-\gamma
-\sum_{j=1}^m \lambda_j G_j(c,a)
-\psi(c)
+\psi(\sigma(c,a)),
\end{equation*}
so the selected kernel is determined by the block-law KKT system. In particular, the paper does not begin by specifying an external exponential-family transition model on the hidden side.
\end{proposition}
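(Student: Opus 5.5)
This follows directly from \cref{prop:kkt}, so the plan is to re-invoke that result and then read off what it says structurally. Under the hypothesis that the unique maximizer $u^\star$ lies in the relative interior of a fixed-support face, \cref{prop:kkt} supplies multipliers $\lambda_1,\dots,\lambda_m$, $\gamma$, and $\psi(c)$ with
\begin{equation*}
\log\frac{u^\star(c,a)}{\eta_{u^\star}(c)}
=
-\gamma-\sum_{j=1}^m\lambda_jG_j(c,a)-\psi(c)+\psi(\sigma(c,a))
\end{equation*}
on every active coordinate. This is exactly the displayed multiplier identity, so the first assertion needs nothing beyond citing that proposition.

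To justify the phrase ``derived from the constrained visible optimization problem itself,'' I will trace where each ingredient of the identity comes from. The left-hand side is the partial derivative $\partial J/\partial u(c,a)=\log\eta_u(c)-\log u(c,a)$ of the visible entropy-rate functional \eqref{eq:entropy-rate}, up to sign. The right-hand side is a linear combination of the gradients of the visible constraints: the retained moments \eqref{eq:moment-constraints}, normalization, and stationarity \eqref{eq:stationarity-constraints}. Nothing in this Lagrangian refers to a parametric family on the hidden alphabet $H$ or to a transition model chosen in advance.

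Next, dividing by the row mass and normalizing over $a$ gives the kernel formula of \cref{cor:kernel}. The row-dependent term $\gamma+\psi(c)$ then cancels in the normalization, so the selected kernel is determined by $(\lambda,\psi)$ solving the block-law KKT system. The exponential form is therefore an output of that system rather than an input.

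The last sentence, that the paper does not begin with an external exponential-family model on the hidden side, I will treat as a remark: the feasible class \eqref{eq:feasible-class} and the objective $J$ involve only visible block-law coordinates.

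The only delicate point is making sure the KKT hypothesis of relative interiority is in force, so that multipliers exist only on the active support. The statement assumes the hypotheses of \cref{prop:kkt}, so this is automatic. I expect no real obstacle; the proof is essentially a restatement of \cref{prop:kkt} together with this interpretive remark.
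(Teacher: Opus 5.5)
Your proposal is correct and follows the paper's proof: the multiplier identity is read off directly from \cref{prop:kkt}, and the exponential form is interpreted as coming from the Lagrange multipliers of the visible normalization, stationarity, and retained-moment constraints. The only difference is in how the closing interpretive sentence is supported. The paper points to the aliased example (\cref{prop:no-canonical-hidden-completion,cor:selected-visible-fiber-continuum}) to argue that a hidden exponential-family ansatz would add structure not fixed by the visible data, whereas you observe that $J$ and the feasible class involve only visible block-law coordinates.
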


\begin{proof}
The displayed identity is exactly the conclusion of \cref{prop:kkt}. Therefore any exponential expression for the selected visible kernel is obtained from the Lagrange multipliers attached to the normalization, stationarity, and retained-observable constraints of the visible block-law problem.

It follows that the exponential structure is derived rather than postulated. In particular, in the aliased hidden-state application of this paper, \cref{prop:no-canonical-hidden-completion,cor:selected-visible-fiber-continuum} show that even the selected visible law is compatible with infinitely many distinct hidden laws. Hence an externally imposed hidden exponential-family ansatz would add modeling structure not determined by the retained visible observable and observation map alone.
\end{proof}

\begin{proposition}[Comparison with hidden-entropy maximization]\label{prop:contrast-hidden-entropy}
Fix $m\in(0,1)$ and consider the family of hidden completions
\begin{equation*}
K_{\lambda,\mu}=
\begin{pmatrix}
(1-m)\lambda & (1-m)(1-\lambda) & m\mu & m(1-\mu)\\
(1-m)\lambda & (1-m)(1-\lambda) & m\mu & m(1-\mu)\\
(1-m)\lambda & (1-m)(1-\lambda) & m\mu & m(1-\mu)\\
(1-m)\lambda & (1-m)(1-\lambda) & m\mu & m(1-\mu)
\end{pmatrix},
\qquad (\lambda,\mu)\in(0,1)^2,
\end{equation*}
which all generate the selected visible Bernoulli$(m)$ law from \cref{thm:aliased-canonical-completion}. Then the hidden entropy rate of the stationary hidden chain equals
\begin{equation*}
H_{\mathrm{hid}}(\lambda,\mu)
=h_2(m)+(1-m)h_2(\lambda)+mh_2(\mu).
\end{equation*}
In particular, hidden entropy maximization over this completion family has the unique maximizer
\begin{equation*}
\lambda^\star=\mu^\star=\frac12.
\end{equation*}
Therefore hidden entropy maximization defines a different selection problem from the visible entropy-rate selector of the present paper: it chooses one hidden completion only after introducing an additional hidden-level objective.
\end{proposition}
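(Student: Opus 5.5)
The plan is to compute the hidden entropy rate directly from the Markov formula. The rate is $H_{\mathrm{hid}}=\sum_{x\in E}\pi^E(x)\,H\bigl(K_{\lambda,\mu}(x,\cdot)\bigr)$, where $\pi^E$ is the stationary law. By \cref{prop:aliased-binary-realization} with $a=m$ and $b=1-m$, we have $\pi^E=\bigl((1-m)\lambda,(1-m)(1-\lambda),m\mu,m(1-\mu)\bigr)$, and the chain is irreducible and aperiodic. All four rows of $K_{\lambda,\mu}$ coincide, so the hidden chain is in fact i.i.d. with one-point law equal to that common row. The common row is itself $\pi^E$, which also gives a quick independent check of stationarity. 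Consequently $H_{\mathrm{hid}}(\lambda,\mu)=H(\rho)$, where $\rho$ denotes the common row, and the stationary weights play no role.

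The next step is to evaluate $H(\rho)$ by the grouping (chain) rule for entropy. Write $\rho$ as the two-stage experiment that first picks the visible block $\{a_0,a_1\}$ versus $\{b_0,b_1\}$ with probabilities $(1-m,m)$, and then splits within the block with probabilities $(\lambda,1-\lambda)$ or $(\mu,1-\mu)$. The rule gives $H(\rho)=h_2(m)+(1-m)h_2(\lambda)+m\,h_2(\mu)$. To make this self-contained, I would expand $-\sum\rho\log\rho$ directly: for example $-(1-m)\lambda\log\bigl((1-m)\lambda\bigr)=-(1-m)\lambda\log(1-m)-(1-m)\lambda\log\lambda$. I would then collect terms using $\lambda+(1-\lambda)=1$ and $\mu+(1-\mu)=1$.

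For the maximization, $m$ is fixed, so the two summands separate. Since $1-m>0$ and $m>0$, it suffices to maximize $h_2(\lambda)$ and $h_2(\mu)$ separately on $(0,1)$. I would use the derivative facts already recorded in the proof of \cref{thm:aliased-canonical-completion}: $h_2'(p)=\log\frac{1-p}{p}$ and $h_2''<0$. These show that $h_2$ is strictly concave with unique critical point $p=1/2$. Hence $(1/2,1/2)$ is the unique maximizer of $H_{\mathrm{hid}}$ on $(0,1)^2$, because any other pair strictly lowers at least one summand with positive weight.

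The closing interpretive sentence needs only a comparison. The visible selector is fixed as Bernoulli$(m)$ regardless of $(\lambda,\mu)$, whereas choosing $(\lambda^\star,\mu^\star)$ required the additional objective $H_{\mathrm{hid}}$. No step is a real obstacle; the only point needing care is the observation that identical rows make the chain i.i.d. That observation is what lets us bypass the weighted row-entropy sum. If one prefers the general formula instead, each row entropy equals $H(\rho)$ and the weights sum to one, so the result is the same.
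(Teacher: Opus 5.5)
Your proposal is correct and follows essentially the same route as the paper. Identical rows make the stationary hidden chain i.i.d.\ with law equal to the common row, so its entropy rate is that row's Shannon entropy. Expanding and grouping gives $h_2(m)+(1-m)h_2(\lambda)+mh_2(\mu)$, and strict concavity of $h_2$ with unique maximum at $1/2$ yields the unique maximizer $(1/2,1/2)$. Your check via \cref{prop:aliased-binary-realization} that the common row equals $\pi^E$ makes explicit the stationarity of the i.i.d.\ chain, which the paper leaves implicit.
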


\begin{proof}
For the selected visible law one has $a=m$ and $b=1-m$, so every row of $K_{\lambda,\mu}$ is the same probability vector
\begin{equation*}
r_{\lambda,\mu}:=\bigl((1-m)\lambda,(1-m)(1-\lambda),m\mu,m(1-\mu)\bigr).
\end{equation*}
Hence the hidden chain is i.i.d.\ with common one-step law $r_{\lambda,\mu}$. In particular, its entropy rate equals the Shannon entropy of this single-step distribution:
\begin{equation*}
H_{\mathrm{hid}}(\lambda,\mu)
=-\sum_{x\in E}r_{\lambda,\mu}(x)\log r_{\lambda,\mu}(x).
\end{equation*}
Expanding the four terms gives
\begin{align*}
H_{\mathrm{hid}}(\lambda,\mu)
&=-(1-m)\lambda\log\bigl((1-m)\lambda\bigr)
-(1-m)(1-\lambda)\log\bigl((1-m)(1-\lambda)\bigr)\\
&\qquad -m\mu\log(m\mu)-m(1-\mu)\log\bigl(m(1-\mu)\bigr).
\end{align*}
Grouping the $\log(1-m)$, $\log m$, $\log\lambda$, $\log(1-\lambda)$, $\log\mu$, and $\log(1-\mu)$ terms yields
\begin{equation*}
H_{\mathrm{hid}}(\lambda,\mu)
=h_2(m)+(1-m)h_2(\lambda)+mh_2(\mu).
\end{equation*}
Since the binary entropy function is strictly concave on $(0,1)$ and has its unique maximizer at $1/2$, the weighted sum above is strictly concave in $(\lambda,\mu)$ and is uniquely maximized at
\begin{equation*}
\lambda^\star=\mu^\star=\frac12.
\end{equation*}
This maximizer depends on the hidden splitting coordinates $\lambda$ and $\mu$, which are invisible at the retained visible level. Therefore maximizing hidden entropy over hidden completions is a different problem from selecting the canonical visible completion by visible entropy-rate maximization.
\end{proof}

\begin{proposition}[Reduction to visible optimization]\label{prop:observable-determined-optimization-reduction}
Let
\begin{equation*}
S:\mathcal P_{\mathrm{stat}}(H^{\mathbb Z})\to \mathbb R
\end{equation*}
be an observable-determined real-valued criterion in the sense of \cref{prop:factorization-visible-law}. Let $\widetilde S$ be the unique visible-law functional satisfying
\begin{equation*}
S=\widetilde S\circ \Pi_\#.
\end{equation*}
Then, for every nonempty class
\begin{equation*}
\mathcal C\subseteq \mathcal P_{\mathrm{stat}}(H^{\mathbb Z}),
\end{equation*}
one has
\begin{equation*}
\sup_{Q\in \mathcal C} S(Q)=\sup_{\nu\in \Pi_\#\mathcal C}\widetilde S(\nu).
\end{equation*}
Moreover, a hidden law $Q^\star\in \mathcal C$ maximizes $S$ over $\mathcal C$ if and only if its visible law $\nu^\star:=\Pi_\#Q^\star$ maximizes $\widetilde S$ over $\Pi_\#\mathcal C$.

In particular, observable-determined optimization can determine at most a visible law, it cannot distinguish two hidden laws in $\mathcal C$ that lie in the same observational fiber.
\end{proposition}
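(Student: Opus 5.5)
The argument is elementary and rests only on the factorization $S=\widetilde S\circ\Pi_\#$ supplied by \cref{prop:factorization-visible-law}. Write $\Pi_\#\mathcal C:=\{\Pi_\#Q:Q\in\mathcal C\}$. This set is nonempty because $\mathcal C$ is nonempty.

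\emph{Equality of suprema.} I will compare the two sets of values. For every $Q\in\mathcal C$ we have $S(Q)=\widetilde S(\Pi_\#Q)$. Hence
\begin{equation*}
\{S(Q):Q\in\mathcal C\}=\{\widetilde S(\nu):\nu\in\Pi_\#\mathcal C\},
\end{equation*}
since every $\nu\in\Pi_\#\mathcal C$ is of the form $\Pi_\#Q$ for some $Q\in\mathcal C$. Two identical subsets of $\mathbb R$ have the same supremum in $(-\infty,+\infty]$, which gives the displayed identity. No finiteness assumption is needed.

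\emph{Characterization of maximizers.} Suppose first that $Q^\star$ maximizes $S$ over $\mathcal C$. Take any $\nu\in\Pi_\#\mathcal C$ and pick $Q\in\mathcal C$ with $\Pi_\#Q=\nu$. Then
\begin{equation*}
\widetilde S(\nu)=S(Q)\le S(Q^\star)=\widetilde S(\nu^\star),
\end{equation*}
and $\nu^\star\in\Pi_\#\mathcal C$, so $\nu^\star$ maximizes $\widetilde S$ over $\Pi_\#\mathcal C$.

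Conversely, suppose $\nu^\star=\Pi_\#Q^\star$ maximizes $\widetilde S$ over $\Pi_\#\mathcal C$. For every $Q\in\mathcal C$,
\begin{equation*}
S(Q)=\widetilde S(\Pi_\#Q)\le\widetilde S(\nu^\star)=S(Q^\star),
\end{equation*}
so $Q^\star$ maximizes $S$ over $\mathcal C$.

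\emph{Final clause.} Let $Q_1,Q_2\in\mathcal C$ with $\Pi_\#Q_1=\Pi_\#Q_2$. Then $S(Q_1)=S(Q_2)$. By the characterization just proved, $Q_1$ is a maximizer if and only if $Q_2$ is. More generally, the set of maximizers is the full preimage
\begin{equation*}
\mathcal C\cap\Pi_\#^{-1}(\text{visible maximizers}),
\end{equation*}
which is a union of the sets $\mathcal C\cap\mathcal E_\Pi(\nu)$. So the optimization determines at most the visible law.

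\emph{Main subtlety.} The only point to watch is that the converse direction uses the premise that $Q^\star$ itself lies in $\mathcal C$, not merely some law in the fiber of $\nu^\star$. The statement assumes this, and with it the argument goes through.
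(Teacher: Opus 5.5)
Your proposal is correct and follows the paper's proof essentially step for step: identical value sets $\{S(Q):Q\in\mathcal C\}=\{\widetilde S(\nu):\nu\in\Pi_\#\mathcal C\}$ give equal suprema, both directions of the maximizer equivalence follow from $S=\widetilde S\circ\Pi_\#$, and laws in the same fiber receive the same value of $S$. Your additional observation, that the maximizer set is a union of the sets $\mathcal C\cap\mathcal E_\Pi(\nu)$, is a slightly sharper form of the paper's final clause and is correct.
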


\begin{proof}
By \cref{prop:factorization-visible-law}, there exists a unique map
\begin{equation*}
\widetilde S:\Pi_\#\bigl(\mathcal P_{\mathrm{stat}}(H^{\mathbb Z})\bigr)\to \mathbb R
\end{equation*}
such that
\begin{equation*}
S(Q)=\widetilde S(\Pi_\#Q)
\qquad\text{for every }Q\in \mathcal P_{\mathrm{stat}}(H^{\mathbb Z}).
\end{equation*}
Therefore
\begin{equation*}
\{S(Q):Q\in \mathcal C\}
=
\{\widetilde S(\nu):\nu\in \Pi_\#\mathcal C\}.
\end{equation*}
Taking suprema gives
\begin{equation*}
\sup_{Q\in \mathcal C} S(Q)=\sup_{\nu\in \Pi_\#\mathcal C}\widetilde S(\nu).
\end{equation*}
This proves the first assertion.

Now let $Q^\star\in \mathcal C$, and write $\nu^\star:=\Pi_\#Q^\star$. If $Q^\star$ maximizes $S$ over $\mathcal C$, then for every $\nu\in \Pi_\#\mathcal C$ there exists some $Q\in \mathcal C$ with $\Pi_\#Q=\nu$. Hence
\begin{equation*}
\widetilde S(\nu)=S(Q)\le S(Q^\star)=\widetilde S(\nu^\star),
\end{equation*}
so $\nu^\star$ maximizes $\widetilde S$ over $\Pi_\#\mathcal C$.

Conversely, if $\nu^\star$ maximizes $\widetilde S$ over $\Pi_\#\mathcal C$, then for every $Q\in \mathcal C$,
\begin{equation*}
S(Q)=\widetilde S(\Pi_\#Q)\le \widetilde S(\nu^\star)=S(Q^\star),
\end{equation*}
so $Q^\star$ maximizes $S$ over $\mathcal C$. This proves the equivalence.

Finally, if $Q_1,Q_2\in \mathcal C$ satisfy
\begin{equation*}
\Pi_\#Q_1=\Pi_\#Q_2,
\end{equation*}
then
\begin{equation*}
S(Q_1)=\widetilde S(\Pi_\#Q_1)=\widetilde S(\Pi_\#Q_2)=S(Q_2).
\end{equation*}
Thus observable-determined optimization cannot distinguish points in the same observational fiber.
\end{proof}

\begin{corollary}[Maximality of visible selection]
Within the framework of this paper, any observable-determined optimization principle can canonically determine at most a visible law. In particular, on the selected Bernoulli fiber of \cref{thm:aliased-canonical-completion}, the canonical visible selector is the finest resolution obtainable without adding hidden-side information beyond the observation experiment and retained visible observables.
\end{corollary}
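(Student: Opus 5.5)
The corollary combines \cref{prop:observable-determined-optimization-reduction} with the aliased construction, specifically \cref{cor:selected-visible-fiber-continuum} and \cref{thm:aliased-canonical-completion}. I would prove its two assertions in that order.

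\emph{First assertion (general maximality).} Take any observable-determined criterion $S$ on a nonempty class $\mathcal C$ of hidden stationary laws. By \cref{prop:observable-determined-optimization-reduction}, $S=\widetilde S\circ\Pi_\#$. A hidden law $Q^\star$ is optimal for $S$ if and only if $\Pi_\#Q^\star$ is optimal for $\widetilde S$ over $\Pi_\#\mathcal C$. Consequently the set of hidden maximizers is a union of full fibers $\mathcal E_\Pi(\nu)\cap\mathcal C$, one for each visible maximizer $\nu$. The only object such a principle can therefore specify canonically is a visible law, or a set of visible laws. This is the sense in which it determines ``at most a visible law.''

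\emph{Second assertion (the Bernoulli fiber).} Specialize to the aliased architecture with retained mean $m$. By \cref{thm:aliased-canonical-completion}, the visible entropy-rate selector already singles out a unique visible law $\nu^\star$, the Bernoulli$(m)$ law. By \cref{cor:selected-visible-fiber-continuum}, the fiber $\mathcal E_\Pi(\nu^\star)$ contains the infinite family $Q_{\lambda,\mu}$. Apply the last clause of \cref{prop:observable-determined-optimization-reduction} with $\mathcal C\supseteq\{Q_{\lambda,\mu}\}$: every observable-determined $S$ is constant on this family. Hence no such $S$ can isolate a single $Q_{\lambda,\mu}$, and any finer selection needs hidden-side input. \Cref{prop:contrast-hidden-entropy} illustrates this, since isolating $\lambda^\star=\mu^\star=\tfrac12$ there requires a hidden-level objective. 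The visible selector resolves the problem down to one fiber, which is the finest partition any observable-determined principle can induce. That is the claimed finest resolution.

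\emph{Main obstacle.} The statement is informal: the phrases ``finest resolution'' and ``without adding hidden-side information'' must be made precise. I would formalize ``resolution'' as the partition of $\mathcal C$ into the level sets, or argmax sets, induced by a criterion. I would then show that every observable-determined criterion induces a partition coarser than or equal to the fiber partition $\{\mathcal E_\Pi(\nu)\cap\mathcal C\}$. The visible selector attains this bound at $\nu^\star$, since its argmax is exactly one fiber. The remaining work is routine bookkeeping around this definition.
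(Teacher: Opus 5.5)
Your proposal is correct and follows the paper's argument: the first assertion is the final clause of \cref{prop:observable-determined-optimization-reduction}, and the second applies that clause to the infinite fiber of \cref{cor:selected-visible-fiber-continuum}. Your partition-based formalization of ``finest resolution'' (observable-determined criteria induce partitions no finer than the fiber partition, and the visible selector's argmax is exactly one fiber) makes precise what the paper leaves informal, but the route is the same.
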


\begin{proof}
The first statement is exactly the final assertion of \cref{prop:observable-determined-optimization-reduction}. The second follows by applying that proposition to the non-singleton selected fiber identified in \cref{cor:selected-visible-fiber-continuum}: because the fiber contains infinitely many hidden laws with the same visible image, no observable-determined optimization principle can distinguish among them. Therefore the visible selector is the maximal canonical resolution available at the observable level.
\end{proof}

\begin{proposition}[Hidden entropy and fiber invariance]\label{prop:hidden-entropy-not-fiber-invariant}
Fix $m\in(0,1)$ and let $\nu^\star$ be the selected visible Bernoulli$(m)$ law from \cref{thm:aliased-canonical-completion}. Then the hidden entropy rate is not constant on the observational fiber $\mathcal E_\Pi(\nu^\star)$. More precisely, for the stationary hidden laws $Q_{\lambda,\mu}$ from \cref{cor:selected-visible-fiber-continuum}, one has
\begin{equation*}
H_{\mathrm{hid}}(Q_{\lambda,\mu})
=h_2(m)+(1-m)h_2(\lambda)+mh_2(\mu),
\end{equation*}
so distinct points of the same observational fiber can have different hidden entropy rates.
\end{proposition}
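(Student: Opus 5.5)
The plan is to reuse the computation already carried out in the proof of \cref{prop:contrast-hidden-entropy} and then supply the fiber-membership and distinctness facts from \cref{cor:selected-visible-fiber-continuum}.

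\emph{Step 1: the hidden chain is i.i.d.} Fix $(\lambda,\mu)\in(0,1)^2$. Every row of $K_{\lambda,\mu}$ equals the same probability vector
\begin{equation*}
r_{\lambda,\mu}=\bigl((1-m)\lambda,(1-m)(1-\lambda),m\mu,m(1-\mu)\bigr).
\end{equation*}
Hence the stationary hidden chain with law $Q_{\lambda,\mu}$ is i.i.d.\ with one-step law $r_{\lambda,\mu}$. Its stationary distribution is $r_{\lambda,\mu}$, in agreement with part (ii) of \cref{prop:aliased-binary-realization} for $a=m$, $b=1-m$.

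\emph{Step 2: the entropy formula.} The entropy rate of a stationary Markov chain is $\sum_x\pi(x)H(K(x,\cdot))$. With all rows equal, this reduces to the Shannon entropy $H(r_{\lambda,\mu})$. Grouping the logarithms as in \cref{prop:contrast-hidden-entropy}, which amounts to the chain rule $H(Y,X)=H(Y)+H(X\mid Y)$ with $Y=\phi(X)$, gives
\begin{equation*}
H_{\mathrm{hid}}(Q_{\lambda,\mu})=h_2(m)+(1-m)h_2(\lambda)+mh_2(\mu).
\end{equation*}

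\emph{Step 3: non-constancy on the fiber.} By \cref{cor:selected-visible-fiber-continuum}, each $Q_{\lambda,\mu}$ lies in $\mathcal E_\Pi(\nu^\star)$. To see that the hidden entropy rate varies there, compare two points:
\begin{equation*}
H_{\mathrm{hid}}(Q_{1/2,1/2})-H_{\mathrm{hid}}(Q_{1/4,1/2})=(1-m)\bigl(\log 2-h_2(1/4)\bigr)>0.
\end{equation*}
This is positive because $m<1$ and $h_2$ is uniquely maximized at $1/2$. The two laws are distinct, again by \cref{cor:selected-visible-fiber-continuum}. So two points of the same observational fiber have different hidden entropy rates.

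\emph{Main obstacle.} The only step that is not immediate is making sure ``hidden entropy rate'' refers to the path-law entropy rate of $Q_{\lambda,\mu}$. For an i.i.d.\ stationary law this is exactly $H(r_{\lambda,\mu})$. I would state it via $\lim_n H(X_0^{n-1})/n=H(X_0)$ for i.i.d.\ sequences, which avoids any subtlety.
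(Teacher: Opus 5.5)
Your proposal is correct and follows essentially the same route as the paper: fiber membership from \cref{cor:selected-visible-fiber-continuum}, the entropy formula from the i.i.d.\ structure of $K_{\lambda,\mu}$ as in \cref{prop:contrast-hidden-entropy}, and a comparison of two parameter pairs. The differences are cosmetic. You re-derive the formula via the chain rule $H(X)=H(\phi(X))+H(X\mid\phi(X))$ instead of citing it, and you compare $(1/2,1/2)$ with $(1/4,1/2)$, getting the gap $(1-m)\bigl(\log 2-h_2(1/4)\bigr)>0$. The paper instead compares with $(1/4,1/4)$, where the weights collapse and the gap is $h_2(1/2)-h_2(1/4)$.
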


\begin{proof}
By \cref{cor:selected-visible-fiber-continuum}, every law $Q_{\lambda,\mu}$ belongs to the same observational fiber $\mathcal E_\Pi(\nu^\star)$. By \cref{prop:contrast-hidden-entropy}, the hidden entropy rate of the corresponding stationary hidden chain is
\begin{equation*}
H_{\mathrm{hid}}(Q_{\lambda,\mu})
=h_2(m)+(1-m)h_2(\lambda)+mh_2(\mu).
\end{equation*}
It therefore suffices to exhibit two parameter pairs in $(0,1)^2$ for which this value is different.

Choose
\begin{equation*}
(\lambda,\mu)=\Bigl(\frac12,\frac12\Bigr)
\qquad\text{and}\qquad
(\lambda',\mu')=\Bigl(\frac14,\frac14\Bigr).
\end{equation*}
Since the binary entropy function is strictly increasing on $(0,1/2)$, one has
\begin{equation*}
h_2\Bigl(\frac14\Bigr)<h_2\Bigl(\frac12\Bigr).
\end{equation*}
Hence
\begin{align*}
H_{\mathrm{hid}}(Q_{1/4,1/4})
&=h_2(m)+(1-m)h_2\Bigl(\frac14\Bigr)+mh_2\Bigl(\frac14\Bigr)\\
&=h_2(m)+h_2\Bigl(\frac14\Bigr)\\
&<h_2(m)+h_2\Bigl(\frac12\Bigr)\\
&=H_{\mathrm{hid}}(Q_{1/2,1/2}).
\end{align*}
Thus the hidden entropy rate is not constant on $\mathcal E_\Pi(\nu^\star)$.
\end{proof}

\begin{corollary}[Fiber invariance of visible selection]
Fix $m\in(0,1)$ and the selected visible law $\nu^\star$ from \cref{thm:aliased-canonical-completion}. Then the canonical visible selector depends only on the visible feasible class and is therefore unchanged across the hidden observational fiber of $\nu^\star$, whereas hidden-entropy maximization distinguishes between hidden laws inside that same fiber.
\end{corollary}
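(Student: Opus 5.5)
The statement has two parts, and the plan treats each separately.

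\emph{Invariance of the visible selector.} First we check what the selector actually depends on. By the definition of the feasible class in \eqref{eq:feasible-class}, $\mathcal U_\Pi(\nu)$ depends only on three things: the visible alphabet $A$, the retained features $G_1,\dots,G_m$, and the target vector $b(\nu)$. In the aliased setting of \cref{thm:aliased-canonical-completion}, the class is $\mathcal V_m^{\mathrm{aliased}}$, fixed by the single retained mean $m$. The entropy-rate functional $J$ in \eqref{eq:entropy-rate} is computed from the visible block law $u$ alone.

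Now take any hidden law $Q$ in the fiber $\mathcal E_\Pi(\nu^\star)$. By definition of the fiber, $\Pi_\#Q=\nu^\star$. So the visible law, its mean $m$, and hence the feasible class and the objective are the same for every $Q$ in the fiber. The maximizer provided by \cref{thm:aliased-canonical-completion} is therefore the same Bernoulli$(m)$ law $\nu^\star$ for every hidden point.

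This can also be phrased through \cref{prop:observable-determined-optimization-reduction}. Take $S(Q):=J$ evaluated at the $(r+1)$-block law of $\Pi_\#Q$. This $S$ is observable-determined, so it is constant on $\mathcal E_\Pi(\nu^\star)$. Consequently the selection cannot vary across the fiber.

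\emph{Hidden-entropy maximization distinguishes laws in the fiber.} By \cref{cor:selected-visible-fiber-continuum}, the family $\{Q_{\lambda,\mu}\}$ lies in $\mathcal E_\Pi(\nu^\star)$. By \cref{prop:hidden-entropy-not-fiber-invariant}, the hidden entropy rate
\begin{equation*}
H_{\mathrm{hid}}(Q_{\lambda,\mu})=h_2(m)+(1-m)h_2(\lambda)+mh_2(\mu)
\end{equation*}
is not constant on this family. For example, $H_{\mathrm{hid}}(Q_{1/4,1/4})<H_{\mathrm{hid}}(Q_{1/2,1/2})$. Moreover, \cref{prop:contrast-hidden-entropy} shows that over this family the hidden entropy has the unique maximizer $\lambda=\mu=1/2$. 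Hidden-entropy maximization therefore picks out one specific point of the fiber and ranks the others strictly below it.

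\emph{Main obstacle.} The only subtle point is wording. ``Depends only on the visible feasible class'' should be stated precisely as the factorization $S=\widetilde S\circ\Pi_\#$, with the feasible class indexed by the visible law. Once stated that way, everything else follows directly from the cited results.
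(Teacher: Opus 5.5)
Your proposal is correct and follows the paper's own argument: the visible selector is an entropy-rate maximizer over a feasible class fixed by the retained visible data, which is the same for every $Q$ with $\Pi_\#Q=\nu^\star$, while \cref{prop:hidden-entropy-not-fiber-invariant} (applied to the family from \cref{cor:selected-visible-fiber-continuum}) shows that hidden entropy varies across that fiber. Your extra rephrasing through \cref{prop:observable-determined-optimization-reduction} is not needed; if you keep it, apply the factorization to the selection map $Q\mapsto\arg\max_{u\in\mathcal U_\Pi(\Pi_\#Q)}J(u)$ rather than to the value of $J$, because constancy of an optimal value does not by itself force constancy of the maximizer.
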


\begin{proof}
The first statement is immediate from the definition of the selector: it is an optimizer of the visible entropy-rate functional on a visible feasible class, so once the visible feasible class is fixed, hidden realizations play no further role. In the present application, \cref{thm:aliased-canonical-completion} shows that the selected visible law is the Bernoulli$(m)$ law.

By contrast, \cref{prop:hidden-entropy-not-fiber-invariant} shows that the hidden entropy rate varies along the hidden observational fiber $\mathcal E_\Pi(\nu^\star)$. Therefore a hidden-entropy criterion distinguishes between hidden laws that are observationally indistinguishable at the selected visible level. This proves the stated contrast.
\end{proof}

\begin{proposition}[Fiber constancy of observable rules]\label{prop:observable-determined-fiber-constant}
Fix a visible stationary law $\nu$ and an observation map $\Pi$. Let $S$ be any rule defined on hidden stationary laws with the property that whenever
\begin{equation*}
\Pi_\#Q=\Pi_\#Q',
\end{equation*}
one has
\begin{equation*}
S(Q)=S(Q').
\end{equation*}
Then $S$ is constant on the observational fiber $\mathcal E_\Pi(\nu)$. In particular, any selection criterion determined only by the observation experiment and retained visible observables must be fiber-constant on each observational fiber.
\end{proposition}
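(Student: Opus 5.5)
The statement is essentially definitional, so the plan is a short direct argument from the definition of the observational fiber in \cref{sec:observational-fibers}.

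Take two arbitrary elements $Q,Q'\in\mathcal E_\Pi(\nu)$. By the definition of the fiber, $\Pi_\#Q=\nu$ and $\Pi_\#Q'=\nu$, so $\Pi_\#Q=\Pi_\#Q'$. The hypothesis on $S$ then gives $S(Q)=S(Q')$. Since $Q,Q'$ were arbitrary, $S$ takes a single value on $\mathcal E_\Pi(\nu)$. If the fiber is empty or a singleton, the claim holds vacuously, and I will note this explicitly.

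For the ``in particular'' clause, I will make precise what ``determined only by the observation experiment and retained visible observables'' means. Such a criterion evaluates a hidden law $Q$ only through data computed from $\Pi_\#Q$. These data are the retained feature values $b(\Pi_\#Q)$, the visible block law, and quantities built from them. So it factors as $S=\widetilde S\circ\Pi_\#$ for some visible-level functional $\widetilde S$, as in \cref{prop:observable-determined-optimization-reduction}. Any such factorized $S$ satisfies the hypothesis of the first part, because $\Pi_\#Q=\Pi_\#Q'$ implies $\widetilde S(\Pi_\#Q)=\widetilde S(\Pi_\#Q')$. Applying the first part then gives fiber-constancy on every fiber $\mathcal E_\Pi(\nu)$, since $\nu$ was arbitrary.

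The only point needing care is this formalization of ``determined only by'' as factorization through $\Pi_\#$. I will state that interpretation as the working definition, consistent with how \cref{prop:factorization-visible-law} is used earlier. Once that is fixed, the argument involves no estimates.
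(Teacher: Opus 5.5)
Your proposal is correct and matches the paper's proof: take two points of $\mathcal E_\Pi(\nu)$, note that both push forward to $\nu$, and apply the hypothesis on $S$. For the final clause, the paper argues informally that a criterion depending only on the retained visible specification satisfies the displayed implication; your formalization as a factorization through $\Pi_\#$ leads to the same conclusion and is a little more explicit.
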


\begin{proof}
Let $Q,Q'\in\mathcal E_\Pi(\nu)$. By the definition of the observational fiber in \cref{sec:observational-fibers},
\begin{equation*}
\Pi_\#Q=\nu=\Pi_\#Q'.
\end{equation*}
Therefore the defining property of $S$ gives
\begin{equation*}
S(Q)=S(Q').
\end{equation*}
Since $Q,Q'$ were arbitrary elements of $\mathcal E_\Pi(\nu)$, the rule $S$ is constant on that fiber.

For the final sentence, if a criterion is determined only by the observation experiment and retained visible observables, then any two hidden laws that induce the same retained visible specification are indistinguishable for that criterion. Hence the criterion satisfies the displayed implication above and is therefore fiber-constant.
\end{proof}

\begin{corollary}[Hidden entropy is not observable]
Fix $m\in(0,1)$ and the selected visible law $\nu^\star$ from \cref{thm:aliased-canonical-completion}. Hidden-entropy maximization on the hidden observational fiber $\mathcal E_\Pi(\nu^\star)$ is not an observable-determined criterion in the sense of \cref{prop:observable-determined-fiber-constant}.
\end{corollary}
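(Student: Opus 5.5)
The plan is to argue by contradiction, combining \cref{prop:observable-determined-fiber-constant} with \cref{prop:hidden-entropy-not-fiber-invariant}. Suppose the hidden-entropy criterion $S(Q):=H_{\mathrm{hid}}(Q)$, or equivalently the rule that selects a hidden law by maximizing it on $\mathcal E_\Pi(\nu^\star)$, were observable-determined in the sense of \cref{prop:observable-determined-fiber-constant}. That would mean $\Pi_\#Q=\Pi_\#Q'$ implies $S(Q)=S(Q')$. Then \cref{prop:observable-determined-fiber-constant} would force $S$ to be constant on the fiber $\mathcal E_\Pi(\nu^\star)$.

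To get the contradiction, I would use the explicit family $Q_{\lambda,\mu}$. By \cref{cor:selected-visible-fiber-continuum}, every $Q_{\lambda,\mu}$ lies in $\mathcal E_\Pi(\nu^\star)$. By \cref{prop:hidden-entropy-not-fiber-invariant}, the two points $(1/2,1/2)$ and $(1/4,1/4)$ give
\begin{equation*}
H_{\mathrm{hid}}(Q_{1/4,1/4})=h_2(m)+h_2(1/4)<h_2(m)+h_2(1/2)=H_{\mathrm{hid}}(Q_{1/2,1/2}).
\end{equation*}
So $S$ is not constant on the fiber, which contradicts the previous paragraph.

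To address the maximization rule itself rather than just the criterion, I would add a second observation. By \cref{prop:contrast-hidden-entropy}, the maximizer over this family is unique at $\lambda^\star=\mu^\star=1/2$, and it is a strict maximizer on the fiber. A fiber-constant criterion can never single out one point of a non-singleton fiber, since every point of the fiber has the same value. Hidden-entropy maximization does single out $Q_{1/2,1/2}$ over other points such as $Q_{1/4,1/4}$, so the rule is not fiber-constant either.

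The main obstacle is interpretive rather than computational. The corollary speaks of a "criterion in the sense of" \cref{prop:observable-determined-fiber-constant}, and I need to state explicitly that this means the displayed implication $\Pi_\#Q=\Pi_\#Q'\Rightarrow S(Q)=S(Q')$, applied to $S=H_{\mathrm{hid}}$. Once that reading is fixed, the argument is just the contrapositive of \cref{prop:observable-determined-fiber-constant}, with \cref{prop:hidden-entropy-not-fiber-invariant} supplying the witness pair.
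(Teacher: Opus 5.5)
Your argument is correct and is essentially the paper's proof: it is the contrapositive of \cref{prop:observable-determined-fiber-constant}, with \cref{prop:hidden-entropy-not-fiber-invariant} supplying non-constancy of $H_{\mathrm{hid}}$ on $\mathcal E_\Pi(\nu^\star)$ through the pair $(1/2,1/2)$, $(1/4,1/4)$ from \cref{cor:selected-visible-fiber-continuum}. Your third paragraph is unnecessary and should be dropped or rephrased. \cref{prop:contrast-hidden-entropy} gives strict maximality only within the $K_{\lambda,\mu}$ family, not on all of $\mathcal E_\Pi(\nu^\star)$. Also, a selection map that returns the hidden-entropy maximizer of the fiber is itself trivially constant on that fiber, so the non-observability claim must be made about the criterion $H_{\mathrm{hid}}$, as your first two paragraphs do.
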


\begin{proof}
By \cref{prop:observable-determined-fiber-constant}, every observable-determined rule must be constant on each observational fiber. But \cref{prop:hidden-entropy-not-fiber-invariant} shows that hidden entropy takes different values at different points of the single fiber $\mathcal E_\Pi(\nu^\star)$. Therefore hidden entropy cannot define an observable-determined criterion.
\end{proof}

\begin{proposition}[Hidden selectors cannot separate a fiber]\label{prop:no-observable-determined-hidden-selector}
Fix a visible stationary law $\nu$ and an observation map $\Pi$. Assume that the observational fiber $\mathcal E_\Pi(\nu)$ contains at least two distinct hidden stationary laws. Let
\begin{equation*}
T:\mathcal E_\Pi(\nu)\to \mathcal P_{\mathrm{stat}}(H^{\mathbb Z})
\end{equation*}
be a rule such that
\begin{enumerate}[label=(\roman*),leftmargin=2.5em]
\item $T(Q)\in \mathcal E_\Pi(\nu)$ for every $Q\in \mathcal E_\Pi(\nu)$,
\item $T$ is observable-determined in the sense of \cref{prop:observable-determined-fiber-constant}.
\end{enumerate}
Then $T$ is constant on $\mathcal E_\Pi(\nu)$. In particular, $T$ cannot separate distinct points of the fiber, equivalently, no observable-determined hidden selector can be injective on a non-singleton observational fiber.
\end{proposition}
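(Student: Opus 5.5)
The plan is to apply \cref{prop:observable-determined-fiber-constant} directly to $T$, viewed as a rule defined on the fiber. Hypothesis (ii) says that $T$ is observable-determined. I will read this in the sense of that proposition: whenever $Q,Q'\in\mathcal E_\Pi(\nu)$ satisfy $\Pi_\#Q=\Pi_\#Q'$, one has $T(Q)=T(Q')$.

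The one point that needs care is that \cref{prop:observable-determined-fiber-constant} is stated for real-valued rules $S$, whereas $T$ takes values in $\mathcal P_{\mathrm{stat}}(H^{\mathbb Z})$. I expect this to be the only obstacle, and it is a formal one. Inspecting that proof shows it never uses the codomain: it only uses the implication $\Pi_\#Q=\Pi_\#Q'\Rightarrow S(Q)=S(Q')$ together with the definition of the fiber. I will therefore repeat the argument verbatim for $T$. Alternatively, one can reduce to the real-valued case by composing $T$ with an arbitrary real functional $g$ on $\mathcal P_{\mathrm{stat}}(H^{\mathbb Z})$, showing $g\circ T$ is constant on the fiber for every $g$. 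This fails to force $T$ itself to be constant unless the $g$'s separate points, so I will prefer the direct verbatim argument.

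Concretely, take any $Q,Q'\in\mathcal E_\Pi(\nu)$. By the definition \eqref{eq:observational-fiber}, $\Pi_\#Q=\nu=\Pi_\#Q'$, and the observable-determined property then gives $T(Q)=T(Q')$. Hence $T$ is constant on $\mathcal E_\Pi(\nu)$. By (i), its common value lies in $\mathcal E_\Pi(\nu)$, though this is not needed for constancy.

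For the final assertion, let $Q_1\neq Q_2$ be two distinct elements of the fiber, which exist by hypothesis. Constancy gives $T(Q_1)=T(Q_2)$, so $T$ is not injective and cannot separate $Q_1$ from $Q_2$. For a concrete instance, one can cite \cref{cor:selected-visible-fiber-continuum}, where the fiber of $\nu^\star$ is infinite.
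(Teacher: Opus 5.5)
Your proof is correct and matches the paper's: both use the fiber-constancy of \cref{prop:observable-determined-fiber-constant} to get $T(Q)=Q^\dagger\in\mathcal E_\Pi(\nu)$ for every $Q$ in the fiber, and then observe that a constant map on a set with two distinct points cannot be injective. The one difference is that the paper cites that proposition directly rather than re-running its argument; your codomain worry is unnecessary, since the proposition is stated for an arbitrary rule $S$, not a real-valued one.
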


\begin{proof}
By \cref{prop:observable-determined-fiber-constant}, every observable-determined rule is constant on each observational fiber. Therefore there exists some hidden stationary law $Q^\dagger\in \mathcal E_\Pi(\nu)$ such that
\begin{equation*}
T(Q)=Q^\dagger
\qquad\text{for every }Q\in \mathcal E_\Pi(\nu).
\end{equation*}
Since the fiber contains at least two distinct points, a constant map on that fiber cannot be injective. Hence $T$ cannot separate distinct points of $\mathcal E_\Pi(\nu)$.
\end{proof}

\begin{corollary}[Application to the selected Bernoulli fiber]
Fix $m\in(0,1)$ and let $\nu^\star$ be the selected visible Bernoulli$(m)$ law from \cref{thm:aliased-canonical-completion}. Then no observable-determined hidden selector can separate points of the observational fiber $\mathcal E_\Pi(\nu^\star)$. In particular, no such selector can recover a hidden law from the selected visible law in a one-to-one way.
\end{corollary}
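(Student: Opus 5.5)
The plan is to obtain the corollary directly from \cref{prop:no-observable-determined-hidden-selector}, applied with $\nu=\nu^\star$ and $\Pi$ the pathwise observation map induced by $\phi$. That proposition needs only one hypothesis to be checked: the observational fiber $\mathcal E_\Pi(\nu^\star)$ must contain at least two distinct hidden stationary laws.

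This hypothesis is supplied by \cref{cor:selected-visible-fiber-continuum}. For every $(\lambda,\mu)\in(0,1)^2$ the law $Q_{\lambda,\mu}$ lies in $\mathcal E_\Pi(\nu^\star)$, and distinct parameter pairs give distinct laws. I will take the explicit pair $Q_{1/2,1/2}\neq Q_{1/4,1/4}$. Their distinctness also follows from \cref{prop:hidden-entropy-not-fiber-invariant}, since their hidden entropy rates differ.

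Now let $T$ be any observable-determined hidden selector on this fiber, meaning that $T$ maps the fiber into itself and satisfies condition (ii) of \cref{prop:no-observable-determined-hidden-selector}. The proposition then says $T$ is constant on $\mathcal E_\Pi(\nu^\star)$. Consequently $T(Q_{1/2,1/2})=T(Q_{1/4,1/4})$, so $T$ separates no pair of distinct fiber points and is not injective. This is the first assertion of the corollary.

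For the \emph{in particular} clause, I interpret \emph{recovering a hidden law from the selected visible law in a one-to-one way} as the existence of such a $T$ that is injective on the fiber, so that distinct hidden laws with image $\nu^\star$ would be told apart. Constancy on a fiber with at least two points rules this out.

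The main obstacle is not technical; it is making this interpretation precise. Any rule that depends only on the observation experiment and the retained visible observables takes $\nu^\star$ as its sole input, so it must factor through $\Pi_\#$, and $\Pi_\#$ is constant on the fiber. I will state this explicitly by noting that such a rule satisfies the hypothesis of \cref{prop:observable-determined-fiber-constant}. The corollary then reduces entirely to the two cited results.
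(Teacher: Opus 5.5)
Your proposal is correct and follows the paper's proof. The paper likewise uses \cref{cor:selected-visible-fiber-continuum} to see that $\mathcal E_\Pi(\nu^\star)$ contains at least two (indeed infinitely many) hidden laws, and then invokes \cref{prop:no-observable-determined-hidden-selector}; your explicit pair $Q_{1/2,1/2}\neq Q_{1/4,1/4}$, the cross-check via \cref{prop:hidden-entropy-not-fiber-invariant}, and your reading of ``one-to-one'' as injectivity on the fiber are harmless elaborations rather than a different route.
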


\begin{proof}
By \cref{cor:selected-visible-fiber-continuum}, the fiber $\mathcal E_\Pi(\nu^\star)$ contains infinitely many distinct hidden stationary laws. The conclusion therefore follows immediately from \cref{prop:no-observable-determined-hidden-selector}.
\end{proof}

\begin{proposition}[Factorization through visible laws]\label{prop:factorization-visible-law}
Let
\begin{equation*}
S:\mathcal P_{\mathrm{stat}}(H^{\mathbb Z})\to \mathcal X
\end{equation*}
be a map into an arbitrary set $\mathcal X$ such that
\begin{equation*}
\Pi_\#Q=\Pi_\#Q'
\quad\Longrightarrow\quad
S(Q)=S(Q').
\end{equation*}
Then there exists a unique map
\begin{equation*}
\widetilde S:\Pi_\#\bigl(\mathcal P_{\mathrm{stat}}(H^{\mathbb Z})\bigr)\to \mathcal X
\end{equation*}
such that
\begin{equation*}
S=\widetilde S\circ \Pi_\#.
\end{equation*}
In particular, every observable-determined rule factors through the visible stationary law.
\end{proposition}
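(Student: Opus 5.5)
The argument is the standard universal property of a quotient by an equivalence relation: I construct $\widetilde S$ by choosing preimages and show that the choice does not matter.

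\emph{Existence.} Let $\mathcal V:=\Pi_\#\bigl(\mathcal P_{\mathrm{stat}}(H^{\mathbb Z})\bigr)$. For each $\nu\in\mathcal V$, the observational fiber $\mathcal E_\Pi(\nu)$ from \cref{sec:observational-fibers} is nonempty by definition of the image. Pick any $Q\in\mathcal E_\Pi(\nu)$ and set $\widetilde S(\nu):=S(Q)$. To see that this value does not depend on the choice of $Q$, take any other $Q'\in\mathcal E_\Pi(\nu)$. Then $\Pi_\#Q=\nu=\Pi_\#Q'$, so the hypothesis gives $S(Q)=S(Q')$; this is exactly the fiber-constancy of \cref{prop:observable-determined-fiber-constant}. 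Hence $\widetilde S$ is well defined. To avoid any appeal to the axiom of choice, I can instead define $\widetilde S(\nu)$ as the unique element of the singleton $S(\mathcal E_\Pi(\nu))$. Finally, for every hidden law $Q$, the law $Q$ lies in $\mathcal E_\Pi(\Pi_\#Q)$, so $\widetilde S(\Pi_\#Q)=S(Q)$, that is, $S=\widetilde S\circ\Pi_\#$.

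\emph{Uniqueness.} Suppose $\widetilde S_1$ and $\widetilde S_2$ both satisfy $S=\widetilde S_i\circ\Pi_\#$. Any $\nu\in\mathcal V$ can be written as $\nu=\Pi_\#Q$ for some $Q$, which gives $\widetilde S_1(\nu)=S(Q)=\widetilde S_2(\nu)$. Uniqueness therefore holds because the domain has been restricted to the image of $\Pi_\#$. This restriction is the only point that needs care: on all of $\mathcal P_{\mathrm{stat}}(A^{\mathbb Z})$ uniqueness could fail, since $\Pi_\#$ need not be surjective.

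The final sentence of the statement is the special case in which $S$ is an observable-determined rule in the sense of \cref{prop:observable-determined-fiber-constant}. By definition, such a rule satisfies the displayed implication, so the factorization just proved applies to it. There is no substantive obstacle in this argument; the only care required is the well-definedness step and restricting the domain of $\widetilde S$ to the image.
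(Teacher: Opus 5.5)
Your proof is correct and follows the paper's argument: define $\widetilde S(\nu)$ through an arbitrary preimage $Q$, check well-definedness from the hypothesis, and obtain uniqueness because the domain is the image of $\Pi_\#$. Defining $\widetilde S(\nu)$ instead as the unique element of $S(\mathcal E_\Pi(\nu))$, which avoids choice, is a harmless refinement of the paper's ``choose any $Q$'' step.
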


\begin{proof}
For each visible stationary law
\begin{equation*}
\nu\in \Pi_\#\bigl(\mathcal P_{\mathrm{stat}}(H^{\mathbb Z})\bigr),
\end{equation*}
choose any hidden stationary law $Q$ satisfying $\Pi_\#Q=\nu$, and define
\begin{equation*}
\widetilde S(\nu):=S(Q).
\end{equation*}
This is well-defined: if $Q'$ is another hidden stationary law with $\Pi_\#Q'=\nu$, then
\begin{equation*}
\Pi_\#Q=\nu=\Pi_\#Q',
\end{equation*}
so the hypothesis gives $S(Q)=S(Q')$.

Now let $Q\in \mathcal P_{\mathrm{stat}}(H^{\mathbb Z})$. By construction,
\begin{equation*}
\widetilde S(\Pi_\#Q)=S(Q),
\end{equation*}
so
\begin{equation*}
S=\widetilde S\circ \Pi_\#.
\end{equation*}
To prove uniqueness, suppose that another map $\widehat S$ satisfies
\begin{equation*}
S=\widehat S\circ \Pi_\#.
\end{equation*}
Let $\nu$ lie in the image of $\Pi_\#$, and choose $Q$ with $\Pi_\#Q=\nu$. Then
\begin{equation*}
\widetilde S(\nu)=S(Q)=\widehat S(\Pi_\#Q)=\widehat S(\nu).
\end{equation*}
Hence $\widetilde S=\widehat S$.
\end{proof}

\begin{remark}[Interpretive consequence]
The comparison above isolates the role of the selector. Hidden-identification problems ask whether the data determine a unique latent mechanism. Projection and lumpability problems ask when an aggregated visible process inherits a Markov structure from a hidden process. Externally parametrized maximum-entropy models begin by restricting attention to a prescribed functional family. The paper instead takes the observation experiment and retained visible observables as primitive, forms the corresponding visible feasible class, and then selects a canonical visible completion by entropy-rate maximization on that class.
\end{remark}

\section{The Binary First-Order Case}\label{sec:binary-example}

I specialize \cref{thm:fixed-marginal-iid,cor:gap-cmi} to the binary first-order case.

Let $A=\{0,1\}$ and $r=1$, and consider a stationary binary Markov chain with transition matrix
\begin{equation*}
P=
\begin{pmatrix}
1-a & a\\
b & 1-b
\end{pmatrix},
\qquad a,b\in[0,1].
\end{equation*}
Its stationary mean $m=\mathbb{P}(X_t=1)$ satisfies
\begin{equation*}
m=(1-m)a+m(1-b).
\end{equation*}
Equivalently,
\begin{equation*}
(1-m)a=m(1-b).
\end{equation*}

Suppose the retained observable is the stationary mean $m$.
Then the feasible visible class consists of all stationary binary first-order laws with one-point marginal
\begin{equation*}
\pi=(1-m,m).
\end{equation*}
By \cref{thm:fixed-marginal-iid}, the unique entropy-rate selector on this feasible class is the i.i.d.\ Bernoulli$(m)$ law.

For explicit parametrization, introduce the flow coordinate
\begin{equation*}
q:=(1-m)a=m(1-b).
\end{equation*}
Then
\begin{equation*}
a=\frac{q}{1-m},
\qquad
1-b=\frac{q}{m},
\qquad
0\le q\le \min\{m,1-m\}.
\end{equation*}
The entropy rate is therefore
\begin{equation*}
h(q)
=
(1-m)\,h_2\!\Bigl(\frac{q}{1-m}\Bigr)
+
m\,h_2\!\Bigl(\frac{q}{m}\Bigr),
\end{equation*}
where
\begin{equation*}
h_2(p):=-p\log p-(1-p)\log(1-p)
\end{equation*}
denotes binary entropy.
Because $h_2$ is strictly concave on $(0,1)$ and the map $q\mapsto (q/(1-m),q/m)$ is affine on the feasible interval, the function $q\mapsto h(q)$ is strictly concave on the feasible interval. Hence it has a unique maximizer.
By \cref{cor:binary-fixed-mean-iid}, that maximizer is attained at
\begin{equation*}
q^\star=m(1-m),
\end{equation*}
which yields
\begin{equation*}
a^\star=m,
\qquad
b^\star=1-m.
\end{equation*}
Equivalently, the selected transition matrix is
\begin{equation*}
P^\star=
\begin{pmatrix}
1-m & m\\
1-m & m
\end{pmatrix}.
\end{equation*}

The general gap functional from \cref{cor:gap-cmi} takes the form
\begin{equation*}
\Delta
=
h_2(m)-\bigl[(1-m)h_2(a)+m h_2(1-b)\bigr].
\end{equation*}
In the present binary first-order setting,
\begin{equation*}
\Delta=I(X_0,X_1)\ge 0.
\end{equation*}
Moreover,
\begin{equation*}
\Delta=0
\quad\Longleftrightarrow\quad
X_1\perp X_0
\quad\Longleftrightarrow\quad
a=m,\ b=1-m.
\end{equation*}
Thus the gap vanishes exactly at the entropy maximizer, and it is strictly positive for every same-mean comparator with residual serial dependence.

A convenient persistence coordinate is
\begin{equation*}
\rho:=1-a-b.
\end{equation*}
Under the fixed-mean constraint,
\begin{equation*}
a=m(1-\rho),
\qquad
b=(1-m)(1-\rho).
\end{equation*}
Hence $\rho=0$ is exactly the canonical point selected by entropy rate, while nonzero $\rho$ measures residual serial structure inside the same retained-mean feasible class.

The binary case also illustrates the dynamical interpretation of the selector. The selected visible law is structureless at the retained observable level, yet it may still admit nontrivial hidden implementations that remain observationally invisible.

\section{Conclusion}

I have developed an entropy-rate selection framework on feasible classes determined by retained observables in the finite-state finite-memory setting. The paper gives global characterization results for the entropy maximizer, a local geometric theory on fixed-support faces, a hidden realization result that leaves the visible law unchanged, and a local empirical consistency theorem. The aliased hidden-state example shows that a maximizing visible completion need not resolve hidden underidentification.

\end{document}